\documentclass[conference]{IEEEtran}

\usepackage{cite}
\usepackage{amsmath,amssymb,amsfonts}
\usepackage{graphicx}
\usepackage{textcomp}
\def\BibTeX{{\rm B\kern-.05em{\sc i\kern-.025em b}\kern-.08em
    T\kern-.1667em\lower.7ex\hbox{E}\kern-.125emX}}

\usepackage{latexsym}
\usepackage{float}
\usepackage[mathscr]{eucal}
\usepackage{ifpdf}
\usepackage{cite}
\usepackage{graphicx}
\usepackage{algorithm}
\usepackage{algpseudocode}

\usepackage{xcolor}

\usepackage{braket}

\usepackage{hyperref}

\usepackage{subcaption}

\makeatletter
\newtheorem{theorem}{Theorem}
\newtheorem{lemma}{Lemma}
\newtheorem{corollary}{Corollary}
\newtheorem{proposition}{Proposition}

\newtheorem{definition}{Definition}

\newtheorem{remark}{Remark}

\newcommand{\etal}{\textit{et al. }}

\DeclareMathAlphabet{\mathsfit}{\encodingdefault}{\sfdefault}{m}{sl}
\SetMathAlphabet{\mathsfit}{bold}{\encodingdefault}{\sfdefault}{bx}{n}

\usepackage{soul}
\soulregister\cite7
\soulregister\citet7
\soulregister\citep7

\newcommand{\Tr}{\operatorname{Tr}}
\newcommand{\op}{\operatorname{op}}

\begin{document}
\bstctlcite{IEEEexample:BSTcontrol}

\title{
Riemannian Optimization for Multi-Player Quantum Games on Product Unitary Manifolds
}

\author{
	\IEEEauthorblockN{
		\uppercase{Alireza~Habibi}\IEEEauthorrefmark{1}\IEEEauthorrefmark{2}, 
		\uppercase{Setareh~Maghsudi}\IEEEauthorrefmark{2}\IEEEauthorrefmark{3}
	}
	\IEEEauthorblockA{
		\IEEEauthorrefmark{2}Faculty of Electrical Engineering and Information Technology, Ruhr University Bochum, Bochum, Germany, 
	}
	\IEEEauthorblockA{\IEEEauthorrefmark{3} Faculty of Computer Science, Ruhr University Bochum, Bochum, German
	} 
	\IEEEauthorblockA{\IEEEauthorrefmark{1}Corresponding author: Alireza~Habibi (email: alireza.habibi@ruhr-uni-bochum.de)}
	
}

\maketitle

\begin{abstract}
	Quantum game theory is an extension  of classical game theory that uses quantum principles in game theory. 
	The Eisert–Wilkens–Lewenstein (EWL) quantum game is an early example of the two-player classical Prisoner’s Dilemma transformed into a quantum Prisoner’s Dilemma.
	In the EWL game, the players choose pure quantum strategies represented by unitary matrices.  
	This extension can resolve the classical dilemma by enabling cooperative equilibrium with higher payoff.  
	In this paper, we first discuss the Extended EWL (EEWL) for multiplayer quantum games with mixed strategies. 
	In EEWL, each player controls a set of unitary operators as quantum actions and uses a classical mixed strategy over these actions. The payoffs are defined as expectation values of Hermitian reward operators acting on a shared quantum state, which is generated and measured according to the EEWL protocol.  
	We then propose the Unitary Strategy Matrix Exponential Algorithm (USMEA), a geometry-aware sequential algorithm for the EEWL mixed-strategy setting, in which each player jointly learns a trainable set of local unitary actions and the associated classical mixing probabilities.
	Thereby it acts as a learning-and-control layer for multi-agent quantum decision systems.
	We analyze the convergence properties of USMEA under standard smoothness and step-size conditions and validate the theory with numerical experiments.
	These results show how classical optimization methods can be systematically integrated into the design and analysis of engineered quantum strategic interactions.
\end{abstract}

\section{Introduction}
Decision-making is a fundamental challenge in different fields like economics, biology, management, and artificial intelligence. 
The typical decision-making process includes multiple steps, like analyzing available information, predicting possible outcomes, and selecting the best strategy from different choices. 
Game theory studies the decision-making process in which the payoff to a player depends on its own strategies as well as those of others. ~\cite{edwards1954theory,camerer2004advances}.
When a player chooses a single action, this is called the pure strategy. 
A mixed strategy means that a player selects between multiple strategies based on a probability distribution.  In competitive or uncertain situations where uncertainty exists, mixed strategies can offer a strategic advantage~\cite{von2007theory,fudenberg1991game}.

One of the most popular games is prisoner's dilemma. 
The classical prisoner's dilemma shows how two rational players might choose not to cooperate, even when mutual cooperation would result in a better outcome for both~\cite{tucker1950two,poundstone2011prisoner,rapoport2018prisoner}.
In the EWL quantum game, introducing entanglement and allowing players to use $2 \times 2$ unitary operators with two free parameters as pure quantum strategies change the nature of the equilibrium point in the prisoner's dilemma. The resulting equilibrium becomes more efficient and fairer for the players.
Under these conditions, the Nash equilibrium aligns with the Pareto optimal outcome, which resolves the dilemma that exists in the classical version of the game~\cite{eisert1999quantum}.
However, when players are allowed to use unitary operators with three free parameters, the situation changes. In this case, a Nash equilibrium no longer exists, since each action can be countered by an opposing move that reduces the expected payoff.
When the EWL game is extended to include mixed quantum strategies, the Nash equilibrium reemerges. This new equilibrium can yield a higher expected payoff than its classical counterpart~\cite{eisert2000quantum}.
In this scenario, in the mixed-strategy quantum setting, the equilibrium or fixed-point structure may fail to be isolated
Since then, the EWL game has been adapted for multiplayer games and integrated into larger quantum strategic environments. 
Researchers have also investigated alternative forms of quantum games beyond the initial formulation~\cite{benjamin2001multiplayer,flitney2002quantum,schmid2010experimental}. 
Flitney~\etal examined the impact of decoherence in the EWL quantum game and showed that the quantum advantage is diminished as decoherence increases~\cite{flitney2004quantum}.
Xu~\etal implemented the EWL quantum game in the IBM quantum computer quantum hardware and showed that the difference between classical simulation and quantum execution was approximately $1\%$ in this game~\cite{xu2022experimental}.
The EWL quantum game has been used in many different areas like finance, quantum communication, cryptography, algorithmic trading, cognitive modeling, and economics~\cite{holtfort2024quantum,li2014entanglement,ullah2021game,hanauske2010doves,khrennikov2020quantum,khan2021quantum,ikeda2022theory}.
With the growing interest in quantum computing and quantum information science in recent years, the exploration of quantum game theory has gained significant attention~\cite{perez2024game,bostanci2022quantum,preskill2018quantum}.
While quantum hardware continues to advance and become more accessible, theoretical research in this field is still important. This kind of study helps to design new protocols, cover fundamental quantum advantages, and guide the creation of future quantum algorithms and applications in quantum game theory.

In game theory, gradient-based methods are widely used to enable players to iteratively refine their actions.
However, when unitary matrices represent actions, optimization becomes challenging.
Classical gradient descent with constraints in Euclidean space is inefficient and difficult to scale for unitary matrices~\cite{manton2002optimization, luchnikov2021riemannian}.
In these cases, the optimization in the Riemannian manifold is a suitable tool to address the challenge.
Riemannian gradient descent projects gradients onto tangent spaces and maps updates back to the manifold.
Such a method has seen success in low-rank matrix learning, signal processing, and deep learning with orthogonality constraints~\cite{abrudan2005optimization}.
The application of this idea to quantum games with mixed strategies has not been investigated.
Most existing approaches use a fixed set of unitary actions or predefine the classical probabilities for mixing strategies~\cite{flitney2002introduction, eisert2000quantum, manton2002optimization}.

We consider an extension of EWL quantum game in which each player jointly learns a set of local unitary actions and a classical mixed strategy over these actions. 
This leads to a hybrid decision space given by probability simplices and a product of unitary manifolds. 
We then study USMEA as a geometry-aware sequential method for the coupled multi-agent dynamics induced by this formulation.
In addition, this formulation is a coupled multi-agent problem that is non-convex, generally non-monotone, and not, in general, a potential game. 
Moreover, due to coupling and symmetry, the equilibrium and fixed-point structure can be nontrivial, and the fixed points may not be isolated, which makes even local convergence analysis meaningful in this setting.

\textbf{Our contributions:}  
Our main contributions in this paper are as follows.
\begin{itemize}
    \item We formulate the EEWL framework for multiplayer quantum games with mixed strategies, where each player jointly learns a trainable set of local unitary actions and a classical mixing distribution over these actions.
    \item We propose USMEA as a geometry-aware sequential learning scheme for jointly updating the unitary actions and the mixing probabilities.
    \item We show how standard mixed-strategy equilibrium-existence arguments apply to the EEWL model, and we derive the corresponding player-wise gradients, smoothness bounds, and step-size conditions for the update dynamics.
    \item In the experimental results, we study the convergence behavior of USMEA.
\end{itemize}
In this work, all learning and optimization steps of USMEA are carried out on a classical computer.
From an engineering perspective,  the proposed framework can be viewed  as a learning-and-control layer sitting on top of a quantum system where multiple agents act locally on a shared entangled state~\cite{song2025fast,le2022dqra,abane2025entanglement}. 
The EEWL game shows  how local
unitary actions on a shared state can result to payoffs, while USMEA gives a systematic approach to adjust these actions and their mixing probabilities using Riemannian optimization.
This links quantum game theory to the development of multi-agent controllers and learning-based protocols in emerging quantum technologies.

In Section~\ref{sec:notations}, we introduce the notation and basic concepts
from quantum information, quantum game theory, and Riemannian manifold that we use in the rest of the paper.
Section~\ref{sec:model} presents the EEWL model for multi-player quantum games with a general strategy space.
Section~\ref{sec:usmea} introduces the USMEA update rules.
In Section~\ref{sec:theoretical}, we study the convergence and performance of the proposed algorithm. 
Section~\ref{sec:expriments} provides experimental results for the algorithm’s performance in quantum games.
In Section~\ref{sec:conclusion}, we summarize the main findings of this work, discuss their implications, and outline potential directions for future works.

In addition, for readers with a game-theoretic background and limited familiarity with quantum basics, we provided a concise overview of the quantum mechanics used in this paper in Supplementary Material, Section~\ref{sec:quantumbasics}.
The Supplementary Material also contains additional theorems and all proofs
in Section~\ref{append:proofs}, as well as detailed descriptions of the
experimental setups.

\section{Notations}
\label{sec:notations}
This section introduces the basic notations of quantum game and the Riemannian manifold structure for unitary operators.

\subsection{Quantum game notations}
\label{sec:Gquantumbasics}
Let $\mathcal{H} \cong \mathbb{C}^d$ be a finite-dimensional complex Hilbert space, where $d$ is the dimension of the Hilbert space~\cite{nielsen2010quantum}. 
For an $N$-player quantum game, each player $i$ controls a local Hilbert space $\mathcal{H}^{(i)}$ of dimension $d_i$.  
The global Hilbert space of this composite system is given by the tensor product of the local Hilbert spaces,
\begin{align}
\mathcal{H} = \bigotimes_{i=1}^N \mathcal{H}^{(i)}, \qquad d=\dim\mathcal{H} = \prod_{i=1}^N d_i,
\end{align}
where  $d_i=\text{dim}(\mathcal{H}^{(i)})$ is the dimension of local Hilbert space $i$.
Quantum operators are denoted by bold capital letters. In particular, $\mathbf{U}^{(i)} \in \mathcal{U}(d_i)$ denotes the unitary operator used by player $i$ as its pure quantum strategy, and its dimension is  $d_i\times d_i$. 
To reflect the tensor-product structure, we can reindex the matrix entries $O_{m,n}$ of operator $\mathbf{O}$ using multi-indices $m\mapsto (j_1,\dots,j_N)$ and $n\mapsto (j'_1,\dots,j'_N)$ where 
$j_k,j'_k=1,\dots,d_k$. In this case the operator becomes a $2N$-index tensor with entries $O_{j_1,\dots,j_N,j'_1,\dots,j'_N}$.

In this paper, we employ Positive Operator-Valued Measurements (POVMs).
Let us consider a set of possible measurement outcomes $\Omega$. For each outcome $\omega \in \Omega$, we associate a positive semi-definite operator $\mathbf{P}_\omega : \mathcal{H} \to \mathcal{H}$ satisfying the completeness relation $ \sum_{\omega \in \Omega} \mathbf{P}_\omega = \mathbf{I}$.
The probability of obtaining a specific measurement outcome $\omega$ when the system is in the quantum state described by the density matrix $\rho$ is given by $p_\omega = \Tr\left( \mathbf{P}_\omega \rho \right)$. In addition, the completeness relation ensures that the total probability across all outcomes sums to one ( $\sum_{\omega \in \Omega} p_\omega = 1$). 
In a quantum game, a set of outcomes is used to compute and allocate payoffs to players. 

Small capital letters are used for vectors. 
We use the shorthand $(\mathbf{j}) \equiv (j_1, j_2, \cdots, j_N)$, where $j_i$ denotes the index corresponding to the player $i$. The notation $(j'_i; \mathbf{j}_{-i}) = (j_1, \cdots, j'_i, \cdots, j_N)$ shows that the index $j_i$ for player $i$ has been replaced by $j'_i$.
The term $(\mathbf{j}_{-i})$ refers to the tuple $(j_1, j_2, \cdots, j_{i-1}, j_{i+1}, \cdots, j_N)$, which omits the entry for player $i$.
In addition, we use $d_{-i}$ as the product of all subsystem dimensions except for $i$, that is $ d_{-i} = \prod_{k \neq i} d_k$.
The softmax function of a vector $\mathbf{a}=(a_1,\dots,a_m)$ at temperature $T$ is defined as
\begin{align}
    \sigma_T(\mathbf{a})_j = \frac{e^{a_j/T}}{\sum_{k=1}^m e^{a_k/T}},
\end{align}
with $
0 \le \sigma_T(a)_j \le 1$ for all $j$.
The $\sigma_T(\mathbf{a})$ is a probability vector with 
$\sum_{j=1}^m \sigma_T(a)_j = 1$.

We use the superscript $(\text{s})$ to indicate a strategy profile of local strategies applied independently by each player in product space. 
\begin{subequations}
\begin{align}
\label{eq:strategy_profile}
    \mathbf{U}^{(s)} &= \mathbf{U}^{(1)} \otimes \mathbf{U}^{(2)} \otimes \cdots \otimes \mathbf{U}^{(N)},
\\
    \mathbf{U}^{(\text{s},i)}  &= \mathbf{I}^{(1)} \otimes  \cdots  \otimes  \mathbf{U}^{(i)}  \otimes \cdots \otimes \mathbf{I}^{(N)} \nonumber \\
    &\equiv \mathbf{U}^{(i)} \otimes \mathbf{I}^{(-i)},
\\
    \mathbf{U}_{}^{(\text{s},-i)}  &= \mathbf{U}^{(1)} \otimes  \cdots \otimes \mathbf{I}^{(i)} \otimes \cdots \otimes \mathbf{U}^{(N)},
\\
 (\mathbf{U}^{'(i)} ;\mathbf{U}_{}^{(\text{s},-i)} ) &= (\mathbf{U}^{(1)} \otimes  \cdots \otimes \mathbf{U}^{'(i)} \otimes \cdots \otimes \mathbf{U}^{(N)})\nonumber \\
    &\equiv \mathbf{U}^{'(i)}  \otimes \mathbf{U}_{}^{(\text{s},-i)} ,
\end{align}
\end{subequations}
where $\text{dim}(\mathbf{I}^{(i)})=\text{dim} (\mathbf{U}^{(i)})$.
For the complex conjugate of a number $a$, we use the notation $\bar{a}$.
The trace operation is defined as 
\begin{subequations}
\begin{align}
    \Tr(\mathbf{A}) &= \sum_{\mathbf{j}}   A_{\mathbf{j},\mathbf{j}}, \\
    \Tr_{-i}(\mathbf{A})\Big|_{j_i,j'_i} &=
     \sum_{\mathbf{j}_{-i}}
    A_{(j_i;\mathbf{j}_{-i}) , (j'_i;\mathbf{j}_{-i})} 
    ,
\end{align}
\end{subequations}
where  $\Tr_{-i}$ denotes the partial trace over all indices except for subsystem $i$.

For any Hermitian operator $\mathbf{A}\in\mathbb{C}^{d\times d}$ with real eigenvalues $\{\lambda_i\}_{i=1}^d$, 
the operator (spectral) norm and the Frobenius (Hilbert–Schmidt) norm are defined as follows (see, e.g.,~\cite{watrous2018theory,horn2012matrix,bhatia2013matrix}):
\begin{align}
\|\mathbf{A}\|_{\mathrm{op}} &= \sup_{\|\mathbf{v}\|_2=1}\|\mathbf{A}\mathbf{v}\|_2= \max_j |\lambda_j|, \\
\|\mathbf{A}\|_F &= \sqrt{\operatorname{Tr}(\mathbf{A}^\dagger \mathbf{A})}=\Big(\sum_{i=1}^d \lambda_j^2\Big)^{1/2},
\end{align}
where $\mathbf{v} \in \mathbb{C}^d$ is a unit vector.
These norms satisfy the standard inequalities
\begin{align}
\|\mathbf{A}\|_{\mathrm{op}} \le \|\mathbf{A}\|_F \le \sqrt{d} \|\mathbf{A}\|_{\mathrm{op}},\quad
\|\mathbf{AB}\|_F \le \|\mathbf{A}\|_{\mathrm{op}} \|\mathbf{B}\|_F.
\end{align}

\subsection{Riemannian Manifold}
A Riemannian manifold $(\mathcal M,g)$ is a finite-dimensional smooth manifold $\mathcal M$ of dimension $n$ (with tangent bundle ($T\mathcal M$) equipped with a Riemannian metric $g$. A Riemannian metric $g$ on $\mathcal M$ is a smoothly varying field of inner products
$x \mapsto g_x(\cdot,\cdot)$ on the tangent spaces $T_x\mathcal M$ for each $x \in \mathcal M$. 
In this paper, we consider $\mathcal M_{\mathcal U}$ to be the unitary group $\mathcal U(n)$ and equip it with the standard bi-invariant Riemannian metric derived from the Frobenius inner product. 
At a point $\mathbf{U} \in \mathcal{U}(d_i)$ , the tangent space is
$
T_U \mathcal{U}(d_i) = \{\mathbf{X}\mathbf{U} | \mathbf{X}^\dagger = -\mathbf{X}\},
$
where $\mathbf{X}$ is skew-Hermitian. 
For any skew-Hermitian $\mathbf{X}$, the curve $\gamma(t) = \exp(t\mathbf{X})\mathbf{U}$ (with $t\in\mathbb{R}$) traces a path along the manifold. In practice, all updates to unitary actions are performed using exponential retractions of the form $\exp(\eta \mathbf{X})\mathbf{U}$.
The authors recommend references \cite{absil2008optimization,wiersema2023optimizing,luchnikov2021riemannian} for a more comprehensive introduction.

\section{Model}
\label{sec:model}
In this section, we first provide the definition of the EEWL quantum game. 
Next, we define the best response (BR) and the Nash equilibrium.
Finally, we discuss the existence of Nash equilibrium points within this quantum game.

\begin{definition}
\label{def:eewl-game}
An EEWL quantum game is a tuple $\mathcal{Q} = \langle \mathcal{N}, \mathcal{H}, \mathcal{S} , r\rangle$ defined as,
\begin{itemize}
    \item $\mathcal{N} = \{1, 2, \ldots, N\}$: A finite set of $N$ players.
    \item $\mathcal{H} = \mathcal{H}^{(1)} \otimes \mathcal{H}^{(2)}\otimes\cdots \otimes \mathcal{H}^{(N)}$: The Hilbert space of the game. Each player $i \in \mathcal{N}$ access a local Hilbert space $\mathcal{H}^{(i)} \cong \mathbb{C}^{d_i}$.
    \item 
    $\mathcal{S}^{(i)}=\left\{ \mathbf{U}^{(i)}_{j_i}\right\}_{j_i=1}^{m_i}$: 
    The set of actions (pure strategies) available to player~$i$.
    Each action $\mathbf{U}^{(i)}_{j_i} \in \mathcal{U}(d_i)$ acts on $\mathcal{H}^{(i)}$.
    The $m_i \ge 1$ is the total number of actions that player~$i$ can choose.
    \item $\mathcal{S} = \mathcal{S}^{(1)} \otimes\mathcal{S}^{(2)} \otimes \cdots \otimes \mathcal{S}^{(N)}$: The joint strategy space. A pure joint quantum strategy profile is the tensor product, 
    \begin{align}
        \mathbf{U}^{(s)}_{\mathbf{j}} = \mathbf{U}^{(1)}_{j_1} \otimes \mathbf{U}^{(2)}_{j_2} \otimes \cdots \otimes \mathbf{U}^{(N)}_{j_N}.
    \end{align}
    \item The reward or payoff function $r$: Each player $i \in \mathcal{N}$ receives a payoff determined by an function $r^{(i)} : \rho_\mathbf{j} \to \mathbb{R}$. 
    The density matrix $\rho_\mathbf{j} = \mathbf{U}^{(s)}_{\mathbf{j}} \rho_0 \mathbf{U}^{(s)\dag}_{\mathbf{j}}$ is the final state of the system when all players use the joint strategy profile $\mathbf{U}^{(\text{s})}_{\mathbf{j}}$, and $\rho_0$ denotes the initial state of the quantum system.
    To calculate the payoff, one can use Hermitian payoff operator $\mathbf{R}_i : \mathcal{H} \to \mathcal{H}$, such that
    \begin{align}
        \label{eq:rewart-puremixedpayoff}
        r^{(i)}_{\mathbf{j}} &= \Tr \left(\mathbf{R}_i \rho_{\mathbf{j}}\right).
    \end{align}
\end{itemize}
\end{definition}

\textbf{Mixed Strategy Setup:}
When players adopt mixed strategies, they select among actions according to a classical probability distribution. This probabilistic choice is described by the vector
\begin{align}
\mathbf{p}^{(i)} = \left( p^{(i)}_1, p^{(i)}_2, \ldots, p^{(i)}_{m_i} \right)\in \Delta^{m_i-1},
\end{align}
where $\Delta^{m_i-1}$ is the $(m_i-1)$-dimensional probability simplex.
The set of (mixed quantum) strategies of player $i$
is
\begin{align}
   z^{(i)} =  \left( \{\mathbf{U}^{(i)}_{j_i}\},\mathbf{p}^{(i)}\right) = \left\{( \mathbf{U}_{j_i}^{(i)},p_{j_1}^{(i)}) | \sum_{j_i=1}^{m_i}p_{j_1}^{(i)}=1\right\}.
\end{align}
The joint mixed (quantum) strategies for all players is given by 
$z=(z^{(1)},\cdots,z^{(N)})$, 
which abbreviates the collection of all action sets and probability distributions.
We use 
 $z^{(-i)}=(\mathbf{U}^{(-i)},\mathbf{p}^{(-i)})$  as the collection of action sets and probability distributions of all players except player $i$.
The expected payoff for player $ i $ is
\begin{align}
    \label{eq:payoff-mixed}
    \bar{r}^{(i)}  ( z)= \sum_{\mathbf{j}} p_{\mathbf{j}}   r^{(i)}\left( \mathbf{U}^{(s)}_\mathbf{j} \right),
\end{align}
where $p_{\mathbf{j}} = \Pi_i p^{(i)}_{j_i} $ is the joint classical probability of the strategy profile.
We rewrite the expected payoff of player $i$ as
$
r^{(i)}  =  \sum_{j_i} p^{(i)}_{j_i} \ell^{(i)}_{j_i},
$
where $\ell^{(i)}_{j_i}$
is the per-action payoff and defined as the expected payoff obtained by player~$i$ when using the pure strategy $\mathbf{U}^{(i)}_{j_i}$ while the other players follow their current mixed strategies,
\begin{align}
    \ell^{(i)}_{j_i} 
    &= \sum_{\mathbf{j}_{-i}}  p_{\mathbf{j}_{-i}} r^{(i)}\left( \mathbf{U}^{(s)}_{j_i;\mathbf{j}_{-i}} \right).
\end{align}
The vector of per-action payoffs is denoted by $\boldsymbol{\ell}^{(i)} = (\ell^{(i)}_{1}, \cdots, \ell^{(i)}_{m_i})$.

\begin{remark}
\label{remark:phaseandbounded}
The expected payoff satisfies the 
following properties:
\begin{enumerate}
\item 
\textbf{Phase invariance:} The expected payoff is invariant under global phase shifts. 
\item 
 \textbf{Boundedness:} The payoff is bounded both from below and above by the extremal eigenvalues of the payoff operator $\mathbf{R}_i$,
\begin{align}
\lambda_{\min}(\mathbf{R}_i)  \le  r^{(i)}  \le  \lambda_{\max}(\mathbf{R}_i),
\big|r^{(i)}\big|  \le  \|\mathbf{R}_i\|_{\mathrm{op}},
\end{align}
where $\lambda_{\min}(\mathbf{R}_i)$ and $\lambda_{\max}(\mathbf{R}_i)$ are the smallest and largest eigenvalues of $\mathbf{R}_i$~\cite{watrous2018theory}.
These bounds also hold for the expected payoff under a mixed strategy.
\end{enumerate}
\end{remark}

The payoff operators and density matrices are defined on compact spaces. The payoff functions are formulated over a compact strategy space consisting of unitary matrices. This strategy space is composed of differentiable operators, such as unitary transformations and inner products, which ensure finite outputs, smooth gradients, and controlled variations.
Thanks to these well-defined mathematical conditions, we can systematically study equilibria and strategic behavior in quantum games  more easily.

\textbf{Best Response (BR):}
When other players’ strategies $z^{(-i)}$, player $i$’s best response set is
\begin{align}
\label{eq:br-def}
\text{BR}^{(i)}( z^{(-i)})
=\mathop {\arg \max }\limits_{z^{(i)}}   
\bar{r}^{(i)}
\left( z^{(i)};z^{(-i)}
\right).
\end{align}
Any $z^{(i)*}\in \text{BR}^{(i)}(z^{(-i)})$ is a best response to $z^{(-i)}$ which maximizes player $i$’s expected payoff while the other players use $z^{(-i)}$.

\textbf{Nash Equilibrium:}
A joint mixed strategy  $z^*$ is a Nash equilibrium if for every player $i$,
\begin{align}
z^{(i)*} 
=\mathop {\arg \max }\limits_{z^{(i)}}   
\bar{r}^{(i)}
\left( z^{(i)};z^{(-i)*}
\right).
\end{align}
which means no player can improve their expected payoff by unilaterally changing  $ z^{(i)*}$.

\begin{theorem}
\label{thm:EEWL-Glicksberg}
In the mixed strategy EEWL game defined in Definition~\ref{def:eewl-game}, 
there exists at least one mixed strategy Nash equilibrium.
\end{theorem}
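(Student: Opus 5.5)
The plan is to invoke Glicksberg's theorem: every game with compact metric (more generally, compact Hausdorff) strategy spaces and continuous payoffs admits a Nash equilibrium in mixed strategies, meaning Borel probability measures over the pure strategy sets. The label of Theorem~\ref{thm:EEWL-Glicksberg} already points to this route. So the work is to identify a strategy space for each player that is compact and on which $\bar r^{(i)}$ is continuous.

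\textbf{Step 1 (strategy spaces).} For player $i$ the strategy is $z^{(i)}=(\{\mathbf{U}^{(i)}_{j_i}\}_{j_i=1}^{m_i},\mathbf{p}^{(i)})$. It lies in
\[
Z^{(i)}=\mathcal{U}(d_i)^{m_i}\times\Delta^{m_i-1}.
\]
The group $\mathcal{U}(d_i)$ is closed and bounded in $\mathbb{C}^{d_i\times d_i}$, being the preimage of $\mathbf{I}$ under $\mathbf{U}\mapsto\mathbf{U}^\dagger\mathbf{U}$ with $\|\mathbf{U}\|_F=\sqrt{d_i}$. It is therefore compact, and so is the simplex. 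Finite products of compact metric spaces are compact metric, so each $Z^{(i)}$ is a nonempty compact metric space.

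\textbf{Step 2 (continuity).} The payoff is
\[
\bar r^{(i)}(z)=\sum_{\mathbf{j}}\Big(\prod_k p^{(k)}_{j_k}\Big)\Tr\big(\mathbf{R}_i\,\mathbf{U}^{(s)}_{\mathbf{j}}\rho_0\mathbf{U}^{(s)\dagger}_{\mathbf{j}}\big).
\]
This is a polynomial in the real and imaginary parts of the matrix entries and in the probabilities. It is therefore jointly continuous (indeed smooth) on $Z=\prod_i Z^{(i)}$. It is also bounded by $\|\mathbf{R}_i\|_{\op}$, by Remark~\ref{remark:phaseandbounded}.

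\textbf{Step 3 (apply Glicksberg).} Glicksberg's theorem then yields Borel probability measures $\mu^{(i)}$ on $Z^{(i)}$ forming an equilibrium of the game with payoffs $\int\bar r^{(i)}\,d\mu$. This is the main obstacle: the equilibrium in the theorem should be a profile $z^*$ in the EEWL class itself, not a measure over such profiles, and I must reconcile the two notions.

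\textbf{Step 4 (reducing to an EEWL profile).} I would first try a direct Kakutani--Fan--Glicksberg fixed-point argument on $Z$. The issue is that $\bar r^{(i)}$ is linear in $\mathbf{p}^{(i)}$ but not quasi-concave in the unitaries, so best-response sets need not be convex. I therefore expect instead to accept the measure-valued equilibrium as the EEWL mixed equilibrium. There are two ways to justify this. One is to view the EEWL mixed strategy as a probability distribution over $\mathcal{U}(d_i)$, namely $\sum_{j_i}p^{(i)}_{j_i}\delta_{\mathbf{U}^{(i)}_{j_i}}$. The other is to apply Glicksberg with pure strategy space $\mathcal{U}(d_i)$ and payoff $\Tr(\mathbf{R}_i\mathbf{U}^{(s)}\rho_0\mathbf{U}^{(s)\dagger})$. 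A Carathéodory-type argument should then reduce the support. The payoff depends on $\mu^{(i)}$ only through the operator moment $\int \mathbf{U}\otimes\bar{\mathbf{U}}\,d\mu^{(i)}$, which lives in a real vector space of dimension at most $d_i^4$. Hence each $\mu^{(i)}$ can be replaced by a measure with at most $d_i^4+1$ atoms that produces identical payoffs against every opponent profile. This gives an equilibrium of EEWL form with $m_i\le d_i^4+1$ actions, and I would state that finite $m_i$ is understood to be this large.
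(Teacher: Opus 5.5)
Your argument is correct. On the one point where it departs from the paper, it is more careful than the paper.

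\textbf{Comparison with the paper.} The paper's proof only asserts that player $i$'s set of mixed strategies is convex and compact and that $\bar r^{(i)}$ is linear in the player's own mixture, and then cites Glicksberg. Those hypotheses hold for Borel probability measures on $\mathcal U(d_i)$. They fail for the EEWL parametrization $\mathcal U(d_i)^{m_i}\times\Delta^{m_i-1}$ with a fixed $m_i$: that set is not convex, and $\bar r^{(i)}$ is not quasi-concave in the unitaries. This is exactly the obstruction you flag in Step 4. Read literally, the paper's argument therefore yields only a measure-valued equilibrium, and your Carath\'eodory step is the missing bridge to an actual profile $z^*$.

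That step is sound:
\begin{itemize}
\item Every player's payoff, including under unilateral deviations, is multilinear in the averaged channels $\Phi_k=\int \mathbf U(\cdot)\mathbf U^\dagger\,d\mu^{(k)}$.
\item The barycenter of $\mu^{(i)}$ lies in the convex hull of the compact image of $\mathcal U(d_i)$, so Carath\'eodory applies and all equilibrium inequalities transfer.
\item Padding with zero-probability copies then gives exactly $m_i$ actions.
\end{itemize}
Your route buys a genuine EEWL equilibrium, at the price of a lower bound on $m_i$. The constant $d_i^4+1$ is not sharp; it drops to $(d_i^2-1)^2+1$ because these channels are unital and trace-preserving. The bound itself cannot be dropped.

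\textbf{Why the restriction on $m_i$ is necessary.} The statement as written, for arbitrary $m_i$ (e.g.\ $m_i=2$), is false. Take $N=2$, $d_1=d_2=2$, $\rho_0=\ket{\Phi^+}\bra{\Phi^+}$ with $\ket{\Phi^+}=(\ket{00}+\ket{11})/\sqrt2$, and $\mathbf R_2=\rho_0=-\mathbf R_1$.

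Use $\bra{\Phi^+}\mathbf U\otimes\mathbf V\ket{\Phi^+}=\tfrac12\Tr(\mathbf U\mathbf V^{T})$, and write the $SU(2)$ part of each action as $a_0\mathbf I+i\,\vec a\cdot\vec\sigma$ with $a\in S^3$. This gives $\bar r^{(2)}=-\bar r^{(1)}=\Tr(\mathbf M_1\mathbf M_2)$, where $\mathbf M_i=\sum_j p^{(i)}_j a^{(i)}_j a^{(i)T}_j$ is a real $4\times 4$ density matrix of rank at most $m_i$. For player 2 the vectors carry fixed sign flips, and every unit vector is realized by some action.

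Against any $\mathbf M_2$, player 1 can hold player 2 to $\lambda_{\min}(\mathbf M_2)\le 1/4$ with a single unitary. Against any $\mathbf M_1$, player 2 can obtain $\lambda_{\max}(\mathbf M_1)\ge 1/m_1$. So for $m_1\le 3$ the lower value of $\bar r^{(2)}$ is at most $1/4$, while the upper value is at least $1/3$. A zero-sum game with a Nash equilibrium has equal lower and upper values, so this EEWL game has no equilibrium for $m_1\in\{2,3\}$. For $m_1,m_2\ge 4$, the uniform Pauli mixtures ($\mathbf M_1=\mathbf M_2=\mathbf I/4$) do form one.

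The paper's remark suggests that only $m_i=1$ is problematic, but $m_i=2$ and $m_i=3$ can fail as well. Your version, with $m_i$ taken sufficiently large, is the correct theorem.
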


\begin{remark}
If each player is restricted to a finite set of fixed set of 
non-equivalent 
actions,  
the game reduces to a finite game, and the existence of a mixed strategy Nash equilibrium 
follows directly from Nash’s classical theorem~\cite{nash1950equilibrium}.
\end{remark}

\begin{remark}
\label{remark:nonashpure}
If the set of actions contains only a single action for  each player ($m_i = 1$), the player can still adjust their unitary strategy to search for an optimal solution.  
However, in this case, Theorem~\ref{thm:EEWL-Glicksberg} no longer applies, and the existence of a Nash equilibrium is no longer guaranteed.
\end{remark}

\section{USMEA Algorithm}
\label{sec:usmea}
In this section, we first define the decision manifold and the loss function.  
Next, we describe our approach to optimizing player strategies.  
Finally, we introduce the USMEA algorithm, which minimizes these loss functions.

The strategies live on the product manifold as the decision manifold in the EEWL quantum game,
\begin{align}
\mathcal{M}
 = 
\underbrace{\prod_{i=1}^N \prod_{j_i=1}^{m_i} \mathcal{U}(d_i)}_{\mathcal{M}_\mathcal{U}}
 \times 
\underbrace{\prod_{i=1}^N \Delta^{m_i-1}}_{\mathcal{P}},
\end{align}
with two geometries:
(i) a Riemannian geometry on the unitary manifold $\mathcal{M}_\mathcal{U}$,
(ii) a simplex geometry on the probability simplex $\mathcal{P}$.
Since both  $\mathcal{M}_{\mathcal{U}}$ and $\mathcal{P}$ are compact, their product space $\mathcal{M}$ is also compact.
Thus $\mathcal{M}$ is compact.

The payoff function in EEWL quantum games is non-convex, and multiple fixed points may exist.  
Consequently, standard global guarantees from convex optimization do not apply.  
The obtained solutions can depend on both the initialization and the chosen regularization parameters.

For optimization, we use a loss function with an entropy regularization term to ensure a proper balance between exploration and exploitation during learning.
The learning objective is to minimize the entropy-regularized loss,
\begin{align}
     L ^{(i)}(z)  =  - \bar r^{(i)} (z) -  T  H(\mathbf{p}^{(i)}), 
\end{align}
where $H(\mathbf{p}^{(i)}) = -\sum_{j_i} p^{(i)}_{j_i}\log p^{(i)}_{j_i}$ is the Shannon entropy,  
and $T>0$ is the temperature parameter.  
From a game-theoretic perspective, $T$ acts as a bounded-rationality or decision-noise parameter. It controls the trade-off between exploration and exploitation in the mixed-strategy update. When $T$ is large, the player's response becomes smoother and more exploratory. In this regime, the player is less sensitive to payoff differences, so its strategy appears more random. When $T$ is small, the update becomes more sharply concentrated on higher-payoff actions and is closer to a best response.

In USMEA, each player first updates their set of actions and then updates their mixed strategy using a softmax function applied to the current payoffs. 
For temperature $T > 0$ and step size $\eta_i > 0$, the USMEA block map $\mathcal{T}_i=\mathcal{T}_i^p\circ \mathcal{T}_i^U$ updates  $z^{(i)}$ as     
\begin{subequations}
\label{eq:USMEA-update-rules-GS}
\begin{align}
\label{eq:usmeaA-GS}
&\mathcal{T}_i^U:\quad
\mathbf{U}^{(i)[t+1]}_{j_i}
= 
\exp\big(\eta_i \mathbf{G}^{(i)[t]}_{j_i}\big) \mathbf{U}^{(i)[t]}_{j_i}, \quad \forall j_i, \\
\label{eq:usmeaB-GS}
&\mathcal{T}_i^p:\quad
\mathbf{p}^{(i)[t+1]}
= 
\sigma_T(\boldsymbol{\ell}^{(i)[t+1]}),
\end{align}
\end{subequations}
where $\mathbf{G}^{(i)}_{j_i}$ is the Riemannian gradient of the expected payoff for player $i$ over action $j_i$ is
\begin{align}
\label{eq:mixed strategy-grad-GS}
\mathbf{G}^{(i)}_{j_i}
&= 
\sum_{\mathbf{j}_{-i}}
p^{}_{(j_i;\mathbf{j}_{-i})} 
\Tr_{-i} \left(
\big[\mathbf{R}_i, 
\mathbf{U}^{(s)}_{j_i;\mathbf{j}_{-i}} 
\rho_0 
\mathbf{U}^{(s)\dag}_{j_i;\mathbf{j}_{-i}}
\big]
\right).
\end{align}
In Eq.~\eqref{eq:usmeaA-GS}, the Riemannian gradient is skew-Hermitian,  
$ \mathbf{G}^{(i)[t]}_{j_i} = -\mathbf{G}^{(i)[t]\dagger}_{j_i} $.  
Therefore, $\exp\big(\eta_i \mathbf{G}^{(i)[t]}_{j_i}\big)$ is unitary.
Since the product of two unitary matrices is also unitary, the updated action remains in $\mathcal{U}(d_i)$.
The Eq.~\eqref{eq:usmeaB-GS} is the softmax update with  $p^{(i)[t+1]}_{j_i} \ge 0$ and $\sum_{j_i} p^{(i)[t+1]}_{j_i} = 1$. Therefore, the action-update map $\mathcal{T}_i^U$ preserves the unitary manifold, and the probability-update map $\mathcal{T}_i^p$ preserves the simplex. 
Thus, $\mathcal{T}_i : \mathcal{M} \to \mathcal{M}$ is a well-defined self-map.
As a result, the algorithm requires no additional constraints to preserve the unitarity of actions or the normalization of probabilities.

The algorithm can be used to compute both BRs and fixed-point equilibrium.  
A BR iteration updates $z^{(i)}$ and fixed $z^{(-i)}$.
To compute the fixed-point equilibrium, the algorithm follows a Gauss–Seidel update scheme, where players are updated sequentially within each iteration. 
It is shown that the 
sequential update is often more locally stable than simultaneous updates of all players at once~\cite{tseng2001convergence,grippo2000convergence}.
In the sequential updates, at iteration $t$ and intermediate step $i$ we have
$z^{[t,i]} = \mathcal{T}_i ( z^{[t,i-1]})
$,
with $z^{[t+1,0]} = z^{[t,N]}$, where $z^{[t+1,0]}$ represents the beginning of iteration $t+1$.  
Equivalently, the composition
$\mathcal{T} = \mathcal{T}_N \circ \cdots \circ \mathcal{T}_1$
is called the iteration map or the one-sweep map. One iteration satisfies
$z^{[t+1,0]} = \mathcal{T}\bigl(z^{[t,0]}\bigr)$.

In a standard Riemannian gradient descent combined with softmax (RGD+Softmax), all variables are updated at once. By contrast, USMEA uses a player-wise sequential block-update scheme, in which the players are updated in Gauss–Seidel order. This distinction matters in the coupled multi-agent EEWL game, because each player tries to minimize its own loss, while updating one player’s block can also change the losses of the others through the coupled dynamics.

We present the procedure for finding 
the best response $z^{(i)*}$ in Algorithm~\ref{alg:Hybrid-USMEU-BR} and the fixed point $z^{*}$ in Algorithm~\ref{alg:Hybrid-USMEU}.
\begin{algorithm}[b]
\caption{\label{alg:Hybrid-USMEU-BR}
USMEA  algorithm to find BR}
\begin{algorithmic}[1]
\Require $\eta_i$: learning parameter, $\epsilon$: convergence tolerance, $i$: player index, 
$T$: initial temperature for classical probabilities,
$\mathbf{U}^{-i}_{j_{-i}}$: Set of all players' action except player $i$
\State Initialize a random set of actions $\{\mathbf{U}^{(i)[0]}_1,\mathbf{U}^{(i)[0]}_2, \cdots, \mathbf{U}^{(i)[0]}_{m_i}\}$  and corresponding probabilities  $\{p^{(i)[0]}_{1},p^{(i)[0]}_{2}, ..., p^{(i)[0]}_{m_i}\}$ for player $i$
\Repeat{($k=0,1,\cdots$: each step)}
\State Calculate $\{\mathbf{G}^{(i)[k]}_{j_i}\}_{j_i=1}^{m_i}$
\For {every  $j_i \in \{1,2,\cdots, m_i\}$}
\State $\mathbf{U}^{(i)[k+1]}_{j_i} \gets 
e^{\eta_i \mathbf{G}^{(i)[k]}_{j_i}} \mathbf{U}^{(i)[k]}_{j_i}$
\EndFor
\State Calculate $\boldsymbol{\ell}^{(i)[k+1]}$
\State $\mathbf{p}^{(i)[k+1]} \gets \sigma_T(\boldsymbol{\ell}^{(i)[k+1]})$
\State  Update $T$
\Until{$\left\|\mathbf{U}^{(i)[k+1]}_{j_i}-\mathbf{U}^{(i)[k]}_{j_i}\right\|\le \epsilon$ 
and 
$\left\|p^{(i)[k+1]}_{j_i}-p^{(i)[k]}_{j_i}\right\|\le \epsilon$ $j_i$}
\end{algorithmic}
\end{algorithm}
\begin{algorithm}[h]
\caption{\label{alg:Hybrid-USMEU}
USMEU algorithm to find fixed points}
\begin{algorithmic}[1]
\Require $\{\eta_i\}$: learning parameter for USMEU, $\epsilon$: convergence tolerance,
$T$: initial temperature for classical probabilities,
$\epsilon$: convergence tolerance
\State Initialize a random set of quantum actions $\{\mathbf{U}^{(i)[0]}_1,\mathbf{U}^{(i)[0]}_2, \cdots, \mathbf{U}^{(i)[0]}_{m_i}\}$  and corresponding probabilities  $\{p^{(i)[0]}_{1},p^{(i)[0]}_{2}, ..., p^{(i)[0]}_{m_i}\}$ for each player $i \in \mathcal{N}$ 
\Repeat{($k=0,1,\cdots$: each step)}
\For{every player $i \in \mathcal{N}$}
\State Calculate $\{\mathbf{G}^{(i)[k]}_{j_i}\}_{j_i=1}^{m_i}$ according to the GS update rule.
\For {every  $j_i \in \{1,2,\cdots, m_i\}$}
\State $\mathbf{U}^{(i)[k+1]}_{j_i} \gets e^{\eta_i \mathbf{G}^{(i)[k]}_{j_i}} \mathbf{U}^{(i)[k]}_{j_i} $
\EndFor
\State Calculate $\boldsymbol{\ell}^{(i)[k+1]}$
\State $\mathbf{p}^{(i)[k+1]} \gets \sigma_T(\boldsymbol{\ell}^{(i)[k+1]})$
\EndFor
\State Update $T$
\Until{$\left\|\mathbf{U}^{(i)[k+1]}_{j_i}-\mathbf{U}^{(i)[k]}_{j_i}\right\|\le \epsilon$ 
and 
$\left\|p^{(i)[k+1]}_{j_i}-p^{(i)[k]}_{j_i}\right\|\le \epsilon$
for all $i,j_i$}
\end{algorithmic}
\end{algorithm}

\section{Theoretical Analysis}
\label{sec:theoretical}
In this section, we present the theoretical analysis of the proposed USMEA algorithm.  
First, we derive the Riemannian gradient from the payoff function.  
We then compute the Lipschitz coefficients and identify a safe interval for the learning rate.
Finally, we prove the results for the BR updates and extend them to the full sequential setting to obtain fixed-point equilibria of the joint dynamics.

Each unitary action $j_i$ is updated on the Riemannian manifold using the exponential retraction defined in Eq.~\eqref{eq:usmeaA-GS}.  
The term 
$\mathbf{G}^{(i)}_{j_i} = \mathrm{grad}_{\mathbf{U}^{(i)}_{j_i}}  L ^{(i)}$ represents the Riemannian gradient on the unitary group manifold~\cite{absil2008optimization}.  
\begin{theorem}
\label{theorem:calculateG}
Let $\mathbf{G}^{(i)}_{j_i}$ be the Riemannian gradient associated with the unitary updates.  
Under the USMEA update rules, we have 
\begin{enumerate}
\item  $
   \mathrm{grad}_{\mathbf{U}^{(i)}_{j_i}}  L ^{(i)} 
    = - \mathrm{grad}_{\mathbf{U}^{(i)}_{j_i}} \bar{r}^{(i)}.$
\item The explicit form of the Riemannian gradient is given by Eq.~\eqref{eq:mixed strategy-grad-GS}.
\end{enumerate}
\end{theorem}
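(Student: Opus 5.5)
For part~1 we observe that the entropy term $T H(\mathbf{p}^{(i)})$ in $L^{(i)}$ depends only on the probability vector $\mathbf{p}^{(i)}$ and not on the unitary actions $\mathbf{U}^{(i)}_{j_i}$. Hence, in the Euclidean embedding, its differential along any tangent direction $\mathbf{X}\mathbf{U}^{(i)}_{j_i}$ of $\mathcal{U}(d_i)$ vanishes. By linearity of the Riemannian gradient (the projection of the Euclidean gradient onto $T_{\mathbf{U}}\mathcal{U}(d_i)$ is linear), we get $\mathrm{grad}_{\mathbf{U}^{(i)}_{j_i}} L^{(i)} = -\mathrm{grad}_{\mathbf{U}^{(i)}_{j_i}} \bar r^{(i)}$.

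For part~2 we compute the directional derivative of $\bar r^{(i)}$ along the curve $\gamma(t)=\exp(t\mathbf{X})\mathbf{U}^{(i)}_{j_i}$ with $\mathbf{X}$ skew-Hermitian, keeping every other action and all probabilities fixed. Only the terms of Eq.~\eqref{eq:payoff-mixed} whose index $\mathbf{j}$ has $i$-th entry equal to $j_i$ depend on this action. For each such term set $\rho_{\mathbf{j}}=\mathbf{U}^{(s)}_{j_i;\mathbf{j}_{-i}}\rho_0\mathbf{U}^{(s)\dagger}_{j_i;\mathbf{j}_{-i}}$. Along the curve it becomes $e^{t\mathbf{X}^{(s,i)}}\rho_{\mathbf{j}}e^{-t\mathbf{X}^{(s,i)}}$, where $\mathbf{X}^{(s,i)}=\mathbf{X}\otimes\mathbf{I}^{(-i)}$. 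This uses the mixed-product property: left multiplication of the $i$-th factor by $e^{t\mathbf{X}}$ equals left multiplication of the whole product by $e^{t\mathbf{X}}\otimes\mathbf{I}^{(-i)}$. Differentiating at $t=0$ gives
\begin{align}
\frac{d}{dt}\Big|_{t=0}\Tr(\mathbf{R}_i\rho_{\mathbf{j}}(t)) = \Tr\big(\mathbf{R}_i[\mathbf{X}^{(s,i)},\rho_{\mathbf{j}}]\big) = \Tr\big(\mathbf{X}^{(s,i)}[\rho_{\mathbf{j}},\mathbf{R}_i]\big),
\end{align}
by cyclicity of the trace. Applying the partial-trace identity $\Tr\big((\mathbf{X}\otimes\mathbf{I}^{(-i)})\mathbf{A}\big)=\Tr\big(\mathbf{X}\,\Tr_{-i}(\mathbf{A})\big)$ and summing with weights $p_{(j_i;\mathbf{j}_{-i})}$, the differential becomes $\Tr(\mathbf{X}\mathbf{K})$ with
\begin{align}
\mathbf{K}=\sum_{\mathbf{j}_{-i}}p_{(j_i;\mathbf{j}_{-i})}\Tr_{-i}\big([\rho_{\mathbf{j}},\mathbf{R}_i]\big).
\end{align}

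Next we identify the gradient. Since $\rho_{\mathbf{j}}$ and $\mathbf{R}_i$ are Hermitian, their commutator is skew-Hermitian, and the partial trace preserves this property, so $\mathbf{K}$ is skew-Hermitian. With the bi-invariant metric $\langle \mathbf{X}\mathbf{U},\mathbf{Y}\mathbf{U}\rangle = \mathrm{Re}\Tr(\mathbf{X}^\dagger \mathbf{Y})$ and $\mathbf{K}^\dagger=-\mathbf{K}$, we have $\Tr(\mathbf{X}\mathbf{K}) = -\Tr(\mathbf{X}\mathbf{K}^\dagger)$, which is $\pm\langle \mathbf{X},\mathbf{K}\rangle$ depending on how the metric is paired. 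Thus the Riemannian gradient of $\bar r^{(i)}$, written in the right-translated Lie-algebra form $\mathbf{G}$ used by the update $\exp(\eta_i\mathbf{G})\mathbf{U}$, is $\pm\mathbf{K}$, i.e.\ $\pm\sum p\,\Tr_{-i}([\mathbf{R}_i,\rho_{\mathbf{j}}])$.

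The main obstacle is sign and convention bookkeeping. We must reconcile three choices: whether $\mathbf{G}$ denotes the Lie-algebra element or $\mathbf{G}\mathbf{U}$, the convention for the real inner product, and the orientation ascent on $\bar r^{(i)}$ versus descent on $L^{(i)}$. We will fix these so that the update is an ascent direction, checking this via $\frac{d}{d\eta}\bar r^{(i)}\ge0$ at $\eta=0$, and so that the commutator ordering matches Eq.~\eqref{eq:mixed strategy-grad-GS}. Any overall positive constant can be absorbed into the step size $\eta_i$.
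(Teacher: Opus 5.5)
Part~1 of your argument is the same as the paper's. For part~2 your argument is correct, but it follows a different route. The paper works entrywise. It uses Wirtinger calculus to obtain the ambient Euclidean gradient $\nabla_{\mathbf{U}^{(i)}} r^{(i)} = 2\,\partial r^{(i)}/\partial \bar{\mathbf{U}}^{(i)} = 2\Tr_{-i}\bigl(\mathbf{R}_i\mathbf{U}^{(s)}\rho_0\mathbf{U}^{(s)\dagger}_{-i}\bigr)$. It then applies Abrudan's projection $\mathbf{g} = \tfrac{1}{2}\bigl(\nabla\mathbf{U}^{(i)\dagger} - \mathbf{U}^{(i)}\nabla^{\dagger}\bigr)$ to get the right-translated Riemannian gradient, and dismisses the mixed case as ``a similar calculation.'' You instead differentiate intrinsically along $e^{t\mathbf{X}}\mathbf{U}^{(i)}_{j_i}$, use cyclicity and the duality $\Tr\bigl((\mathbf{X}\otimes\mathbf{I}^{(-i)})\mathbf{A}\bigr) = \Tr\bigl(\mathbf{X}\Tr_{-i}(\mathbf{A})\bigr)$, and read off the gradient by Riesz representation in the bi-invariant metric. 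Your route has three advantages. It avoids the index bookkeeping and factor-of-two conventions of Wirtinger calculus. It handles the mixed weights explicitly by linearity. It yields directly the first-order identity $D\bar r^{(i)}[\mathbf{X}\mathbf{U}^{(i)}_{j_i}] = \langle \mathbf{X},\mathbf{G}^{(i)}_{j_i}\rangle$, which the paper essentially re-derives inside the proof of Theorem~\ref{thm:block-lip}. The paper's route, in exchange, also produces the Euclidean gradient explicitly and follows a standard off-the-shelf recipe.

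The one thing to tighten is your final paragraph. The $\pm$ is not a convention issue, and nothing needs to be chosen to make the formula match. Take the Frobenius metric $\langle \mathbf{X},\mathbf{Y}\rangle = \mathrm{Re}\Tr(\mathbf{X}^\dagger\mathbf{Y})$; it is symmetric, so the pairing order is irrelevant. With $\mathbf{X}^\dagger = -\mathbf{X}$, and $\Tr(\mathbf{X}\mathbf{K})$ real for skew-Hermitian $\mathbf{X},\mathbf{K}$, you get $\Tr(\mathbf{X}\mathbf{K}) = -\langle\mathbf{X},\mathbf{K}\rangle$. Hence the Lie-algebra gradient of $\bar r^{(i)}$ is exactly $-\mathbf{K} = \sum_{\mathbf{j}_{-i}} p_{(j_i;\mathbf{j}_{-i})}\Tr_{-i}\bigl([\mathbf{R}_i,\rho_{\mathbf{j}}]\bigr)$. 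This is Eq.~\eqref{eq:mixed strategy-grad-GS} with constant $1$. That constant should be pinned down, not absorbed into $\eta_i$, because $A_i$ and the descent bound of Theorem~\ref{thm:bundle-descent-usmea} are stated for this exact $\mathbf{G}^{(i)}_{j_i}$. The ascent check then takes one line: along $e^{\eta\mathbf{G}}\mathbf{U}^{(i)}_{j_i}$, the derivative of $\bar r^{(i)}$ at $\eta=0$ is $\Tr(\mathbf{G}\mathbf{K}) = -\Tr(\mathbf{G}^2) = \|\mathbf{G}\|_F^2 \ge 0$. This also settles your orientation question. The formula is $\mathrm{grad}\,\bar r^{(i)} = -\mathrm{grad}\,L^{(i)}$, as the sentence introducing it says, so the update is ascent on the payoff. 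The phrase $\mathbf{G}^{(i)}_{j_i} = \mathrm{grad}\,L^{(i)}$ at the start of Section~\ref{sec:theoretical} is inconsistent with part~1 by a sign.
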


The payoff operator $\mathbf{R}_i$ is bounded, $\|\mathbf{R}_i\|_{\mathrm{op}} < \infty$. The map 
$
\rho \mapsto \bar{r}^{(i)}
$
is smooth over the manifold $\mathcal{M}$. 
Furthermore, since $\mathbf{R}_i$, $\rho_0$, and the  unitary operators $\mathbf{U}^{(i)}_{j_i}$ are analytic, the expected payoff $\bar{r}^{(i)}$ is itself an analytical function.
Since $ \mathcal{M} $ is compact and $ \bar r^{(i)} $ is analytic, its derivative is bounded on $ \mathcal{M} $.
Therefore $ \bar r^{(i)} $ is Lipschitz continuous on the manifold $ \mathcal{M} $.
Estimating the Lipschitz coefficients of the payoff gradients on $ \mathcal{M} $ lets us choose safe learning rates and bound the per-iteration change in $  L ^{(i)} $. 
To begin with, we define the distance between two action sets on $\mathcal{M}$.  
Then, we compute the corresponding Lipschitz coefficients.
\begin{definition}
The geodesic distance between unitary matrices and the distance between probability values are defined as follows
\begin{enumerate}
\item For player $i$, the block (geodesic) distance between two unitary matrices $\mathbf{U}^{(i)}_{j_i}$ and $\mathbf{U}'^{(i)}_{j_i}$ is ~\cite{neff2014logarithmic}
\begin{align}
\label{eq:distanceunitary}
d_{\mathcal{U}} \big(\mathbf{U}^{(i)}_{j_i}, \mathbf{U}'^{(i)}_{j_i}\big)
&= \left\| \log \left( \mathbf{U}^{(i)\dagger}_{j_i} \mathbf{U}'^{(i)}_{j_i} \right) \right\|_F.
\end{align}
\item For player $i$, the distance between two unitary sets $\{\mathbf{U}_{j_i}^{(i)}\}_{j_i=1}^{m_i}$ and 
$\{\mathbf{U}'^{(i)}_{j_i}\}_{j_i=1}^{m_i}$, and the probability vectors $\mathbf{p}^{(i)}$ and $\mathbf{p}'^{(i)}$ is
\begin{subequations}
\begin{align}
\Delta_{\mathcal{U}}^{(i)} \left( \{\mathbf{U}_{j_i}^{(i)}\}, \{\mathbf{U}'^{(i)}_{j_i}\} \right)
&= \sum_{j_i=1}^{m_i} d_{\mathcal{U}} \left(\mathbf{U}^{(i)}_{j_i}, \mathbf{U}'^{(i)}_{j_i}\right), \\[4pt]
\Delta_p^{(i)}\left( \mathbf{p}^{(i)}, \mathbf{p}'^{(i)} \right)
&= \left\| \mathbf{p}^{(i)} - \mathbf{p}'^{(i)} \right\|_2.
\end{align}
\end{subequations}
The total distance for the player $i$ is then defined by combining these via the Euclidean norm
\begin{align}
\Delta^{(i)} = \left\| \left( \Delta_{\mathcal{U}}^{(i)},  \Delta_p^{(i)} \right) \right\| 
= \sqrt{ \left( \Delta_{\mathcal{U}}^{(i)} \right)^2 + \left( \Delta_p^{(i)} \right)^2 }.
\end{align}
\item When all players update their actions and probability distributions from $z$ to $z'$, the distances are 
\begin{subequations}
\begin{align}
\Delta_{\mathcal{U}}^2
&= \sum_{i=1}^N  \Delta_{\mathcal{U}}^{(i)} \left( \{\mathbf{U}_{j_i}^{(i)}\}_{j_i=1}^{m_i},\{\mathbf{U}'^{(i)}_{j_i}\}_{j_i=1}^{m_i} \right), \\
\Delta_p^2 
&=\sum_{i=1}^N  \left\| \mathbf{p}^{(i)} - \mathbf{p}'^{(i)} \right\|^2_2.
\end{align}
\end{subequations}
The total joint distance is then given by
\begin{align}
\Delta(z,z')
= \sqrt{\Delta_{\mathcal{U}}^2 + \Delta_p^2}.
\end{align}
\end{enumerate}
\end{definition}

\begin{theorem}
\label{thm:block-lip}
Suppose that the player $i$ modifies only the unitary action $j_i$,
then
\begin{enumerate}
\item 
The gradient $\mathbf{G}^{(i)}_{j_i}$ is Lipschitz continuous along the unitary geodesic with constant $A_i$ as
\begin{align}
\left\|\mathbf{G}^{(i)}_{j_i}
(\mathbf{U}^{(i)}_{j_i}) - \mathbf{G}^{(i)}_{j_i}(\mathbf{U}'^{(i)}_{j_i})\right\|_F
 \le 
A_i  d_{\mathcal{U}} \left(\mathbf{U}^{(i)}_{j_i}, \mathbf{U}'^{(i)}_{j_i}\right),
\end{align}
where
\begin{align}
A_i = 4 \sqrt{ d_{-i}} \|\mathbf{R}_i\|_{\mathrm{op}} \|\rho_0\|_F.
\end{align}
\item 
The per-action payoff $\ell^{(i)}_{j_i}$ is Lipschitz continuous with constant $M_i$ as
\begin{align}
\label{eq:lLipschitz}
\left| \ell^{(i)}_{j_i}(\mathbf{U}^{(i)}_{j_i}) - \ell^{(i)}_{j_i}(\mathbf{U}'^{(i)}_{j_i}) \right|
 \le 
M_i  d_{\mathcal{U}}  \left(\mathbf{U}^{(i)}_{j_i}, \mathbf{U}'^{(i)}_{j_i}\right),
\end{align}
where
\begin{align}
M_i = 2 \sqrt{ d_{-i}} \|\mathbf{R}_i\|_{\mathrm{op}} \|\rho_0\|_F.
\end{align}
\end{enumerate}
\end{theorem}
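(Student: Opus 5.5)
The plan is to reduce both claims to a single perturbation estimate for the conjugated state. Fix player $i$, the action index $j_i$, all other actions, and all probabilities. Write $\mathbf{U}=\mathbf{U}^{(i)}_{j_i}$ and $\mathbf{U}'=\mathbf{U}'^{(i)}_{j_i}$. Set $\mathbf{X}=\log(\mathbf{U}^\dagger\mathbf{U}')$, which is skew-Hermitian, so that $\mathbf{U}'=\mathbf{U}e^{\mathbf{X}}$ and $\|\mathbf{X}\|_F=d_{\mathcal U}(\mathbf{U},\mathbf{U}')$. For each $\mathbf{j}_{-i}$ define
\[
\rho_{\mathbf{j}}=\mathbf{U}^{(s)}_{j_i;\mathbf{j}_{-i}}\rho_0\mathbf{U}^{(s)\dagger}_{j_i;\mathbf{j}_{-i}},
\]
and let $\rho'_{\mathbf{j}}$ be the same state with $\mathbf{U}$ replaced by $\mathbf{U}'$.

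\textbf{Core estimate.} The key intermediate bound is
\[
\|\rho_{\mathbf{j}}-\rho'_{\mathbf{j}}\|_F\le 2\|\rho_0\|_F\,\|\mathbf{X}\|_F .
\]
To prove it, consider the geodesic $\gamma(t)=\mathbf{U}e^{t\mathbf{X}}$ for $t\in[0,1]$ and the path $\rho(t)=\mathbf{V}(t)\rho_0\mathbf{V}(t)^\dagger$, where $\mathbf{V}(t)=\gamma(t)\otimes\mathbf{U}^{(s,-i)}_{\mathbf{j}_{-i}}$. Differentiating gives $\dot\rho(t)=\mathbf{V}(t)[\tilde{\mathbf{X}},\rho_0]\mathbf{V}(t)^\dagger$ with $\tilde{\mathbf{X}}=\mathbf{X}\otimes\mathbf{I}^{(-i)}$, up to the ordering convention. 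Unitary invariance of the Frobenius norm and $\|\mathbf{AB}\|_F\le\|\mathbf{A}\|_{\op}\|\mathbf{B}\|_F$ give $\|\dot\rho\|_F\le 2\|\tilde{\mathbf{X}}\|_{\op}\|\rho_0\|_F$. Then $\|\tilde{\mathbf{X}}\|_{\op}=\|\mathbf{X}\|_{\op}\le\|\mathbf{X}\|_F$, and integrating over $[0,1]$ yields the core estimate.

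\textbf{Part 2.} The per-action payoff is
\[
\ell^{(i)}_{j_i}=\sum_{\mathbf{j}_{-i}}p_{\mathbf{j}_{-i}}\Tr(\mathbf{R}_i\rho_{\mathbf{j}}).
\]
Hence $|\ell-\ell'|\le\sum_{\mathbf{j}_{-i}}p_{\mathbf{j}_{-i}}|\Tr(\mathbf{R}_i(\rho_{\mathbf{j}}-\rho'_{\mathbf{j}}))|$. I will bound each trace by $\|\mathbf{R}_i\|_F\|\rho_{\mathbf{j}}-\rho'_{\mathbf{j}}\|_F$ using Cauchy–Schwarz, or more sharply through the partial trace. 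The factor $\sqrt{d_{-i}}$ should enter through $\|\Tr_{-i}\mathbf{Y}\|_F\le\sqrt{d_{-i}}\|\mathbf{Y}\|_F$, which is Cauchy–Schwarz over the $d_{-i}$ diagonal blocks being summed. The weights $p_{\mathbf{j}_{-i}}$ sum to one, so the convex combination costs nothing and we obtain $M_i=2\sqrt{d_{-i}}\|\mathbf{R}_i\|_{\op}\|\rho_0\|_F$. For the trace term, the cleanest route is
\[
\Tr(\mathbf{R}_i\Delta\rho)=\int_0^1\Tr\bigl(\mathbf{R}_i\mathbf{V}[\tilde{\mathbf{X}},\rho_0]\mathbf{V}^\dagger\bigr)\,dt .
\]
By cyclicity this can be rewritten as $\Tr\bigl(\tilde{\mathbf{X}}\,[\cdot,\cdot]\bigr)$, and then as $\Tr\bigl(\mathbf{X}\,\Tr_{-i}(\cdots)\bigr)$. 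That expression is bounded by $\|\mathbf{X}\|_F\,\|\Tr_{-i}[\mathbf{R}_i',\rho_0']\|_F\le\|\mathbf{X}\|_F\sqrt{d_{-i}}\,2\|\mathbf{R}_i\|_{\op}\|\rho_0\|_F$. I will use this form, since it produces exactly the stated constant.

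\textbf{Part 1.} Using the explicit gradient from Theorem~\ref{theorem:calculateG},
\[
\mathbf{G}-\mathbf{G}'=\sum_{\mathbf{j}_{-i}}p_{(j_i;\mathbf{j}_{-i})}\Tr_{-i}\bigl([\mathbf{R}_i,\rho_{\mathbf{j}}-\rho'_{\mathbf{j}}]\bigr).
\]
Apply $\|\Tr_{-i}\mathbf{Y}\|_F\le\sqrt{d_{-i}}\|\mathbf{Y}\|_F$, then $\|[\mathbf{R}_i,\mathbf{D}]\|_F\le2\|\mathbf{R}_i\|_{\op}\|\mathbf{D}\|_F$, then the core estimate. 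Since the weights sum to at most one, this gives $A_i=4\sqrt{d_{-i}}\|\mathbf{R}_i\|_{\op}\|\rho_0\|_F$.

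\textbf{Main obstacle.} The delicate step is the core estimate, specifically controlling the commutator derivative by $\|\mathbf{X}\|_F$ rather than $\|\mathbf{X}\|_{\op}$. One must also justify that $\log$ is the principal logarithm, so that $\gamma$ is a genuine path from $\mathbf{U}$ to $\mathbf{U}'$ of length $\|\mathbf{X}\|_F$. Everything else is bookkeeping with the norm inequalities from Section~\ref{sec:notations}.
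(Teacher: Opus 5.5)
Your proposal is correct and follows essentially the paper's route: take an exponential path with generator $\mathbf{X}$, $\|\mathbf{X}\|_F=d_{\mathcal U}$, and differentiate the conjugated state. You then combine the partial-trace bound $\|\Tr_{-i}\mathbf{Y}\|_F\le\sqrt{d_{-i}}\|\mathbf{Y}\|_F$, the commutator bound $\|[\mathbf{A},\mathbf{B}]\|_F\le 2\|\mathbf{A}\|_{\op}\|\mathbf{B}\|_F$, the inequality $\|\mathbf{X}\|_{\op}\le\|\mathbf{X}\|_F$ (the step you call delicate is just this) and convexity of the weights, and Part 2 reduces, as in the paper, to $\frac{d}{dt}\ell^{(i)}_{j_i}=\langle\mathbf{X},\mathbf{G}^{(i)}_{j_i}(t)\rangle_{\mathrm{HS}}$ followed by Cauchy--Schwarz. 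The only differences are cosmetic and leave the constants $A_i$ and $M_i$ unchanged: you integrate $\|\rho-\rho'\|_F$ first and then apply the linear map $\mathbf{D}\mapsto\Tr_{-i}[\mathbf{R}_i,\mathbf{D}]$, whereas the paper differentiates $\mathbf{G}^{(i)}_{j_i}$ directly, and you use the right-translated path $\mathbf{U}e^{t\mathbf{X}}$ instead of $e^{t\mathbf{X}}\mathbf{U}$.
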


When the actions are updated according to USMEA, Eq.~\eqref{eq:usmeaA-GS},  
the previous probability distribution no longer minimizes the loss function.  
Instead, the new minimizer of the loss is given by the softmax distribution in Eq.~\eqref{eq:usmeaB-GS}.

\begin{theorem} 
\label{thm:bundle-descent-usmea}
Fix
$z^{(-i)}$.
$z^{(i)}=(\{\mathbf{U}^{(i)}_{j_i}\}, \mathbf{p}^{(i)})$ to $z^{(i)'}=(\{\mathbf{U}'^{(i)}_{j_i}\}, \mathbf{p}'^{(i)})$
using USMEA.
If the step size satisfies $\eta_i \le 1 / A_i$, then the update is monotonic in $ L ^{(i)}$, and we have the following guaranteed improvement
\begin{align}
\label{eq:bundle-descent-usmea}
 L ^{(i)}\left(z^{(i)'};z^{(-i)}\right)
& \le
 L ^{(i)}\left(z^{(i)};z^{(-i)}\right)
\nonumber \\
&-
\eta_i\left(1 - \frac{1}{2} \eta_i A_i \right) 
\Delta_\mathcal{U}^{(i)2}
\nonumber \\
&-
\frac{T}{2} \Delta^{(i)2}_p %
\end{align}
In particular, the right-hand side is greater unless simultaneously $\mathbf{G}^{(i)}_{j_i} = 0$ for all $j_i$ and $\mathbf{p}'^{(i)} = \mathbf{p}^{(i)}$.
\end{theorem}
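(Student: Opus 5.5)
The proof splits the block map $\mathcal{T}_i=\mathcal{T}_i^p\circ\mathcal{T}_i^U$ into its two stages and bounds the decrease of $L^{(i)}$ in each. Write $z^{(i)}\to \tilde z^{(i)}=(\{\mathbf{U}'^{(i)}_{j_i}\},\mathbf{p}^{(i)})\to z^{(i)'}$. Then
\begin{align*}
L^{(i)}(z^{(i)'};z^{(-i)})-L^{(i)}(z^{(i)};z^{(-i)})
&=\big[L^{(i)}(\tilde z^{(i)})-L^{(i)}(z^{(i)})\big]\\
&\quad+\big[L^{(i)}(z^{(i)'})-L^{(i)}(\tilde z^{(i)})\big].
\end{align*}
The first bracket involves only the unitary blocks with $\mathbf{p}^{(i)}$ frozen. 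The entropy term is then constant, so by Theorem~\ref{theorem:calculateG} this is a Riemannian gradient step on $-\bar r^{(i)}$. Moreover, $\bar r^{(i)}=\sum_{j_i}p^{(i)}_{j_i}\ell^{(i)}_{j_i}$ decouples across $j_i$, since each $\ell^{(i)}_{j_i}$ depends only on $\mathbf{U}^{(i)}_{j_i}$ once $z^{(-i)}$ is fixed.

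For the unitary stage, consider for each $j_i$ the geodesic $\gamma(t)=\exp(t\mathbf{G}^{(i)}_{j_i})\mathbf{U}^{(i)}_{j_i}$ for $t\in[0,\eta_i]$. Set $f(t)=-\bar r^{(i)}$ along $\gamma$. Then $f'(0)=-\|\mathbf{G}^{(i)}_{j_i}\|_F^2$, using the Frobenius metric and the definition of $\mathbf{G}$ as the ascent direction. By the Lipschitz bound of Theorem~\ref{thm:block-lip}(1), $|f'(t)-f'(0)|\le A_i t\|\mathbf{G}^{(i)}_{j_i}\|_F^2$, after transporting $\mathbf{G}$ by right translation, which is an isometry of the bi-invariant metric. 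Integrating gives the standard descent lemma
\[
f(\eta_i)\le f(0)-\eta_i\big(1-\tfrac12\eta_iA_i\big)\|\mathbf{G}^{(i)}_{j_i}\|_F^2 .
\]
The step is short enough that the principal logarithm applies, so $d_{\mathcal{U}}(\mathbf{U},\mathbf{U}')=\eta_i\|\mathbf{G}\|_F$. This rewrites the decrease in terms of $\Delta_{\mathcal{U}}^{(i)}$. Summing over $j_i$ bounds $\sum_{j_i}\|\mathbf{G}\|^2$ against $(\Delta_{\mathcal{U}}^{(i)})^2$, up to the normalization in which the displayed inequality is stated; I would track this normalization carefully, and possibly absorb $\eta_i^{-1}$ and the Cauchy–Schwarz factor $m_i$. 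The condition $\eta_i\le 1/A_i$ ensures $1-\tfrac12\eta_iA_i\ge \tfrac12>0$.

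For the probability stage, with the unitaries fixed, $L^{(i)}$ as a function of $\mathbf{p}$ on the simplex is
\[
F(\mathbf{p})=-\langle \mathbf{p},\boldsymbol{\ell}^{(i)}\rangle+T\sum_j p_j\log p_j .
\]
This is $T$-strongly convex with respect to $\|\cdot\|_1$, hence also with respect to $\|\cdot\|_2$, because negative entropy is $1$-strongly convex in $\ell_1$ by Pinsker. Its unique minimizer over $\Delta^{m_i-1}$ is $\sigma_T(\boldsymbol{\ell}^{(i)})$, the point $\mathbf{p}'$ produced by $\mathcal{T}_i^p$. Strong convexity together with the first-order optimality condition at the minimizer, namely $\langle\nabla F(\mathbf{p}'),\mathbf{p}-\mathbf{p}'\rangle\ge0$, gives $F(\mathbf{p})\ge F(\mathbf{p}')+\tfrac T2\|\mathbf{p}-\mathbf{p}'\|_2^2$. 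This is exactly the $-\tfrac T2\Delta_p^{(i)2}$ term. Alternatively, $F(\mathbf{p})-F(\mathbf{p}')=T\,\mathrm{KL}(\mathbf{p}\|\mathbf{p}')$, and Pinsker gives the same bound. The case $p_j=0$ is harmless because softmax outputs are interior.

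Adding the two stages yields the stated inequality. Equality forces both decrements to vanish, that is, $\mathbf{G}^{(i)}_{j_i}=0$ for all $j_i$ and $\mathbf{p}'=\mathbf{p}$, which is the strictness claim. I expect the main obstacle to be the unitary descent lemma. Theorem~\ref{thm:block-lip} gives Lipschitzness of $\mathbf{G}$ in $d_{\mathcal{U}}$, and converting it into a second-order bound along the exponential curve requires comparing tangent vectors at different points; right-multiplication parallel transport handles this on $\mathcal{U}(d_i)$. The normalization linking $\eta_i\|\mathbf{G}\|_F$ to $\Delta_{\mathcal{U}}^{(i)}$, given that it is a sum rather than a sum of squares, must also be reconciled with the stated constants.
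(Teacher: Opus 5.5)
Your decomposition is the paper's. You freeze $\mathbf p^{(i)}$ and bound the unitary stage by a Riemannian descent lemma. The paper imports this lemma from Lemma~3.7 of \cite{malvetti2024randomized}; you derive it from Theorem~\ref{thm:block-lip} via the right-trivialized gradient, which is a valid self-contained substitute. You then bound the softmax stage by $T\,\mathrm{KL}(\mathbf p^{(i)}\,\|\,\mathbf p'^{(i)})$ and Pinsker, exactly as in Proposition~\ref{prop:softmax-kl}. Both stage estimates you write down are correct.

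The step you deferred, reconciling $\sum_{j_i}\|\mathbf G^{(i)}_{j_i}\|_F^2$ with $\Delta_{\mathcal U}^{(i)2}$, cannot be closed, so do not try to absorb constants there.

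\begin{itemize}
\item Since $\|\eta_i\mathbf G^{(i)}_{j_i}\|_{\mathrm{op}}\le \eta_iA_i/2\le\tfrac12$, the principal logarithm gives $d_{\mathcal U}(\mathbf U^{(i)}_{j_i},\mathbf U'^{(i)}_{j_i})=\eta_i\|\mathbf G^{(i)}_{j_i}\|_F$.
\item Your decrease therefore equals $\eta_i^{-1}(1-\tfrac12\eta_iA_i)\sum_{j_i}d_{\mathcal U}^2$, which is at least $(m_i\eta_i)^{-1}(1-\tfrac12\eta_iA_i)\Delta_{\mathcal U}^{(i)2}$.
\item This yields the displayed constant only when $m_i\eta_i^2\le 1$.
\end{itemize}

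In fact, the inequality with $\Delta_{\mathcal U}^{(i)}$ read literally is false. Replace $\mathbf R_i$ by $c\mathbf R_i$ and take $\eta_i=1/A_i$.

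\begin{itemize}
\item Both $\mathbf G^{(i)}_{j_i}$ and $A_i$ scale like $c$, so the updated unitaries and $\Delta_{\mathcal U}^{(i)}$ do not depend on $c$.
\item The claimed decrease $\Delta_{\mathcal U}^{(i)2}/(2A_i)$ therefore grows like $1/c$.
\item Yet $|L^{(i)}(z^{(i)'};z^{(-i)})-L^{(i)}(z^{(i)};z^{(-i)})|\le 2(c\|\mathbf R_i\|_{\mathrm{op}}+T\log m_i)$ stays bounded.
\end{itemize}

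So for small $c$ the inequality fails at any point where some $\mathbf G^{(i)}_{j_i}\neq 0$. Dimensionally, $\eta_i\Delta_{\mathcal U}^{(i)2}$ has units of inverse payoff, while $L^{(i)}$ has units of payoff.

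The paper's proof does not resolve this either. It quotes the cited lemma, and when it invokes this theorem in the proof of Theorem~\ref{theorem:BR-USMEA} it silently replaces $\Delta_{\mathcal U}^{(i)2}$ by $\sum_{j_i}\|\mathbf G^{(i)}_{j_i}\|_F^2$. So state what you have actually proved: the bound with $\sum_{j_i}\|\mathbf G^{(i)}_{j_i}\|_F^2$, or, in displacement form, with the factor $(m_i\eta_i)^{-1}(1-\tfrac12\eta_iA_i)$. Monotonicity and the strictness claim follow from that version unchanged.
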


\begin{remark}
The bound in Eq.~\eqref{eq:bundle-descent-usmea} is fully explicit in terms of $(\eta_i, T)$ and the defined block constants.  
It shows that the USMEA update, when $z^{(-i)}$, constitutes a descent step for $L^{(i)}$.  
In the limit $T \to 0$, the softmax gain vanishes, and the update reduces to the payoff-based ascent term.
\end{remark}
\begin{theorem}
\label{theorem:BR-USMEA}
Fix $z^{(-i)}$, and assume that $\eta_i \in (0,1/A_i]$.  
Then the USMEA updates for player $i$ converge to a fixed point 
$z^{(i)*}=(\mathbf{U}^{(i)*}_{j_i}, \mathbf{p}^{(i)*})$.  
At this point, for every $j_i$ we have
\begin{align}
   \mathbf{G}^{(i)}_{j_i}(z^{(i)*};z^{(-i)}) = 0, \quad
   \mathbf{p}^{(i)*}
   = \sigma_T(\boldsymbol{\ell}^{(i)*}),
\end{align}
where $\ell^{(i)*}_{s_i}$ denotes the per-action expected payoff at this fixed point. 
\end{theorem}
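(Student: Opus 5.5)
We use the descent inequality of Theorem~\ref{thm:bundle-descent-usmea} as a Lyapunov argument and combine it with compactness of $\mathcal{M}$ and continuity of the block map $\mathcal{T}_i$.

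\textbf{Monotone and bounded.} Fix $z^{(-i)}$ and write $z^{(i)[k+1]}=\mathcal{T}_i(z^{(i)[k]})$. Since $\eta_i\le 1/A_i$, we have $\eta_i(1-\tfrac12\eta_iA_i)\ge \eta_i/2>0$. Theorem~\ref{thm:bundle-descent-usmea} then gives
\begin{align}
L^{(i)}(z^{(i)[k+1]}) \le L^{(i)}(z^{(i)[k]}) - c\,\Delta^{(i)}\big(z^{(i)[k]},z^{(i)[k+1]}\big)^2,
\end{align}
with $c=\min\{\eta_i/2,\,T/2\}>0$. By Remark~\ref{remark:phaseandbounded}, $|\bar r^{(i)}|\le\|\mathbf{R}_i\|_{\op}$, and the entropy satisfies $0\le H\le\log m_i$. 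Hence $L^{(i)}$ is bounded below on $\mathcal{M}$, so the sequence $L^{(i)}(z^{(i)[k]})$ is nonincreasing and converges.

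\textbf{Vanishing steps and the fixed-point conditions.} Telescoping the inequality gives $\sum_k \Delta^{(i)}(z^{(i)[k]},z^{(i)[k+1]})^2<\infty$, so the step lengths go to zero. The unitary step is $\mathbf{U}^{[k+1]}=\exp(\eta_i\mathbf{G}^{[k]})\mathbf{U}^{[k]}$. Because $\eta_i\mathbf{G}$ is skew-Hermitian, its geodesic length is controlled by $\eta_i\|\mathbf{G}\|_F$, at least once the step is small enough that the principal logarithm applies. Step lengths going to zero therefore forces $\|\mathbf{G}^{(i)[k]}_{j_i}\|_F\to0$ for every $j_i$.

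By compactness, the iterates have accumulation points. Take any accumulation point $z^{(i)*}$. The gradient in Eq.~\eqref{eq:mixed strategy-grad-GS}, the per-action payoffs $\boldsymbol\ell^{(i)}$, and the softmax $\sigma_T$ are all continuous, and $\mathcal{T}_i$ is continuous as well. Passing to the limit along a subsequence, and using $\mathbf{p}^{[k+1]}-\mathbf{p}^{[k]}\to0$ together with $\mathbf{p}^{[k+1]}=\sigma_T(\boldsymbol\ell^{[k+1]})$, gives $\mathbf{G}^{(i)}_{j_i}(z^{(i)*};z^{(-i)})=0$ and $\mathbf{p}^{(i)*}=\sigma_T(\boldsymbol\ell^{(i)*})$. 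Thus every accumulation point is a fixed point of $\mathcal{T}_i$.

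\textbf{Main obstacle: convergence of the whole sequence.} Square-summable steps only give that the set of accumulation points is compact and connected. The introduction stresses that fixed points may not be isolated, so a connected continuum of fixed points cannot be ruled out directly. Our first approach would be analyticity. As noted before Theorem~\ref{thm:block-lip}, $\bar r^{(i)}$ is real-analytic on $\mathcal{M}$, and the entropy term is analytic on the interior of the simplex. The softmax keeps the probability iterates uniformly in the interior, since they are bounded away from $0$ by $e^{-2\|\mathbf{R}_i\|_{\op}/T}/m_i$.

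We would then invoke the Łojasiewicz gradient inequality for $L^{(i)}$ near an accumulation point. This requires a relative-error bound: the Riemannian gradient at $z^{(i)[k+1]}$ must be at most a constant times the step length. For the unitary block this follows from the Lipschitz constant $A_i$ of Theorem~\ref{thm:block-lip}. For the probability block it follows because the $\mathbf{p}$-gradient of $L^{(i)}$, namely $-\boldsymbol\ell+T(\log\mathbf{p}+1)$, vanishes at the softmax point up to a constant shift, and because $\ell$ is $M_i$-Lipschitz. With the sufficient-decrease and relative-error conditions both in hand, the standard Attouch–Bolte argument shows the trajectory has finite length and converges to a single limit $z^{(i)*}$, which by the previous step satisfies the stated fixed-point conditions.

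If the Łojasiewicz route turns out too delicate, the fallback is to state convergence in the weaker sense that the distance to the fixed-point set tends to zero.
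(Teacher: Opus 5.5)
Your proposal is correct and follows essentially the same route as the paper: the descent inequality of Theorem~\ref{thm:bundle-descent-usmea}, together with boundedness of $L^{(i)}$, gives summable steps, and the Kurdyka--\L{}ojasiewicz property of the analytic loss, via the Attouch--Bolte principle, upgrades this to convergence of the whole sequence to a point with $\mathbf{G}^{(i)}_{j_i}=0$ and $\mathbf{p}^{(i)*}=\sigma_T(\boldsymbol{\ell}^{(i)*})$. You are more careful than the paper, which neither checks the relative-error condition nor notes that the softmax keeps the iterates inside the simplex where the entropy is analytic; you can also drop your principal-logarithm caveat, since $\|\mathbf{G}^{(i)}_{j_i}\|_F\le M_i=A_i/2$ gives $\eta_i\|\mathbf{G}^{(i)}_{j_i}\|_F\le 1/2$ at every step.
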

\begin{remark}
The step-size condition $\eta_i \in (0,1/A_i]$ is a guaranteed safe interval for the USMEA BR update. Since $A_i$ is derived from a worst-case Lipschitz bound on the Riemannian gradient, this interval is not necessarily sharp. In practice, larger step sizes may still work empirically.
\end{remark}

\begin{remark}
The $z^{(i)\star}$ is a BR-stationary point of $L^{(i)}$. It is not always locally attracting. If $z^{(i)\star}$ is a strict local minimizer of $L^{(i)}$, then it is locally asymptotically stable for gradient flow and locally attracting for the BR update when the step size is sufficiently small. If it corresponds to a nondegenerate saddle of $L^{(i)}$, then it is not locally attracting. As $T \to 0$, an attracting BR fixed point corresponds to a local best response to $z^{(-i)}$ \cite{absil2006stable,lee2016gd,mckelvey1995quantal}.
\end{remark}

\begin{remark}
\label{remark:Lconvergencenotenough}
Convergence of $ L ^{(i)}$ does not necessarily imply convergence of the expected payoff $\bar r^{(i)}$.  
Indeed, it is possible for the entropy term $H^{[t]}$ and the expected payoff sequence $\bar r^{(i)[t]}$ to vary in such a way that their combination keeps $ L ^{(i)}$ constant.
Therefore, convergence of $L^{(i)}$ alone is not a reliable stopping criterion for the BR update.
\end{remark}
\begin{corollary}
\label{cor:oscillation}
Fix $z^{(-i)}$.  
Suppose $ L ^{(i)[t]}$ converges to a fixed point.  
Then the expected payoff sequence $\{\bar r^{(i)[t]}\}$ satisfies the oscillation bound,
\begin{align}
\limsup_{t} \bar r^{(i)[t]} - \liminf_{t} \bar r^{(i)[t]}  \le  T \log m_i.
\end{align}
\end{corollary}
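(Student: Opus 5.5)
The plan is to use the decomposition $L^{(i)} = -\bar r^{(i)} - T H(\mathbf{p}^{(i)})$ together with the entropy bounds $0 \le H(\mathbf{p}^{(i)}) \le \log m_i$ on $\Delta^{m_i-1}$. Write $L^{(i)[t]}$, $\bar r^{(i)[t]}$, $H^{[t]}$ for the values along the BR iteration with $z^{(-i)}$ fixed. By hypothesis $L^{(i)[t]} \to L^{(i)*}$; under the step-size condition this also follows from Theorem~\ref{thm:bundle-descent-usmea}, since the sequence is monotone and bounded below by Remark~\ref{remark:phaseandbounded}. We then have
\begin{align}
\bar r^{(i)[t]} = -L^{(i)[t]} - T H^{[t]} .
\end{align}
So the oscillation of $\bar r^{(i)[t]}$ is controlled entirely by the oscillation of $T H^{[t]}$ plus a vanishing term.

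The key steps, in order, are as follows. First, take $\limsup$ and $\liminf$ of the identity above. Because $-L^{(i)[t]}$ converges, it contributes the same constant $-L^{(i)*}$ to both, which gives
\begin{align}
\limsup_t \bar r^{(i)[t]} = -L^{(i)*} - T\liminf_t H^{[t]}, \qquad \liminf_t \bar r^{(i)[t]} = -L^{(i)*} - T\limsup_t H^{[t]}.
\end{align}
Second, subtract the two. This yields
\begin{align}
\limsup_t \bar r^{(i)[t]} - \liminf_t \bar r^{(i)[t]} = T\bigl(\limsup_t H^{[t]} - \liminf_t H^{[t]}\bigr).
\end{align}
Third, bound the right-hand side using the elementary fact that $H$ takes values in $[0,\log m_i]$ on the simplex. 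The minimum is attained at the vertices and the maximum at the uniform distribution, by Jensen's inequality applied to the concave $\log$. The bracket is therefore at most $\log m_i$, which gives the claim.

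The only step needing care is the first one, namely the fact that $\limsup(a_t+b_t)=\lim a_t+\limsup b_t$ when $a_t$ converges. This is standard, but it must be stated explicitly, since in general only the inequality $\limsup(a_t+b_t)\le\limsup a_t+\limsup b_t$ holds. I would also remark that the iterates $\mathbf{p}^{(i)[t]}=\sigma_T(\cdot)$ lie in the interior of the simplex. Hence $H^{[t]}>0$ strictly, but this does not improve the worst-case bound. The bound is consistent with Remark~\ref{remark:Lconvergencenotenough}: the payoff can drift only to the extent that the entropy compensates for it.
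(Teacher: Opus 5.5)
Your proof is correct and takes essentially the same route as the paper: both write $\bar r^{(i)} = -L^{(i)} - T H(\mathbf{p}^{(i)})$, let the convergent loss term drop out, and bound the remaining entropy oscillation by $\log m_i$. The paper argues through the pairwise estimate $|\bar r^{(i)[t]} - \bar r^{(i)[l]}| \le |L^{(i)[t]} - L^{(i)[l]}| + T|H^{[t]} - H^{[l]}|$, whereas your direct $\limsup$/$\liminf$ computation gives the slightly sharper exact identity $\limsup_t \bar r^{(i)[t]} - \liminf_t \bar r^{(i)[t]} = T(\limsup_t H^{[t]} - \liminf_t H^{[t]})$.
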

By Corollary~\ref{cor:oscillation}, the expected payoff sequence $\bar{r}^{(i)}$ may oscillate within a band of width at most $T \log m_i$.  
Based on Remark~\ref{remark:Lconvergencenotenough} and Corollary~\ref{cor:oscillation}, the convergence of $L^{(i)}$ does not imply the convergence of $\bar{r}^{(i)}$ for $T > 0$.  
Therefore, it is not advisable to use the convergence of $L^{(i)}$ as the sole stopping criterion in the algorithm. 
As $T \to 0$, this oscillation window shrinks to zero.  
Consequently, when $z^{(-i)}$ is fixed, the sequence $\{\bar{r}^{(i)}_t\}$ converges to the payoff of a (local) BR fixed point for player $i$.
Now, we turn to the joint dynamics in which all players’ actions and mixing strategies are updated sequentially.

\begin{theorem}
\label{theorem:exist-fp}
Consider all players update their actions and probabilities using
iteration map $\mathcal{T} = \mathcal{T}_N \circ \cdots \circ \mathcal{T}_1$.  
Then there exists at least one fixed point $z^\star \in \mathcal{M}$ such that $\mathcal{T}(z^\star) = z^\star$. In other words, 
\begin{subequations}
\begin{align}
 &\mathbf{G}^{(i)}_{j_i}(z^\star)= 0,\quad
 p^{(i)\star}_{j_i}
 = \sigma_T(\boldsymbol{\ell}^{(i)*})_{j_i}
, \quad \forall j_i, i.
\end{align}
\end{subequations}
\end{theorem}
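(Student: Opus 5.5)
Existence of a fixed point of the one-sweep map $\mathcal{T}=\mathcal{T}_N\circ\cdots\circ\mathcal{T}_1$ does not follow from a contraction argument. The natural first attempt is Brouwer, since $\mathcal{M}$ is compact and $\mathcal{T}$ is continuous. But $\mathcal{M}_{\mathcal{U}}$ is a product of unitary groups, which are not contractible, so Brouwer does not apply. Lefschetz is also problematic, because $\mathcal{T}$ is homotopic to the identity and $\chi(\mathcal{U}(d))=0$.

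I will instead use the variational structure established earlier and argue by a descent/limit-point construction. In that construction, fixed points of $\mathcal{T}$ are exactly the points where every block map $\mathcal{T}_i$ acts trivially.

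\textbf{Step 1 (characterizing fixed points of a block).} By the update rules \eqref{eq:usmeaA-GS}--\eqref{eq:usmeaB-GS}, $\mathcal{T}_i(z)=z$ holds if and only if $\exp(\eta_i\mathbf{G}^{(i)}_{j_i})\mathbf{U}^{(i)}_{j_i}=\mathbf{U}^{(i)}_{j_i}$ and $\mathbf{p}^{(i)}=\sigma_T(\boldsymbol{\ell}^{(i)})$. For $\eta_i\le 1/A_i$, the bound $\|\mathbf{G}^{(i)}_{j_i}\|$ from Theorem~\ref{thm:block-lip} and the boundedness in Remark~\ref{remark:phaseandbounded} give small eigenvalues of $\eta_i\mathbf{G}$. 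Alternatively, the descent inequality of Theorem~\ref{thm:bundle-descent-usmea} with $\Delta^{(i)}_{\mathcal U}=0$ can be used. Either way, $\exp(\eta_i\mathbf{G})=\mathbf{I}$ forces $\mathbf{G}^{(i)}_{j_i}=0$. Consequently, if every $\mathcal{T}_i$ fixes $z$, then $\mathcal{T}(z)=z$ together with the stated conditions.

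\textbf{Step 2 (constructing a common fixed point).} The goal is a point $z^\star$ at which all players are simultaneously BR-stationary: $\mathbf{G}^{(i)}_{j_i}(z^\star)=0$ and $\mathbf{p}^{(i)\star}=\sigma_T(\boldsymbol{\ell}^{(i)\star})$. I would obtain it through a Kakutani/Glicksberg-type argument on a convex parametrization, mirroring Theorem~\ref{thm:EEWL-Glicksberg}.

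For fixed $z^{(-i)}$, consider the smoothed best-response correspondence. The unitary part is the set of maximizers of $\ell^{(i)}_{j_i}$ over $\mathcal{U}(d_i)$; every such global maximizer has $\mathbf{G}=0$. The probability part is then $\sigma_T(\boldsymbol\ell^{(i)})$, which is single-valued and continuous.

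The simplex part is convex. For the unitary part, the trick is to lift to mixed strategies over $\mathcal{U}(d_i)$, i.e.\ probability measures. These form a convex compact set, and the payoff is linear in them. Glicksberg's theorem then yields an equilibrium of the entropy-regularized game, in the sense of a quantal-response equilibrium for the probabilities.

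Finally, I project back: pick $m_i$ unitaries in the support of each player's equilibrium measure. Each support point is a global maximizer of that player's payoff given the others, hence has $\mathbf{G}^{(i)}_{j_i}=0$. Setting the probabilities to the softmax of the induced $\boldsymbol\ell^{(i)}$ gives the required $z^\star$.

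\textbf{Main obstacle.} The delicate point is the consistency in Step 2. After I replace the equilibrium measures by the chosen support points, the other players' per-action payoffs $\boldsymbol\ell^{(-i)}$ change, so the gradients of the other players must still vanish. I would handle this by a continuous joint fixed-point map on the convex set of tuples of measures paired with simplex vectors. On this set, player $i$'s measure concentrates on maximizers of $\ell^{(i)}$ computed against the others' discrete mixtures, and $\mathbf{p}^{(i)}$ is the softmax. Kakutani then applies because the argmax correspondence is upper hemicontinuous with nonempty convex values by Berge's maximum theorem.

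As a fallback, I would argue via Theorem~\ref{theorem:BR-USMEA} and Step 1. The sequence $L$ is bounded, so I take a limit point $z^\star$ of the iterates $z^{[t,0]}$. Summing the per-block decrease from Theorem~\ref{thm:bundle-descent-usmea} over blocks would give $\Delta(z^{[t]},\mathcal{T}(z^{[t]}))\to 0$, and continuity would then give $\mathcal{T}(z^\star)=z^\star$. The weak link is that the losses differ across players, so a common Lyapunov function must be assumed or constructed.
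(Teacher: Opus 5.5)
\textbf{Your Step~2.} Step~1 is fine, but Step~2 cannot work as designed. As a function of $\mathbf{U}^{(i)}_{j_i}$, the per-action payoff $\ell^{(i)}_{j_i}$ is the same function for every $j_i$, because it depends only on $z^{(-i)}$. So if every action is a global maximizer, all entries of $\boldsymbol{\ell}^{(i)}$ coincide and $\sigma_T$ returns the uniform vector. Your Kakutani map is therefore built to produce a Nash equilibrium in which each player mixes $m_i$ unitaries with equal weights. For $m_i=1$ this is a pure quantum Nash equilibrium, which need not exist (Remark~\ref{remark:nonashpure}), and nothing in your argument uses $m_i\ge 2$. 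The failure sits exactly at your ``main obstacle''. A Glicksberg equilibrium measure need not be a uniform $m_i$-atom mixture, and choosing $m_i$ support points is neither continuous nor convex-valued. Your construction also supplies no domain that is convex and compact and has its fixed points in $\mathcal{M}$: on tuples of measures a fixed point is a Glicksberg equilibrium, not a point of $\mathcal{M}$, and $\mathcal{M}$ itself is not contractible.

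\textbf{Your fallback.} This is precisely the paper's proof: an accumulation point of $\{z^{[t,0]}\}$ by compactness, continuity of $\mathcal{T}$, a uniform decrease $\delta$ of $L^{(i)}$ near a non-stationary block, and a contradiction with ``convergence of $L^{(i)}(z^{[t,i]})$'' quoted from Theorem~\ref{theorem:BR-USMEA}. The weak link you name is real, and the paper does not repair it. That theorem assumes $z^{(-i)}$ fixed, while the other blocks can raise $L^{(i)}$ during every sweep. So infinitely many $\delta$-decreases contradict nothing, and a cycling orbit has accumulation points that are not fixed. The paper also passes silently from a convergent subsequence to the whole sequence.

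\textbf{The statement itself fails.} The gap cannot be closed, because the statement is false at this generality, as your Lefschetz remark ($\mathcal{T}$ homotopic to the identity, $\chi(\mathcal{M})=0$) foreshadows. Take $N=2$, $d_1=d_2=2$, $m_1=m_2=1$, $\rho_0$ the EWL state at $\gamma=\pi/2$, and $\eta_i\le 1/A_i$.
\begin{itemize}
\item Every $(\mathbf{U}\otimes\mathbf{V})|\psi_0\rangle$ is maximally entangled. Up to phase it equals $\sum_k x_k|e_k\rangle$ with $\mathbf{x}\in S^3\subset\mathbb{R}^4$, real in the magic basis $\{|e_k\rangle\}$.
\item Each player's $SU(2)$ alone acts transitively on this $S^3$, by left resp.\ right quaternion multiplication.
\item If $\mathbf{R}_i$ has real symmetric magic-basis matrix $\mathbf{A}_i$, then $r^{(i)}=\mathbf{x}^{T}\mathbf{A}_i\mathbf{x}$. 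Hence $\mathbf{G}^{(i)}=0$ iff $\mathbf{x}$ is an eigenvector of $\mathbf{A}_i$.
\item Take $\mathbf{A}_1=\mathrm{diag}(1,2,3,4)$ and $\mathbf{A}_2=O\mathbf{A}_1O^{T}$, with $O\in SO(4)$ having no column equal to $\pm$ a coordinate vector. The two matrices share no eigenvector.
\end{itemize}
So no $z^\star$ satisfies the displayed conditions. Moreover, a fixed point of the Gauss--Seidel sweep fixes each block, and $\eta_i\mathbf{G}^{(i)}$ is too small for $\exp(\eta_i\mathbf{G}^{(i)})=\mathbf{I}$ unless $\mathbf{G}^{(i)}=0$. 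Therefore $\mathcal{T}$ has no fixed point at all, which also refutes the later remark that fixed points exist for $m_i=1$.

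\textbf{What would suffice.} Existence needs extra structure. For a product pure initial state, let each player repeat a single unitary. Then the $\ell^{(i)}_{j_i}$ are equal, so $\mathbf{p}^{(i)}=\sigma_T(\boldsymbol{\ell}^{(i)})$ is automatically uniform. The conditions reduce to a common zero of the players' partial gradients on $\prod_i\mathbb{CP}^{d_i-1}$. Such a zero exists by Poincar\'e--Hopf, since that manifold has Euler characteristic $\prod_i d_i\neq 0$.
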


\begin{remark}
With a single action per player ($m_i=1$), the entropy term in the loss function vanishes and USMEA reduces to unitary updates.  
In this case, a Nash equilibrium may not exist (Remark~\ref{remark:nonashpure}), but fixed points still do.  
To verify whether a fixed point is a Nash equilibrium, one can compute the BR at that point.
\end{remark}
\begin{remark}
In the mixed strategy setting ($m_i>1$), a fixed point at temperature $T>0$ is a quantal–response (logit) equilibrium. 
For a fixed point with at least two non-equivalent actions per player, as the temperature is annealed to zero, the corresponding equilibrium path converges to a Nash equilibrium~\cite{mckelvey1995quantal,balter2024decision}.
\end{remark}

When a player updates $z^{(i)}$ to to improve its loss, the change influences not only its own loss but also the losses of other players. 
A single block update affects the entire game.  
As a result, convergence should be studied collectively through the full Gauss–Seidel one-sweep map,
viewed as a discrete dynamical system on $\mathcal{M}$~\cite{NippStoffer1992}. The relevant local object is the differential
$D\mathcal{T}(z^\star):T{z^\star}\mathcal{M}\to T{z^\star}\mathcal{M}$
at a fixed point $z^\star$. Since fixed points of $\mathcal{T}$ need not be isolated and may form a $C^1$ embedded submanifold of $\mathcal{M}$, local convergence requires conditions that permit neutral directions tangent to the fixed-point set and contraction in the transverse directions.

\begin{theorem}[Local convergence of USMEA near a non-isolated fixed-point]
\label{theorem:convergece_all_local}
Let $\mathcal M^{(\mathrm{FP})} \subset \mathcal M$ be a $C^1$ embedded submanifold of fixed points of $\mathcal T$, that is,
$
\mathcal T(z)=z, \forall z \in \mathcal M^{(\mathrm{FP})}.
$
Fix $z^\star \in \mathcal M^{(\mathrm{FP})}$.
Assume that:
\begin{enumerate}
    \item $\dim \mathcal M^{(\mathrm{FP})}=l$.
    \item The eigenvalue $1$ of $D\mathcal T(z^\star)$ has geometric multiplicity $l$, and the eigenvalue $1$ is semisimple.
    \item Every other eigenvalue $\lambda$ of $D\mathcal T(z^\star)$ satisfies
    $
    |\lambda|<1.
    $
\end{enumerate}
Then there exists a neighborhood $\mathcal V \subset \mathcal M$ of $z^\star$ such that, for every $z^{[0]} \in \mathcal V$, the iterates
$
    z^{[k+1]}=\mathcal T(z^{[k]})
$
remain in $\mathcal V$ and converge to a limit
$
    z^{\infty} \in \mathcal M^{(\mathrm{FP})}.
$
In particular, the USMEA dynamics converge locally to a fixed point in the embedded fixed-point submanifold $\mathcal M^{(\mathrm{FP})}$.
\end{theorem}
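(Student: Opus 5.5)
This is a normally hyperbolic (normally attracting) invariant manifold argument for a smooth map. I will show that the fixed-point manifold attracts nearby points exponentially in the transverse directions, and that the induced drift along the manifold is summable, so the limit exists.

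\textbf{Setup.} First, $\mathcal T$ is smooth (indeed real-analytic) on a neighborhood of $z^\star$. Each block map $\mathcal T_i^U$ is a composition of the matrix exponential with $\mathbf{G}^{(i)}_{j_i}$, which by Theorem~\ref{theorem:calculateG} is polynomial in the unitaries and probabilities. Each $\mathcal T_i^p$ is a softmax of the per-action payoffs, which are also polynomial. Since the statement is local, I will work in a chart around $z^\star$ and identify a neighborhood with an open set of $\mathbb R^n$, $n=\dim\mathcal M$. I will assume the fixed point lies where the simplex factor is a manifold; the softmax image is always interior, so this holds. I will also assume $\mathcal M^{(\mathrm{FP})}$ is a graph near $z^\star$.

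\textbf{Spectral splitting.} Differentiating $\mathcal T(z)=z$ along $\mathcal M^{(\mathrm{FP})}$ gives $D\mathcal T(z^\star)v=v$ for every $v\in T_{z^\star}\mathcal M^{(\mathrm{FP})}$. This tangent space is $l$-dimensional, and by assumption~2 the eigenspace for the eigenvalue $1$ also has dimension $l$, so the two coincide. Semisimplicity together with assumption~3 then gives an invariant splitting
\[
T_{z^\star}\mathcal M=E^c\oplus E^s,\qquad E^c=T_{z^\star}\mathcal M^{(\mathrm{FP})},
\]
where $D\mathcal T(z^\star)$ is the identity on $E^c$ and has spectral radius $\theta<1$ on $E^s$.

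Next I choose coordinates $z=(x,y)$, $x\in\mathbb R^l$, $y\in\mathbb R^{n-l}$, in which $\mathcal M^{(\mathrm{FP})}=\{y=0\}$ locally. This flattening is available because the fixed-point set is a $C^1$ embedded submanifold. In these coordinates $\mathcal T(x,0)=(x,0)$, and hence $D_x\mathcal T(x,0)$ equals the identity in its first $l$ columns. The transverse block $B(x):=D_y\mathcal T_y(x,0)$ is a quotient representation of $D\mathcal T(x,0)$ on the normal space. At $x=0$ its spectrum is that of $D\mathcal T(z^\star)|_{E^s}$, hence has radius at most $\theta$. I then pick an adapted norm with $\|B(0)\|\le\theta'<1$, and by continuity $\|B(x)\|\le\theta''<1$ for $|x|\le r$.

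\textbf{Contraction and convergence.} Writing $\mathcal T(x,y)=(x+a(x,y),\,b(x,y))$ with $a(x,0)=0$ and $b(x,0)=0$, Taylor's theorem gives the following, with $\epsilon$ small once $|y|\le\delta$:
\[
|b(x,y)|\le(\theta''+\epsilon)|y|,\qquad |a(x,y)|\le C|y|.
\]
This requires only $C^1$ regularity of $\mathcal T$ together with uniform continuity of its derivative on a compact neighborhood. Setting $q=\theta''+\epsilon<1$, I get $|y_k|\le q^k|y_0|$ and
\[
|x_k-x_0|\le C\sum_k|y_k|\le \frac{C|y_0|}{1-q}.
\]
Choosing $|x_0|\le r/2$ and $|y_0|$ small enough that $C|y_0|/(1-q)\le r/2$, all iterates stay in the box $\mathcal V=\{|x|<r,\ |y|<\delta\}$. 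The sequence $(x_k)$ is Cauchy, so it converges to some $x^\infty$, while $y_k\to0$. The limit $z^\infty=(x^\infty,0)$ therefore lies in $\mathcal M^{(\mathrm{FP})}$, and by continuity $\mathcal T(z^\infty)=z^\infty$.

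\textbf{Main obstacle.} The delicate step is the transverse estimate, namely bounding $\|B(x)\|$ uniformly near $z^\star$ in a fixed norm and controlling the off-diagonal coupling $D_x a$ and $D_y a$. A naive bound on $D\mathcal T$ would fail because of the neutral directions. To handle this I will argue as follows: since $D\mathcal T(x,0)$ is block upper triangular with the identity in the top-left block, its spectrum is $\{1\}^l\cup\sigma(B(x))$. Semisimplicity of the eigenvalue $1$ is exactly what prevents a Jordan coupling from producing polynomial drift, and it is what makes the geometric summation argument go through.
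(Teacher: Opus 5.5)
Your proposal is correct and follows essentially the paper's route: identify the eigenvalue-$1$ eigenspace of $D\mathcal T(z^\star)$ with $T_{z^\star}\mathcal M^{(\mathrm{FP})}$, flatten $\mathcal M^{(\mathrm{FP})}$ to $\{y=0\}$, use an adapted norm to get geometric decay of the normal component, and sum the tangential increments to get a Cauchy sequence whose limit lies in $\mathcal M^{(\mathrm{FP})}$. Your execution is more careful than the paper's in two places: the $x$-dependent transverse block $B(x)$ with the $C^1$ bound $|b(x,y)|\le(\theta''+\epsilon)|y|$ avoids the paper's claim $\|b(u,v)\|\le c\|v\|^2$, which drops the $O(\|u\|\,\|v\|)$ term coming from $B(u)-B(0)$, and your drift estimate proves that the iterates stay in $\mathcal V$, which the paper leaves implicit (for a single forward-invariant $\mathcal V$ as in the statement, take a small sublevel set of $|x|+\tfrac{C}{1-q}|y|$; your two estimates show this quantity is non-increasing along iterates).
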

\begin{remark}[Isolated fixed point]
When the fixed point $z^\star$ is isolated, the fixed-point submanifold $\mathcal M^{(\mathrm{FP})}$ is locally zero-dimensional, that is, $l=0$. In this case, there are no tangent neutral directions, and the eigenvalue $1$ does not appear in $D\mathcal T(z^\star)$. Therefore, the assumptions of Theorem~\ref{theorem:convergece_all_local} reduce to the standard local asymptotic stability condition that every eigenvalue $\lambda$ of $D\mathcal T(z^\star)$ satisfies $|\lambda|<1$. Equivalently, the spectral radius of $D\mathcal T(z^\star)$ is strictly smaller than $1$.
\end{remark}

To compute the differential of the one-sweep map at a fixed point $z^\star$, we use the chain rule on smooth manifolds, as
\begin{align}
	D\mathcal T(z^\star)
	=
	D\mathcal T_N(z^\star)\cdots D\mathcal T_1(z^\star).
\end{align}
Similarly, for each block update $\mathcal T_i$, we have
$
	D\mathcal T_i(z^\star)
	=
	D\mathcal T_i^{p}(z^\star) D\mathcal T_i^{U}(z^\star).
$
Therefore, the computation of $D\mathcal T(z^\star)$ reduces to computing $D\mathcal T_i^{U}(z^\star)$ and $D\mathcal T_i^{p}(z^\star)$ for all players, forming each block differential $D\mathcal T_i(z^\star)$, and multiplying them in the Gauss-Seidel order. 
The eigenvalues used in the local convergence analysis are then obtained from the resulting linear map $D\mathcal T(z^\star)$.
The detailed calculations of $D\mathcal T_i^{U}(z^\star)$ and $D\mathcal T_i^{p}(z^\star)$ are provided in Section~\ref{supp:DT} of the Supplementary Material.

\begin{remark}
For any permutation $\kappa$ of $\{1,2,\ldots,N\}$, define 
$
\mathcal T_\kappa  =\; \mathcal T_{\kappa(N)} \circ \cdots \circ \mathcal T_{\kappa(1)}
$.
All theorems and lemmas in this paper apply to every $\mathcal T_\kappa$.
None of our arguments depend on the specific update order. 
However, in general, since
$
D\mathcal T_\kappa(z^\star)\neq D\mathcal T(z^\star),
$
so the corresponding eigenvalues may also differ. Therefore, the local spectral conditions in Theorem~\ref{theorem:convergece_all_local} must be verified for the chosen update order.
\end{remark}

\section{Experimental Analysis}
\label{sec:expriments}
In this section, we test the performance of our proposed algorithm in finding BRs and fixed points.  
First, to study the stability of the Riemannian component of the algorithm in the quantum Prisoner’s Dilemma, we consider the case where each player uses a single unitary operator, that is, $m_i = 1$ for all $i$.  
In this case, there is no softmax update, and only the unitary (Riemannian) step is applied.  
Next, we analyze the algorithm in the quantum Prisoner’s Dilemma with mixed strategies.  
Finally, we examine its performance in larger strategy spaces as well as in a three-player Prisoner’s Dilemma.

In the plots, we use $(\text{BR}^{(i)}; z^{\star})$ to show the BR to the fixed point $z^{\star}$  by player $i$.
To evaluate the convergence behavior of the learning algorithm, we use the relative error defined as  
\begin{align}
    \varepsilon_{\text{rel}} =  \frac{|\overline{r}^{(i)}(z^{[k]}) - \bar{r}^{(i)}(z^{\star})|}{\bar{r}^{(i)}(z^{\star})}.
\end{align}
We use $\bar{\varepsilon}_{\text{rel}}$ 
to denote the average relative error
over $n$ configurations initialized with different random seeds.
The annealing temperature is adjusted according to $T = T_0 \alpha^{k}$, where $T_0$ is the initial temperature.

Since each unitary action is not unique, 
we only present the corresponding payoffs and BRs.
To determine whether a fixed point is a Nash equilibrium, we compute the BR for all players at that fixed point. If any player can adopt an alternative strategy that increases their payoff, the fixed point cannot be classified as a Nash equilibrium.

\begin{table}
\centering
\renewcommand{\arraystretch}{1.2}   
\normalsize      

\begin{subtable}[t]{\columnwidth}  
\centering
\begin{tabular}{c|c||c|c}
\textbf{Profile} & \textbf{Payoffs} & \textbf{Profile} & \textbf{Payoffs} \\ \hline
(C,C) & (3,3) &  (D,C) & (5,0)\\
(C,D) & (0,5) & (D,D) & (1,1) \\
\end{tabular}
\caption{Two-player Prisoner’s Dilemma.}
\label{tab:dilemma}
\end{subtable}

\vspace{0.5em} 

\begin{subtable}[t]{\columnwidth}
\centering
\begin{tabular}{c|c||c|c}
\textbf{Profile} & \textbf{Payoffs} & \textbf{Profile} & \textbf{Payoffs} \\ \hline
(C,C,C) & (3,3,3) & (D,C,C) & (5,2,2) \\
(C,C,D) & (2,2,5) & (D,C,D) & (4,0,4) \\
(C,D,C) & (2,5,2) & (D,D,C) & (4,4,0) \\
(C,D,D) & (0,4,4) & (D,D,D) & (1,1,1) \\
\end{tabular}
\caption{Three-player Prisoner’s Dilemma.}
\label{tab:dilemma3p}
\end{subtable}
\caption{Payoff table for classical Prisoner’s Dilemma~\cite{du2002playing}.}
\label{tab:pd-side-by-side}
\end{table}
\begin{figure}
    \centering
    \includegraphics[width=0.99\linewidth]{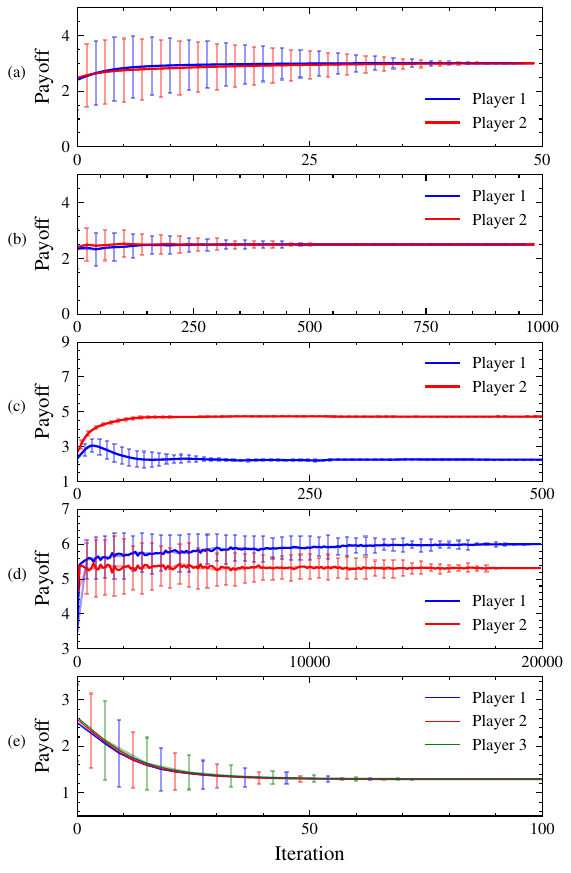}
\caption{
Convergence of the expected payoffs with $n = 200$ and $\eta = 0.05$.  
(a) Game 1: Prisoner’s Dilemma with pure strategies and $\gamma = \pi/2$.  
(b) Game 1: Prisoner’s Dilemma with mixed strategies, with $T_0 = 10$, $\alpha = 0.99995$, and $\gamma = \pi/2$.  
(c) Game 2, with $T_0 = 1$, $\alpha = 0.9995$, and $\gamma = \pi/2$.  
(d) Game 3, with $T_0 = 1$, $\alpha = 0.99995$, and $\gamma = \pi/2$.  
(e) Game 4, with $\gamma = \pi/8$.
}
    \label{fig:convergence}
\end{figure}

In these experiments, we consider four different quantum games as follows
\begin{enumerate}
\item \textbf{Game 1:} Two-player quantum Prisoner’s Dilemma.
\item \textbf{Game 2:} Two-player quantum game with asymmetric strategy spaces ($d_1 = 2$ and $d_2 = 3$).
\item \textbf{Game 3:} Two-player quantum game with $d_1 = d_2 = 3$.
\item \textbf{Game 4:} Three-player quantum Prisoner’s Dilemma.
\end{enumerate}
For each quantum game, the details of the initial states, outcome sets, and the corresponding outcome payoffs are provided in Supplementary Material, Section~\ref{append:exp_info}. 
We use $\gamma \in [0,\pi/2]$ as the entanglement parameter.
When $\gamma=0$, the state is unentangled (separable) and the outcomes reduce to the classical game.
When $\gamma=\pi/2$, the state is maximally entangled~\cite{eisert1999quantum}.
In this scenario, separable initial states regain classical strategic behavior, while entangled initial states introduce quantum correlations that change the equilibrium structure.

The payoff tables for the classical two-player and three-player Prisoner’s Dilemma are shown in Table~\ref{tab:pd-side-by-side}.  
In the classical Prisoner’s Dilemma, the Nash equilibrium occurs when all players choose to defect, which gives them a payoff of 1.

In Fig.~\ref{fig:convergence}, the trajectories of the expected payoffs for Games 1-4 are shown as they converge to their respective fixed points. 
As the strategy space expands and the number of players increases, the algorithm converges more slowly.

\textbf{Nash equilibrium and BRs in the quantum Prisoner’s Dilemma:}\\
\begin{figure}[t]
    \centering
    \includegraphics[width=0.99\linewidth]{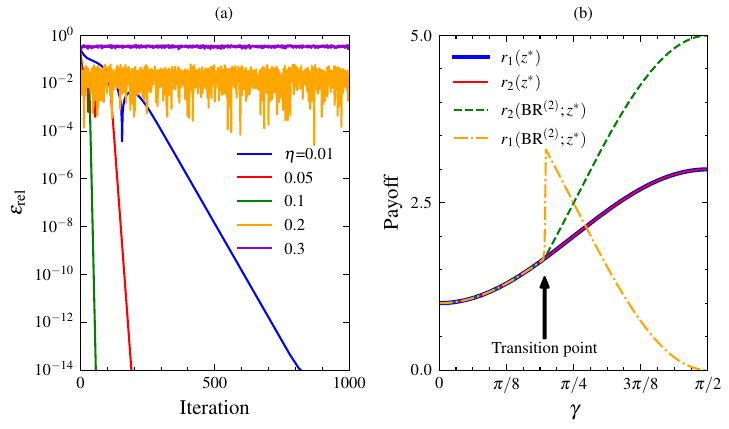}
    \caption{Quantum Prisoner's Dilemma with pure quantum strategy.
(a) Average relative payoff error with $n = 100$.
(b) Payoff at the fixed point for different $\gamma$.
The transition occurs is $\gamma_{\mathrm{TP}}=0.616\pm 0.001$.
For $\gamma<\gamma_{\mathrm{TP}}$, a pure-strategy Nash equilibrium set exists, while for $\gamma>\gamma_{\mathrm{TP}}$ no pure-strategy Nash equilibrium exists.
Beyond the transition, players can find BRs that counter their opponents’ strategies.
}
    \label{fig:quantum-PD-pure}
\end{figure}
Fig.~\ref{fig:quantum-PD-pure} shows  
the convergence of USMEA for the quantum Prisoner’s Dilemma with pure strategies.
Due to the symmetry between the two players in this game, the  payoff remains identical for both of them throughout the learning process.
In Fig.~\ref{fig:quantum-PD-pure}-(a), the convergence behavior of the payoff for different learning rates is examined.
When the learning rate is very small, the convergence is slow and requires many iterations to stabilize. As we increase the learning rate, convergence becomes faster. However, if the learning rate becomes too large, the algorithm fails to converge to the optimal solution. Instead, it may overshoot or oscillate around potential solutions, failing to reach the optimal payoff.

In Fig.~\ref{fig:quantum-PD-pure}-(b), the payoffs at the fixed point and the corresponding best-response payoffs are shown as functions of the entanglement parameter $\gamma$.
When $\gamma$ is small, the game owns a pure-strategy Nash equilibrium set.
However, as $\gamma$ increases, the players' strategies become more strongly coupled through nonclassical correlations.
We observe a transition point at $\gamma_{\text{TP}} = 0.616 \pm 0.001$, beyond which the pure-strategy Nash equilibrium disappears.
Beyond the transition point, at least one player can find a counter-strategy that increases its own payoff and lowers the opponent's payoff.
The dashed curves show the payoffs when Player~2 uses the best response to the fixed point. Beyond the transition point, Player~2 increase its own payoff and reduce Player~1's payoff.
Hence, the fixed point is no longer a mutual best response.
Physically, $\gamma_{\text{TP}}$ is the entanglement threshold at which the game enters an entanglement-dominated strategic regime. Below this threshold, pure-strategy equilibria survive, whereas above it the nonclassical correlations are strong enough to destroy the pure-strategy equilibrium.
We emphasize that this is a phase-transition-like change in equilibrium structure, not a thermodynamic phase transition~\cite{du2003phase,bugu2025entanglement}.
At the maximum entanglement level, $\gamma = \frac{\pi}{2}$, the optimal strategy leads to a cooperative payoff of $(3, 3)$, even though the BR function allows one player to obtain a payoff of $5$ and decrease the other player's payoff. 
Our experiments further reveal beyond the transition point, the total payoff remains constant at $r_1 + r_2 = 5$. 
These behaviors agree with Ref.~\cite{du2002playing} and related studies.
For more than $95\%$ of random action initializations, the algorithm converged to these fixed points in this model.

\begin{figure}[t]
    \centering
    \includegraphics[width=0.99\linewidth]{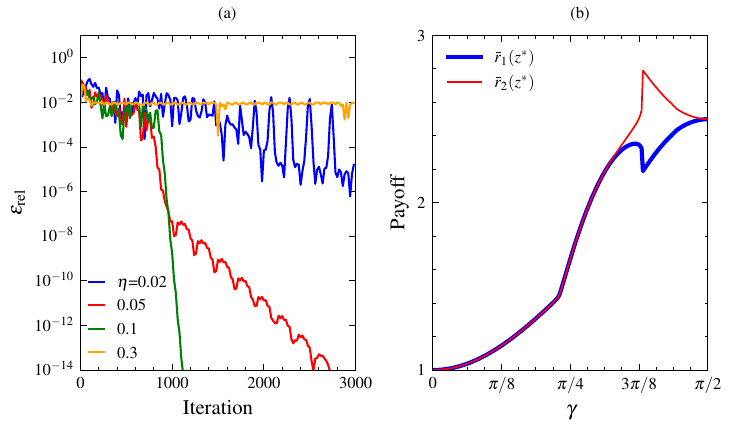}
    \caption{Quantum Prisoner's Dilemma with mixed strategies ($m_i=2$), $T_0 = 10$, and $\alpha = 0.99995$.
(a) Average relative payoff error with $n = 100$.
(b) Expected payoff at the fixed point for different $\gamma$.
}
\label{fig:quantum-PD-mixed}
\end{figure}
The convergence behavior of the algorithm for the two-player quantum Prisoner’s Dilemma with mixed strategies is shown in Fig.~\ref{fig:quantum-PD-mixed}.  
For mixed strategies and $\gamma = \pi/2$, the expected payoff converges to $2.5$ for each player. 
This point is a Nash equilibrium and yields a higher expected payoff than the classical Prisoner’s Dilemma Nash equilibrium, which is $1$.
Fig.~\ref{fig:quantum-PD-mixed}-(a) shows the convergence of the expected payoff for different learning rates $\eta$.
Fig.~\ref{fig:quantum-PD-mixed}-(b) shows the expected payoff at the fixed point for different values of $\gamma$.
We start from $\gamma = \pi/2$ and let the algorithm converge to a fixed point.
Since changing $\gamma$ continuously shifts the fixed point, we decrease $\gamma$ gradually and use the previous fixed point as the initial condition for the next run.
Because this convergence point is attracting, the algorithm can easily find the new fixed point with this approach. 
This method helps us track a specific fixed point when the system has multiple fixed points, especially when there is symmetry between different players.

\begin{table}[t]
\centering
\normalsize   
\begin{tabular}{c c c c c}
	\hline
	$m_i$ & $\gamma$ & $l$ & semisimple at $\lambda=1$ & $r_\perp$ \\
	\hline
	$1$ & $0$      & $2$ & yes & $0.9500$ \\
	$1$ & $\pi/8$  & $1$ & yes & $0.9709$ \\
	$1$ & $\pi/2$  & $3$ & yes & $0.9350$ \\
	$2$ & $0$      & $4$ & yes & $0.9750$ \\
	$2$ & $\pi/8$  & $1$ & yes & $0.9927$ \\
	$2$ & $\pi/2$  & $4$ & yes & $0.9875$ \\
	\hline
\end{tabular}
\caption{Local spectral conditions for the two-player EEWL Prisoner's Dilemma at $T=1$ and $\eta=0.05$, evaluated at selected numerically computed fixed points. Here, $r_\perp=\max_{\lambda\not\approx 1}|\lambda|$ is the largest modulus among the non-neutral eigenvalues.}
\label{table:expecturn}
\end{table}
To assess whether the observed local behavior is consistent with Theorem~\ref{theorem:convergece_all_local}, we next examine the spectrum of $D\mathcal T(z^\star)$ at a few representative fixed points computed numerically. Table~\ref{table:expecturn} reports the corresponding local spectral quantities.

To further examine the convergence of the proposed update rule, we compare USMEA with the RGD+Softmax under the same initialization. Fig.~\ref{fig:usmea-RGDSoftmax} shows payoff trajectories of player 1 in game 1 for different temperatures and step sizes. In Fig.~\ref{fig:usmea-RGDSoftmax}-(a), both methods converge for sufficiently small step sizes. For larger step sizes in the tested range, however, USMEA remains stable and converges faster, whereas RGD+Softmax exhibits stronger oscillations and may fail to converge (see Fig.~\ref{fig:usmea-RGDSoftmax}-(b) and Fig.~\ref{fig:usmea-RGDSoftmax}-(c)). 
Across many random initializations (not shown here), USMEA converged in a larger fraction of runs than RGD+Softmax.
Both methods can lose convergence when the step size is chosen outside their stable regime.  Fig.~\ref{fig:usmea-RGDSoftmax}-(b) and Fig.~\ref{fig:usmea-RGDSoftmax}-(d), which use the same step size but different temperatures, shows that the lower temperature makes the oscillatory behavior of RGD+Softmax more pronounced. Overall, these experiments suggest that USMEA has better empirical robustness to initialization, step size, and temperature in the tested mixed-strategy setting.
\begin{figure}[t]
    \centering
    \includegraphics[width=0.99\linewidth]{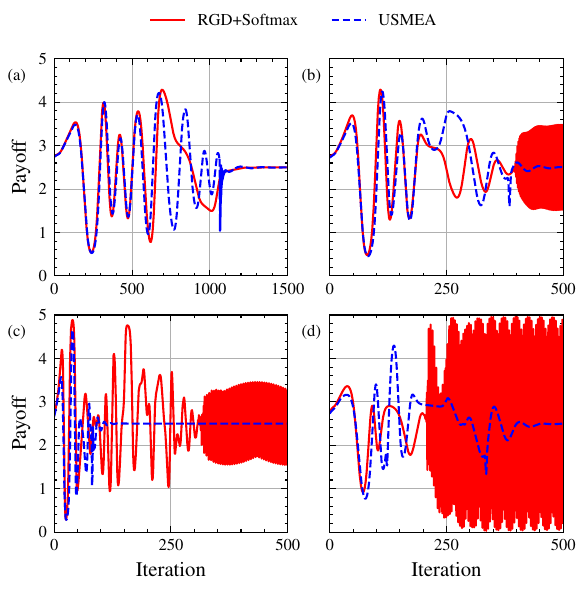}
    \caption{
    Comparison between USMEA and the RGD+Softmax baseline in the mixed-strategy Quantum Prisoner's Dilemma ($m_i=2$), using the same initialization and showing the payoff trajectory of player 1 in Game 1. 
    (a) $T=1$ and $\eta = 0.01$
    (b) $T=1$ and $\eta = 0.03$
    (c) $T=1$ and $\eta = 0.1$
    (d) $T=0.5$ and $\eta = 0.03$
    }
    \label{fig:usmea-RGDSoftmax}
\end{figure}

\textbf{
Complexity and Convergence Discussion:
}
\\
The general form of an $n \times n$ unitary matrix 
$\mathbf{U} \in \mathcal{U}(n)$
is given by
\begin{align}
    \label{eq:generalUnitary}
    \mathbf{U} = \begin{pmatrix}
        U_{11} & U_{12} & \cdots & U_{1n} \\
        U_{21} & U_{22} & \cdots & U_{2n} \\
        \vdots & \vdots & \ddots & \vdots \\
        U_{n1} & U_{n2} & \cdots & U_{nn}
    \end{pmatrix},
\end{align}
where $U_{ij} \in \mathbb{C}$. 
The unitary group $ \mathcal{U}(n)$ has $n^2$  degrees of freedom. However, ignoring the global phase invariance, there are effectively $n^2 - 1$ independent complex parameters.
In the EEWL quantum game, the dimension of the overall Hilbert space grows exponentially with the number of players and corresponding strategy space, scaling as $O\left(\prod_{i \in \mathcal{N}} d_i\right)$. However, due to the product (or structured) nature of the joint strategy operator $\mathbf{U}^{(s)}$, the number of parameters that actually need to be optimized remains much smaller. 
The number of trainable decision variables scales as 
$O\!\left(\sum_{i\in\mathcal N} m_i (d_i^2-1)\right)$, which reflects the local parameterization of the EEWL strategy space. 
For a dense-matrix implementation, joint-space payoff and gradient evaluation becomes the dominant computational cost as the problem size grows. In particular, the most expensive operations are the repeated construction of joint operators, matrix multiplications, and full/partial trace computations.
As a result, the dominant per-sweep cost scales as $O(N\Pi_{i\in\mathcal N}m_id_i^3)$.
When calculating the BR, as  $T \to 0$ and under the conditions of Theorem~\ref{thm:bundle-descent-usmea}, the updates become monotonic.
In this regime, convergence is guaranteed.
In contrast, when computing a fixed point $z^\star$, Theorem~\ref{theorem:convergece_all_local} provides a local convergence guarantee under spectral conditions on the differential $D\mathcal T(z^\star)$ at that fixed point. These assumptions are properties of the fixed point itself, not of the initialization. Accordingly, the theorem ensures local convergence only for iterates started sufficiently close to a fixed point satisfying these conditions.

\textbf{Implementation on quantum hardware and simulators:}
So far, we have used a general form of the EEWL quantum game with arbitrary local dimensions $d_i$ and applied tools from quantum mechanics to extend classical game theory in a fully mathematical way.  
For the implementation of the quantum games on quantum hardware and simulators, we specialize to 
$
    d_i = 2^{n_i},
$
where $n_i$ is the number of qubits controlled by player $i$, and the local Hilbert space is
$
    \mathcal{H}^{(i)} \cong (\mathbb{C}^2)^{\otimes n_i}.
$
Each local unitary $\mathbf{U}^{(i)}_{j_i}$ can be implemented as a parameterized quantum circuit built from qubit gates with parameter vectors $\boldsymbol{\theta}^{(i)}_{j_i}$.  
In this hybrid setting, the optimization algorithm still runs on a classical computer, while all required quantum probabilities are estimated on a quantum device.  
Since changing $\boldsymbol{\theta}^{(i)}_{j_i}$ preserves unitarity by construction, USMEA updates only the parameters $\boldsymbol{\theta}^{(i)}_{j_i}$ in the first step.  
In the second step, it updates the classical probabilities using per-action payoffs estimated from measurement statistics returned by the quantum hardware.  
The hardware cost in this setting is governed by three factors. (1) the total number of qubits $\sum_i n_i$, (2) the circuit depth required to approximate each $\mathbf{U}^{(i)}_{j_i}$ from its parameters $\boldsymbol{\theta}^{(i)}_{j_i}$, and (3) the number of measurement shots needed to estimate the payoffs with sufficient accuracy.  
This hybrid classical--quantum computing scheme shows that the proposed
framework can, in principle, be scaled from purely classical simulations
of small games to experimental implementations on near-term quantum
devices, where noise and hardware constraints may affect the learned
strategies.

\textbf{Reproducibility Statement:} All experiments in this paper are designed to be fully reproducible. Comprehensive details of the experimental setup, including implementation steps, algorithmic procedures, and parameter choices, are provided in the main text and in the supplementary material. ChatGPT and QuillBot were used to polish the language of this paper.

\section{Conclusion}
\label{sec:conclusion}
In this paper, we presented the EEWL framework for multiplayer quantum games with mixed quantum strategies.
We proposed USMEA as a learning algorithm for both unitary action sets and classical mixing probabilities, using an entropy-based loss function.
We believe that USMEA and the EEWL formulation provide a structured basis for further algorithmic development and stronger theoretical frameworks in quantum game dynamics.
We studied the existence of solutions in the EEWL setting and analyzed the convergence of USMEA under standard conditions for non-convex manifolds.
In the experimental results, we examined the convergence of our algorithm in four different quantum models.
While USMEA is effective as a learning procedure, it does not enumerate all Nash equilibria.
For future work, we plan to replace the softmax function with alternative models for probability assignment.
An important direction is to hybridize USMEA with methods such as homotopy or path-following to refine candidate fixed points.
One could also employ adaptive or stochastic Riemannian gradient methods to train the actions.
Beyond Nash equilibria, one can study correlated equilibria and regret in the EEWL setting.
Finally, it remains important to understand how noise and hardware limitations in quantum devices affect the learned strategies.

\bibliographystyle{IEEEtran}
\bibliography{references}

@IEEEtranBSTCTL{IEEEexample:BSTcontrol,
  CTLuse_article_number = "no",
  CTLuse_alt_spacing    = "no",
  CTLalt_stretch_factor = "4",
  CTLdash_repeated_names= "no",
  CTLname_format_string = "{f.~}{vv~}{ll}{, jj}",
  CTLname_latex_cmd     = "",
  CTLname_url_prefix    = "[Online]. Available:"
}

@article{eisert1999quantum,
	title={Quantum games and quantum strategies},
	author={Eisert, Jens and Wilkens, Martin and Lewenstein, Maciej},
	journal={Physical Review Letters},
	volume={83},
	number={15},
	pages={3077},
	year={1999},
	publisher={APS}
}

@article{eisert2000quantum,
	title={Quantum games},
	author={Eisert, Jens and Wilkens, Martin},
	journal={Journal of Modern Optics},
	volume={47},
	number={14-15},
	pages={2543--2556},
	year={2000},
	publisher={Taylor \& Francis}
}

@article{haykin2002adaptive,
  title={Adaptive filter theory},
  author={Haykin, Simon},
  journal={Prentice Hall google schola},
  volume={2},
  pages={333--346},
  year={2002}
}

@article{petersen2008matrix,
  title={The matrix cookbook},
  author={Petersen, Kaare Brandt and Pedersen, Michael Syskind and others},
  journal={Technical University of Denmark},
  volume={7},
  number={15},
  pages={510},
  year={2008}
}

@inproceedings{abrudan2005optimization,
  title={Optimization under unitary matrix constraint using approximate matrix exponential},
  author={Abrudan, Traian and Eriksson, Jan and Koivunen, Visa},
  booktitle={Conference Record of the Thirty-Ninth Asilomar Conference onSignals, Systems and Computers, 2005.},
  pages={242--246},
  year={2005},
  organization={IEEE}
}

@article{wiersema2023optimizing,
  title={Optimizing quantum circuits with Riemannian gradient flow},
  author={Wiersema, Roeland and Killoran, Nathan},
  journal={Physical Review A},
  volume={107},
  number={6},
  pages={062421},
  year={2023},
  publisher={APS}
}

@article{flitney2002introduction,
  title={An introduction to quantum game theory},
  author={Flitney, Adrian P and Abbott, Derek},
  journal={Fluctuation and Noise Letters},
  volume={2},
  number={04},
  pages={R175--R187},
  year={2002},
  publisher={World Scientific}
}

@article{edwards1954theory,
  title={The theory of decision making.},
  author={Edwards, Ward},
  journal={Psychological bulletin},
  volume={51},
  number={4},
  pages={380},
  year={1954},
  publisher={American Psychological Association}
}

@article{camerer2004advances,
  title={Advances in behavioral economics},
  author={Camerer, Colin F},
  journal={Russel Sage Foundation},
  year={2004}
}

@incollection{von2007theory,
  title={Theory of games and economic behavior: 60th anniversary commemorative edition},
  author={Von Neumann, John and Morgenstern, Oskar},
  booktitle={Theory of games and economic behavior},
  year={2007},
  publisher={Princeton university press}
}

@book{fudenberg1991game,
  title={Game theory},
  author={Fudenberg, Drew and Tirole, Jean},
  year={1991},
  publisher={MIT press}
}

@article{perez2024game,
	title={The Game Theory in Quantum Computers: A Review.},
	author={Ant{\'o}n, Raquel P{\'e}rez and S{\'a}nchez, Jos{\'e} Ignacio L{\'o}pez and Bellot, Alberto Corbi},
	journal={International Journal of Interactive Multimedia and Artificial Intelligence},
	volume={8},
	number={6},
	pages={6--14},
	year={2024}
}

@article{bostanci2022quantum,
  title={Quantum game theory and the complexity of approximating quantum Nash equilibria},
  author={Bostanci, John and Watrous, John},
  journal={Quantum},
  volume={6},
  pages={882},
  year={2022},
  publisher={Verein zur F{\"o}rderung des Open Access Publizierens in den Quantenwissenschaften}
}

@article{preskill2018quantum,
  title={Quantum computing in the NISQ era and beyond},
  author={Preskill, John},
  journal={Quantum},
  volume={2},
  pages={79},
  year={2018},
  publisher={Verein zur F{\"o}rderung des Open Access Publizierens in den Quantenwissenschaften}
}

@article{benjamin2001multiplayer,
  title={Multiplayer quantum games},
  author={Benjamin, Simon C and Hayden, Patrick M},
  journal={Physical Review A},
  volume={64},
  number={3},
  pages={030301},
  year={2001},
  publisher={APS}
}

@article{manton2002optimization,
  title={Optimization algorithms exploiting unitary constraints},
  author={Manton, Jonathan H},
  journal={IEEE transactions on signal processing},
  volume={50},
  number={3},
  pages={635--650},
  year={2002},
  publisher={IEEE}
}

@book{poundstone2011prisoner,
  title={Prisoner's dilemma},
  author={Poundstone, William},
  year={2011},
  publisher={Anchor}
}

@incollection{rapoport2018prisoner,
  title={Prisoner’s dilemma},
  author={Rapoport, Anatol},
  booktitle={The New Palgrave Dictionary of Economics},
  pages={10749--10753},
  year={2018},
  publisher={Springer}
}

@article{tucker1950two,
  title={A two-person dilemma},
  author={Tucker, Albert W},
  journal={Prisoner's Dilemma},
  year={1950},
  publisher={Doubleday}
}

@book{nielsen2010quantum,
  title={Quantum computation and quantum information},
  author={Nielsen, Michael A and Chuang, Isaac L},
  year={2010},
  publisher={Cambridge university press}
}

@article{flitney2002quantum,
  title={Quantum version of the Monty Hall problem},
  author={Flitney, Adrian P and Abbott, Derek},
  journal={Physical Review A},
  volume={65},
  number={6},
  pages={062318},
  year={2002},
  publisher={APS}
}

@inproceedings{xu2022experimental,
  title={Experimental implementation of quantum prisoner dilemma on IBM quantum computers},
  author={Xu, Kanyangzi and Wu, Zhe},
  booktitle={2022 15th International Conference on Advanced Computer Theory and Engineering (ICACTE)},
  pages={13--18},
  year={2022},
  organization={IEEE}
}

@article{flitney2004quantum,
  title={Quantum games with decoherence},
  author={Flitney, Adrian P and Abbott, Derek},
  journal={Journal of Physics A: Mathematical and General},
  volume={38},
  number={2},
  pages={449},
  year={2004},
  publisher={IOP Publishing}
}

@article{holtfort2024quantum,
  title={Quantum Economics: A Systematic Literature Review},
  author={Holtfort, Thomas and Horsch, Andreas and others},
  journal={SocioEconomic Challenges},
  volume={8},
  number={1},
  pages={62--77},
  year={2024},
  publisher={Academic Research and Publishing UG}
}

@article{li2014entanglement,
  title={Entanglement guarantees emergence of cooperation in quantum prisoner's dilemma games on networks},
  author={Li, Angsheng and Yong, Xi},
  journal={Scientific reports},
  volume={4},
  number={1},
  pages={6286},
  year={2014},
  publisher={Nature Publishing Group UK London}
}

@article{du2002playing,
  title={Playing prisoner's dilemma with quantum rules},
  author={Du, Jiangfeng and Xu, Xiaodong and Li, Hui and Zhou, Xianyi and Han, Rongdian},
  journal={Fluctuation and Noise Letters},
  volume={2},
  number={04},
  pages={R189--R203},
  year={2002},
  publisher={World Scientific}
}

@article{ullah2021game,
  title={Game Theory and Stock Investment},
  author={Ullah, Mahboob and Lakhan, Ghulam Rasool and Channa, Amanullah and Gul, Shabnam},
  journal={Multicultural Education},
  volume={7},
  number={6},
  pages={40--44},
  year={2021}
}

@article{ikeda2022theory,
  title={Theory of quantum games and quantum economic behavior},
  author={Ikeda, Kazuki and Aoki, Shoto},
  journal={Quantum Information Processing},
  volume={21},
  number={1},
  pages={27},
  year={2022},
  publisher={Springer}
}

@article{khan2021quantum,
  title={Quantum Prisoner’s Dilemma and high frequency trading on the quantum cloud},
  author={Khan, Faisal Shah and Bao, Ning},
  journal={Frontiers in Artificial Intelligence},
  volume={4},
  pages={769392},
  year={2021},
  publisher={Frontiers Media SA}
}

@article{hanauske2010doves,
  title={Doves and hawks in economics revisited: An evolutionary quantum game theory based analysis of financial crises},
  author={Hanauske, Matthias and Kunz, Jennifer and Bernius, Steffen and K{\"o}nig, Wolfgang},
  journal={Physica A: Statistical Mechanics and its Applications},
  volume={389},
  number={21},
  pages={5084--5102},
  year={2010},
  publisher={Elsevier}
}

@article{khrennikov2020quantum,
  title={Quantum-like modeling: cognition, decision making, and rationality},
  author={Khrennikov, Andrei},
  journal={Mind \& Society},
  volume={19},
  number={2},
  pages={307--310},
  year={2020},
  publisher={Springer}
}

@article{schmid2010experimental,
  title={Experimental implementation of a four-player quantum game},
  author={Schmid, Christian and Flitney, Adrian P and Wieczorek, Witlef and Kiesel, Nikolai and Weinfurter, Harald and Hollenberg, Lloyd CL},
  journal={New Journal of Physics},
  volume={12},
  number={6},
  pages={063031},
  year={2010},
  publisher={IOP Publishing}
}

@article{malvetti2024randomized,
  title={Randomized gradient descents on Riemannian manifolds: Almost sure convergence to global minima in and beyond quantum optimization},
  author={Malvetti, Emanuel and Arenz, Christian and Dirr, Gunther and Schulte-Herbr{\"u}ggen, Thomas},
  journal={arXiv preprint arXiv:2405.12039},
  year={2024}
}

@article{luchnikov2021riemannian,
  title={Riemannian geometry and automatic differentiation for optimization problems of quantum physics and quantum technologies},
  author={Luchnikov, Ilia A and Krechetov, Mikhail E and Filippov, Sergey N},
  journal={New Journal of Physics},
  volume={23},
  number={7},
  pages={073006},
  year={2021},
  publisher={IOP Publishing}
}

@book{absil2008optimization,
  title={Optimization algorithms on matrix manifolds},
  author={Absil, P-A and Mahony, Robert and Sepulchre, Rodolphe},
  year={2008},
  publisher={Princeton University Press}
}

@article{neff2014logarithmic,
  title={A logarithmic minimization property of the unitary polar factor in the spectral and Frobenius norms},
  author={Neff, Patrizio and Nakatsukasa, Yuji and Fischle, Andreas},
  journal={SIAM Journal on Matrix Analysis and Applications},
  volume={35},
  number={3},
  pages={1132--1154},
  year={2014},
  publisher={SIAM}
}

@book{watrous2018theory,
  title={The theory of quantum information},
  author={Watrous, John},
  year={2018},
  publisher={Cambridge university press}
}

@book{bhatia2013matrix,
  title={Matrix analysis},
  author={Bhatia, Rajendra},
  volume={169},
  year={2013},
  publisher={Springer Science \& Business Media}
}

@article{gao2017properties,
  title={On the properties of the softmax function with application in game theory and reinforcement learning},
  author={Gao, Bolin and Pavel, Lacra},
  journal={arXiv preprint arXiv:1704.00805},
  year={2017}
}

@article{rubinstein1997optimization,
  title={Optimization of computer simulation models with rare events},
  author={Rubinstein, Reuven Y},
  journal={European Journal of Operational Research},
  volume={99},
  number={1},
  pages={89--112},
  year={1997},
  publisher={Elsevier}
}

@article{attouch2013convergence,
  title={Convergence of descent methods for semi-algebraic and tame problems: proximal algorithms, forward--backward splitting, and regularized Gauss--Seidel methods},
  author={Attouch, Hedy and Bolte, J{\'e}r{\^o}me and Svaiter, Benar Fux},
  journal={Mathematical programming},
  volume={137},
  number={1},
  pages={91--129},
  year={2013},
  publisher={Springer}
}

@article{glicksberg1952further,
  title={A further generalization of the Kakutani fixed point theorem, with application to Nash equilibrium points},
  author={Glicksberg, Irving L},
  journal={Proceedings of the American Mathematical Society},
  volume={3},
  number={1},
  pages={170--174},
  year={1952},
  publisher={JSTOR}
}

@article{nash1950equilibrium,
  title={Equilibrium points in n-person games},
  author={Nash Jr, John F},
  journal={Proceedings of the national academy of sciences},
  volume={36},
  number={1},
  pages={48--49},
  year={1950},
  publisher={national academy of sciences}
}

@book{horn2012matrix,
  title={Matrix analysis},
  author={Horn, Roger A and Johnson, Charles R},
  year={2012},
  publisher={Cambridge university press}
}

@article{tseng2001convergence,
  title={Convergence of a block coordinate descent method for nondifferentiable minimization},
  author={Tseng, Paul},
  journal={Journal of optimization theory and applications},
  volume={109},
  number={3},
  pages={475--494},
  year={2001},
  publisher={Springer}
}

@article{grippo2000convergence,
  title={On the convergence of the block nonlinear Gauss--Seidel method under convex constraints},
  author={Grippo, Luigi and Sciandrone, Marco},
  journal={Operations research letters},
  volume={26},
  number={3},
  pages={127--136},
  year={2000},
  publisher={Elsevier}
}

@article{mckelvey1995quantal,
  title={Quantal response equilibria for normal form games},
  author={McKelvey, Richard D and Palfrey, Thomas R},
  journal={Games and economic behavior},
  volume={10},
  number={1},
  pages={6--38},
  year={1995},
  publisher={Elsevier}
}

@article{balter2024decision,
  title={Decision under ambiguity, composed optimization, and quantal response equilibria},
  author={Balter, Anne and Schumacher, Johannes M and Schweizer, Nikolaus},
  journal={Nikolaus, Decision under ambiguity, composed optimization, and quantal response equilibria (August 03, 2024)},
  year={2024}
}

@article{song2025fast,
  title={Fast state stabilization using deep reinforcement learning for measurement-based quantum feedback control},
  author={Song, Chunxiang and Liu, Yanan and Dong, Daoyi and Yonezawa, Hidehiro},
  journal={IEEE Transactions on Quantum Engineering},
  year={2025},
  publisher={IEEE}
}

@article{le2022dqra,
  title={DQRA: Deep quantum routing agent for entanglement routing in quantum networks},
  author={Le, Linh and Nguyen, Tu N},
  journal={IEEE Transactions on Quantum Engineering},
  volume={3},
  pages={1--12},
  year={2022},
  publisher={IEEE}
}

@article{abane2025entanglement,
  title={Entanglement routing in quantum networks: A comprehensive survey},
  author={Abane, Amar and Cubeddu, Michael and Mai, Van Sy and Battou, Abdella},
  journal={IEEE Transactions on Quantum Engineering},
  year={2025},
  publisher={IEEE}
}

@article{absil2006stable,
	title={On the stable equilibrium points of gradient systems},
	author={Absil, P.-A. and Kurdyka, K.},
	journal={Systems \& Control Letters},
	volume={55},
	number={7},
	pages={573--577},
	year={2006}
}

@inproceedings{lee2016gd,
	title={Gradient Descent Converges to Minimizers},
	author={Lee, Jason D. and Simchowitz, Max and Jordan, Michael I. and Recht, Benjamin},
	booktitle={Conference on Learning Theory},
	pages={1246--1257},
	year={2016}
}

@article{bugu2025entanglement,
	title={Entanglement as a Strategic Resource in Adversarial Quantum Games},
	author={Bugu, Sinan},
	journal={arXiv preprint arXiv:2510.22444},
	year={2025}
}

@article{du2003phase, 
	title={Phase-transition-like behaviour of quantum games},
	author={Du, Jiangfeng and Li, Hui and Xu, Xiaodong and Zhou, Xianyi and Han, Rongdian},
	journal={Journal of Physics A: Mathematical and General},
	volume={36},
	number={23},
	pages={6551--6562},
	year={2003}
}

@book{Rudin1991,
	author    = {Walter Rudin},
	title     = {Functional Analysis},
	edition   = {2},
	publisher = {McGraw-Hill},
	year      = {1991},
	series    = {International Series in Pure and Applied Mathematics}
}

@book{HornJohnson2013,
	author    = {Roger A. Horn and Charles R. Johnson},
	title     = {Matrix Analysis},
	edition   = {2},
	publisher = {Cambridge University Press},
	year      = {2013}
}

@book{Eldering2018,
	author    = {Jaap Eldering},
	title     = {Normally Hyperbolic Invariant Manifolds: The Noncompact Case},
	publisher = {Atlantis Press},
	series    = {Atlantis Studies in Dynamical Systems},
	volume    = {2},
	year      = {2018},
	doi       = {10.2991/978-94-6239-243-8},
	url       = {https://www.jaapeldering.nl/files/NHIM-noncompact-book.pdf}
}

@techreport{NippStoffer1992,
	author      = {K. Nipp and D. Stoffer},
	title       = {Attractive Invariant Manifolds for Maps: Existence, Smoothness and Continuous Dependence on the Map},
	institution = {Seminar f{\"u}r Angewandte Mathematik, ETH Z{\"u}rich},
	number      = {Research Report No. 92-11},
	year        = {1992},
	url         = {https://www.sam.math.ethz.ch/sam_reports/reports_final/reports1992/1992-11.pdf}
}

\newpage
\onecolumn

\begin{center}
\Large \textbf{Supplementary Materials:\\ Riemannian Optimization for Multi-Player Quantum Games on Product Unitary Manifolds}
\end{center}

\setcounter{figure}{0}
\setcounter{page}{1}
\makeatletter
\setcounter{equation}{0}
\renewcommand{\theequation}{S\arabic{equation}}
\setcounter{section}{0}
\renewcommand{\thesection}{S-\Roman{section}}
\setcounter{theorem}{0}
\renewcommand{\thetheorem}{S-\arabic{theorem}}
\setcounter{lemma}{0}
\renewcommand{\thelemma}{S-\arabic{lemma}}
\setcounter{proposition}{0}
\renewcommand{\theproposition}{S-\arabic{proposition}}

\section{Basics of quantum mechanics for game theorists}
\label{sec:quantumbasics}
To describe the quantum state of a system, we use a density matrix $\rho : \mathcal{H} \to \mathcal{H}$. The density matrix allows for the computation of outcome probabilities in quantum measurements.
A valid density matrix is a Hermitian, positive semidefinite, and satisfies
\begin{align}
\label{eq:densitymatrixprops}
\rho^\dagger = \rho, \quad  \Tr(\rho) = 1, \quad \tfrac{1}{d} \leq  \Tr(\rho^2) \leq 1 ,
\end{align}
The value $ \Tr(\rho^2)$ provides insight about the purity of the state.
For a pure quantum state, ${ \Tr(\rho^2) = 1}$, while for a mixed quantum state, ${ \Tr(\rho^2) < 1}$. In particular, for the maximally mixed state, one has ${ \Tr(\rho^2) = \tfrac{1}{d}}$.
In a finite-dimensional Hilbert space, an operator $\mathbf{A} : \mathcal{H} \to \mathcal{H}$ is a linear map and can be represented as a $d \times d$ complex matrix. 
In quantum mechanics, operators represent either physical observables or transformations of the quantum state. 
Unitary operators ($\mathbf{U}\mathbf{U}^\dagger = \mathbf{I}$, where $\mathbf{I}$ denotes the identity operator) can be used for the evolution of quantum states as $\rho' = \mathbf{U}\rho\mathbf{U}^\dag$.
Hermitian operators ($\mathbf{O} = \mathbf{O}^\dag$) correspond to measurable quantities (observables) and are guaranteed to have real eigenvalues. 
The commutator of two operators is defined as $[\mathbf{A},\mathbf{B}] = \mathbf{A}\mathbf{B} - \mathbf{B}\mathbf{A}$.
The expectation value of an observable $\mathbf{O}$ with respect to a quantum state $\rho$ is computed as
\begin{align}
    \braket{\mathbf{O}} = \Tr{\left( \mathbf{O} \rho \right) }.
\end{align}

For a pure quantum state, we can also use the Dirac notation. 
In Dirac notation, a pure quantum state  $\psi$ is represented by a ket, $\ket{\psi}$, in a Hilbert space spanned by a set of orthonormal basis kets $\{\ket{j}\}$.
Any state can be written as $\ket{\psi} = \sum_j c_j \ket{j}$, where the complex coefficients $c_j$ are probability amplitudes.
The quantity $|c_j|^2$ gives the probability of finding the system in basis state $\ket{j}$.
Each ket $\ket{\psi}$ has a vector counterpart in the chosen basis, often 
written as a column vector of complex amplitudes $(c_1, c_2, \ldots)^{T}$, 
and its dual bra $\bra{\psi}=\ket{\psi}^\dag$ corresponds to the conjugate transpose row vector. 
The inner product between two states is written as $\braket{\psi'|\psi}$.
The outer product $\ket{j'}\bra{j}$ forms a basis for the 
space of linear operators.
For composite systems, the joint state space is constructed using the Kronecker product of the subsystems' bases.
For example, for two subsystems $A$ and $B$, the  $\ket{j_Aj_B} = \ket{j_A} \otimes \ket{j_B}$ describes the joint basis.
For a mixed quantum state, the density matrix can be written in terms of pure states as
\begin{align}
    \rho = \sum_j \alpha_j \ket{\psi_j} \bra{\psi_j},
\end{align}
where $\alpha_j$ represents the classical probability of the system being in the pure state $\ket{\psi_j}$.

\textbf{Separable and Entangled states:}
In quantum mechanics, when dealing with a composite system composed of two or more subsystems, the overall state can be either separable or entangled. Consider that we have a system made up of subsystems $A$ and $B$ with the density matrix $\rho^{(AB)}$. The system is separable if its state can be expressed as
\begin{align}
    \rho^{(AB)} = \sum_j p_j   \rho^{(A)}_j \otimes \rho^{(B)}_j,
\end{align}
where $p_j$ is classical probabilities (with $\sum_j p_j = 1$), and $\rho^{(A)}_j$ and $\rho^{(B)}_j$ are the density matrices of subsystems $A$ and $B$, respectively. The symbol $\otimes$ denotes the Kronecker product with the properties
\begin{align}
    (\mathbf{A}\otimes\mathbf{B})(\mathbf{C}\otimes\mathbf{D})
    =(\mathbf{AC})\otimes(\mathbf{BD}).
\end{align}

In a separable state, the total system can be viewed as a statistical mixture of independent states.
If no such decomposition exists, the state $\rho^{(AB)}$ is entangled. Entangled states show a non-classical correlation between subsystems. This quantum correlation cannot be explained by classical probability theory or any local hidden variable model.

\section{Proofs and Additional Results}
\label{append:proofs}
In this section, we provide some additional theorems and missing proofs from the main part of our paper.
\begin{theorem}[Wirtinger's Calculus for Real-Valued Functions~\cite{petersen2008matrix,haykin2002adaptive}]
\label{theorem:Wirtinger-gradient}
Let $z$ be a complex variable, and let $f: \mathbb{C} \to \mathbb{R}$ be a real-valued function. Then, the gradient of $f$ with respect to $z$ is given by
\begin{align}
    \label{eq:Wirti}
    \nabla_z f(z, \bar z) = 2 \frac{\partial f(z, \bar z)}{\partial \bar z}.
\end{align}
\end{theorem}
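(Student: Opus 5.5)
The plan is to make precise what ``gradient with respect to a complex variable'' means, since this is the only non-routine point. I will then reduce the claim to the ordinary real gradient on $\mathbb{R}^2$. Identify $\mathbb{C}$ with $\mathbb{R}^2$ through $z = x + i y$, and equip it with the real inner product $\langle a, b\rangle_{\mathbb{R}} = \mathrm{Re}(a\bar b)$. This is the Euclidean inner product on $\mathbb{R}^2$, and its matrix version $\mathrm{Re}\,\Tr(\mathbf{A}^\dagger\mathbf{B})$ is the Frobenius metric used for the unitary manifold in the main text. The gradient $\nabla_z f$ is then defined in the Riesz sense: it is the unique complex number $g$ such that $df = \langle g, dz\rangle_{\mathbb{R}}$ for all increments $dz$. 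Equivalently, $g = \partial_x f + i\,\partial_y f$, i.e.\ the steepest-ascent direction of $f$ viewed as a function on $\mathbb{R}^2$. We assume throughout that $f$ is real-differentiable.

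The first step is to recall the Wirtinger operators
\[
\frac{\partial}{\partial z} = \tfrac12\left(\frac{\partial}{\partial x} - i\frac{\partial}{\partial y}\right), \qquad \frac{\partial}{\partial \bar z} = \tfrac12\left(\frac{\partial}{\partial x} + i\frac{\partial}{\partial y}\right).
\]
From these, $2\,\partial f/\partial\bar z = \partial_x f + i\,\partial_y f$ follows immediately, which already matches the coordinate form of $g$.

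To present this in a coordinate-free way, which is the form that transfers to the matrix calculations used later, I will argue through the first-order expansion instead. Since $dx = \tfrac12(dz + d\bar z)$ and $dy = \tfrac{1}{2i}(dz - d\bar z)$, the chain rule gives
\[
df = \frac{\partial f}{\partial z}\,dz + \frac{\partial f}{\partial \bar z}\,d\bar z .
\]
Because $f$ is real-valued, $\overline{\partial f/\partial z} = \partial f/\partial \bar z$, so the two terms are complex conjugates of each other. Hence
\[
df = 2\,\mathrm{Re}\!\left(\frac{\partial f}{\partial \bar z}\,\overline{dz}\right) = \left\langle 2\frac{\partial f}{\partial\bar z},\, dz\right\rangle_{\mathbb{R}} .
\]
By uniqueness of the Riesz representer, $\nabla_z f = 2\,\partial f/\partial\bar z$. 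I will also note that the same argument applies entrywise to a matrix variable $\mathbf{Z}$ with the Frobenius real inner product, giving $\nabla_{\mathbf{Z}} f = 2\,\partial f/\partial\bar{\mathbf{Z}}$, which is the form needed for Theorem~\ref{theorem:calculateG}.

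The main obstacle is conventional rather than technical. The factor $2$, and the choice of $\bar z$ rather than $z$, both depend on which inner product defines the gradient. With the unnormalized real inner product $\mathrm{Re}(a\bar b)$ the factor is $2$; other conventions, such as a Hermitian inner product or a factor $\tfrac12$ in the metric, change it. I will therefore fix the convention explicitly at the start and check that it agrees with the metric used for $T_{\mathbf{U}}\mathcal{U}(d_i)$. The remaining computations are the routine conversions between $(dx, dy)$ and $(dz, d\bar z)$.
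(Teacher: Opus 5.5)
Your proof is correct and is the standard Wirtinger-calculus argument that the paper does not write out but only cites; the paper's proof consists of the references plus the remark that the factor $2$ comes from treating $z$ and $\bar z$ as independent variables, which is your expansion $df = \frac{\partial f}{\partial z}\,dz + \frac{\partial f}{\partial \bar z}\,d\bar z$. What you add is an explicit definition of $\nabla_z f$ as the Riesz representer for $\mathrm{Re}(a\bar b)$, which the statement leaves implicit and which is what fixes the factor $2$, and this convention matches the paper's later use: under $\mathrm{Re}\,\Tr(\mathbf{A}^\dagger\mathbf{B})$ the Riemannian gradient is the skew-Hermitian part of $(\nabla_{\mathbf{U}^{(i)}} r^{(i)})\mathbf{U}^{(i)\dagger}$, which is exactly the paper's $\mathbf{g}^{(i)}$ defined by $2\mathbf{g}^{(i)} = (\nabla_{\mathbf{U}^{(i)}} r^{(i)})\mathbf{U}^{(i)\dagger} - \mathbf{U}^{(i)}(\nabla_{\mathbf{U}^{(i)}} r^{(i)})^\dagger$.
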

\noindent\textbf{Proof:}
For a detailed derivation, see Refs.~\cite{petersen2008matrix,haykin2002adaptive}. 
In Eq.~\eqref{eq:Wirti}, the factor of two arises from Wirtinger’s calculus for real-valued functions with complex-valued parameters. This method treats the complex variable $z$ and its conjugate $\bar{z}$ as independent variables. 

\begin{lemma}
\label{lem:softmax-lip-l2-cited}
The softmax function is $L$-Lipschitz with respect to the $\|\cdot\|_2$ norm, with Lipschitz constant $L = 1/T$ as
\begin{align}
\big\|\sigma_T(\boldsymbol{\ell})-\sigma_T(\boldsymbol{\ell}')\big\|_{2}  \le  \frac{1}{T} \|\boldsymbol{\ell}-\boldsymbol{\ell}'\|_{2}.
\end{align}
\end{lemma}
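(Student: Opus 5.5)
The plan is to bound the Jacobian of $\sigma_T$ uniformly in operator norm and then apply the mean value inequality along the segment joining $\boldsymbol{\ell}'$ to $\boldsymbol{\ell}$. Write $\mathbf{s}=\sigma_T(\mathbf{a})$. A direct computation from the definition gives
\begin{align}
\frac{\partial \sigma_T(\mathbf{a})_j}{\partial a_k}=\frac{1}{T}\big(s_j\delta_{jk}-s_js_k\big),
\end{align}
so that $J(\mathbf{a})=\frac{1}{T}\big(\mathrm{diag}(\mathbf{s})-\mathbf{s}\mathbf{s}^{\top}\big)$. Since $\mathbb{R}^m$ is convex and $\sigma_T$ is smooth, the fundamental theorem of calculus along $\mathbf{a}(\tau)=\boldsymbol{\ell}'+\tau(\boldsymbol{\ell}-\boldsymbol{\ell}')$ yields
\begin{align}
\sigma_T(\boldsymbol{\ell})-\sigma_T(\boldsymbol{\ell}')=\int_0^1 J(\mathbf{a}(\tau))(\boldsymbol{\ell}-\boldsymbol{\ell}')\,d\tau .
\end{align}
It therefore suffices to show that $\|J(\mathbf{a})\|_{\mathrm{op}}\le 1/T$ for every $\mathbf{a}$.

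The key step is to bound the spectrum of $S=\mathrm{diag}(\mathbf{s})-\mathbf{s}\mathbf{s}^{\top}$, which is real symmetric. For any vector $\mathbf{v}$,
\begin{align}
\mathbf{v}^{\top}S\mathbf{v}=\sum_j s_jv_j^2-\Big(\sum_j s_jv_j\Big)^2 .
\end{align}
This is the variance of $v$ under the distribution $\mathbf{s}$, so it is nonnegative and $S$ is positive semidefinite. The same expression is bounded above by $\sum_j s_jv_j^2\le \max_j s_j\,\|\mathbf{v}\|_2^2\le\|\mathbf{v}\|_2^2$, because $0\le s_j\le 1$. Consequently the eigenvalues of $S$ lie in $[0,1]$, giving $\|S\|_{\mathrm{op}}\le 1$ and $\|J\|_{\mathrm{op}}\le 1/T$.

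Combining the two steps, $\|\sigma_T(\boldsymbol{\ell})-\sigma_T(\boldsymbol{\ell}')\|_2\le\int_0^1\|J\|_{\mathrm{op}}\|\boldsymbol{\ell}-\boldsymbol{\ell}'\|_2\,d\tau\le\frac1T\|\boldsymbol{\ell}-\boldsymbol{\ell}'\|_2$, which is the claim. The only real obstacle is the operator-norm bound on $S$. The variance identity resolves it cleanly; a sharper constant, $1/(2T)$, is available via Gershgorin or the bound $\max_j s_j(1-s_j)\le 1/4$ type arguments, but it is not needed. This result is also standard and can be cited from~\cite{gao2017properties}.
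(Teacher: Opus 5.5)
Your proof is correct. $J=\frac{1}{T}\bigl(\mathrm{diag}(\mathbf{s})-\mathbf{s}\mathbf{s}^{\top}\bigr)$ is symmetric, and its quadratic form is $\frac{1}{T}$ times the variance of $\mathbf{v}$ under $\mathbf{s}$. Hence $0\preceq J\preceq \frac{1}{T}\mathbf{I}$, and the integral mean value theorem on the convex domain $\mathbb{R}^m$ turns this into the Lipschitz bound.

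The paper gives no argument for this lemma and simply cites Gao and Pavel. Your route therefore differs from the paper's only in being self-contained. That makes the lemma checkable in place and brings out two points the citation leaves implicit.

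First, your $J$ is exactly the matrix $\mathbf{S}_i$ in the paper's linearization of $\mathcal{T}_i^p$ in Section~\ref{supp:DT}. The same variance bound therefore also controls that block of $D\mathcal{T}(z^\star)$.

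Second, the constant $1/T$ is not sharp. The Gershgorin intervals of $\mathrm{diag}(\mathbf{s})-\mathbf{s}\mathbf{s}^{\top}$ are $[0,2s_j(1-s_j)]$. Combining this with $s_j(1-s_j)\le 1/4$ gives $\|J\|_{\mathrm{op}}\le 1/(2T)$, attained for $m=2$ at $\mathbf{s}=(1/2,1/2)$. You need both ingredients together, not either one alone as your ``or'' suggests.

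The paper only uses continuity or smoothness of the softmax step (e.g., in the proof of Theorem~\ref{theorem:exist-fp}), so the weaker constant stated in the lemma does no harm.
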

 \noindent\textbf{Proof:}
For a detailed proof, see~\cite{gao2017properties}.

\begin{lemma}
\label{lemma:partial-trace-bound}
For any operator $\mathbf{X}$ on $\mathcal{H}^{(i)}\otimes\mathcal{H}^{(-i)}$,
\begin{align}
 \| \Tr_{-i}(\mathbf{X})\|_{F} \le  \sqrt{ d_{-i}} \|\mathbf{X}\|_{F}.
\end{align}
\end{lemma}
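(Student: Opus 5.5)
The plan is to write the partial trace as a sum of $d_{-i}$ compressions and then use the triangle inequality followed by Cauchy--Schwarz. Fix an orthonormal basis $\{\ket{k}\}_{k=1}^{d_{-i}}$ of $\mathcal{H}^{(-i)}$, which corresponds to the multi-index $\mathbf{j}_{-i}$. For each $k$, define the $d_i\times d_i$ block
\begin{align}
\mathbf{X}_{kk} = (\mathbf{I}^{(i)}\otimes \bra{k})\,\mathbf{X}\,(\mathbf{I}^{(i)}\otimes\ket{k}).
\end{align}
Its entries are $(\mathbf{X}_{kk})_{j_i,j'_i} = X_{(j_i;k),(j'_i;k)}$. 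Comparing with the entrywise definition of $\Tr_{-i}$ in Section~\ref{sec:notations} then gives
\begin{align}
\Tr_{-i}(\mathbf{X}) = \sum_{k=1}^{d_{-i}} \mathbf{X}_{kk}.
\end{align}

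Next, apply the triangle inequality for the Frobenius norm to get
\begin{align}
\|\Tr_{-i}(\mathbf{X})\|_F \le \sum_k \|\mathbf{X}_{kk}\|_F.
\end{align}
Then apply Cauchy--Schwarz in $\mathbb{R}^{d_{-i}}$:
\begin{align}
\sum_k \|\mathbf{X}_{kk}\|_F \le \sqrt{d_{-i}}\Big(\sum_k \|\mathbf{X}_{kk}\|_F^2\Big)^{1/2}.
\end{align}

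Finally, note that the diagonal blocks $\mathbf{X}_{kk}$ use disjoint sets of entries of $\mathbf{X}$. In the block decomposition $\mathbf{X} = \sum_{k,k'} \mathbf{X}_{kk'}\otimes\ket{k}\bra{k'}$, we therefore have
\begin{align}
\sum_k \|\mathbf{X}_{kk}\|_F^2 \le \sum_{k,k'}\|\mathbf{X}_{kk'}\|_F^2 = \|\mathbf{X}\|_F^2.
\end{align}
Combining the three displays gives $\|\Tr_{-i}(\mathbf{X})\|_F\le\sqrt{d_{-i}}\,\|\mathbf{X}\|_F$, as claimed.

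No step here is a real obstacle. The only care needed is the bookkeeping: the tensor ordering $\mathcal{H}^{(i)}\otimes\mathcal{H}^{(-i)}$ must match the paper's multi-index convention $(j_i;\mathbf{j}_{-i})$. This is harmless, because Frobenius norms are invariant under the permutation of tensor factors that moves player $i$ to the front. The constant $\sqrt{d_{-i}}$ is sharp, with equality for $\mathbf{X}=\mathbf{A}\otimes \mathbf{I}^{(-i)}$.
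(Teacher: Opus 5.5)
Your proof is correct, but it takes a different route from the paper. The paper argues by duality. It uses the adjoint identity $\langle \Tr_{-i}(\mathbf{X}), \mathbf{Y}^{(i)}\rangle_{\text{HS}} = \langle \mathbf{X}, \mathbf{Y}^{(i)}\otimes\mathbf{I}^{(-i)}\rangle_{\text{HS}}$ and takes the supremum over unit-Frobenius $\mathbf{Y}^{(i)}$. It then applies Cauchy--Schwarz on the full operator space and finishes with $\|\mathbf{Y}^{(i)}\otimes\mathbf{I}^{(-i)}\|_F = \sqrt{d_{-i}}\,\|\mathbf{Y}^{(i)}\|_F$.

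You instead work in coordinates. You write $\Tr_{-i}(\mathbf{X})$ as the sum of its $d_{-i}$ diagonal blocks, apply the triangle inequality and Cauchy--Schwarz in $\mathbb{R}^{d_{-i}}$, and then discard the off-diagonal blocks.

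Your version needs nothing beyond the paper's entrywise definition of $\Tr_{-i}$. It therefore avoids justifying the adjoint identity, whose intermediate step the paper writes somewhat loosely. It also yields the slightly stronger intermediate bound $\|\Tr_{-i}(\mathbf{X})\|_F \le \sqrt{d_{-i}}\,\bigl(\sum_k \|\mathbf{X}_{kk}\|_F^2\bigr)^{1/2}$. This shows the off-diagonal blocks play no role, and it makes the equality case $\mathbf{X}=\mathbf{A}\otimes\mathbf{I}^{(-i)}$ easy to read off.

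The paper's duality argument, in turn, is basis-free and transfers directly to other dual norm pairs. For example, pairing the trace norm with the operator norm and using $\|\mathbf{Y}\otimes\mathbf{I}^{(-i)}\|_{\mathrm{op}} = \|\mathbf{Y}\|_{\mathrm{op}}$ shows that the partial trace is contractive in trace norm, with no dimension factor.
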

 \noindent\textbf{Proof:}
The Hilbert–Schmidt inner product between two operators is given by $\langle \mathbf{A},\mathbf{B}\rangle_{\text{HS}}= \Tr(\mathbf{A}^\dagger \mathbf{B})$.
For any operator $\mathbf{Y}^{(i)}: \mathcal{H}^{(i)} \to \mathcal{H}^{(i)}$, 
we have
\begin{align}
\langle  \Tr_{-i}(\mathbf{X}), \mathbf{Y}^{(i)}\rangle_{\text{HS}}
=  \Tr_{i} \left(
( \Tr_{-i}(\mathbf{X}))^\dag (\mathbf{Y}^{(i)}  \otimes \mathbf{I}^{(-i)})
\right)=
  \langle \mathbf{X}, \mathbf{Y}^{(i)}\otimes \mathbf{I}^{(-i)}\rangle_{\text{HS}}.
\end{align}
By duality, we have
\begin{align}
\label{eq:tr-frob}
\| \Tr_{-i}(\mathbf{X})\|_{F}
&= \sup_{\|\mathbf{Y}^{(i)}\|_{F}=1} |\langle  \Tr_{-i}(\mathbf{X}),\mathbf{Y}^{(i)}\rangle_{\text{HS}}| 
\nonumber\\
&= \sup_{\|\mathbf{Y}^{(i)}\|_{F}=1} |\langle \mathbf{X}, \mathbf{Y}^{(i)}\otimes \mathbf{I}^{(-i)}\rangle_{\text{HS}}|
\nonumber\\
&\le \|\mathbf{X}\|_{F} \sup_{\|\mathbf{Y}^{(i)}\|_{F}=1}\|\mathbf{Y}^{(i)}\otimes \mathbf{I}^{(-i)}\|_{F}.
\end{align}
Since Frobenius norm is multiplicative on tensor products~\cite{bhatia2013matrix}, we have
\begin{align}
\label{eq:kron-multi}
\|\mathbf{Y}^{(i)}\otimes \mathbf{I}^{(-i)}\|_{F}  =  \|\mathbf{Y}^{(i)}\|_{F} \|\mathbf{I}^{(-i)}\|_{F}  =  \sqrt{ d_{-i}}.
\end{align}
Combining Eq.~\eqref{eq:tr-frob} and Eq.~\eqref{eq:kron-multi}, we obtain
\begin{align}
    \| \Tr_{-i}(\mathbf{X})\|_{F} \le  \sqrt{ d_{-i}} \|\mathbf{X}\|_{F}.
\end{align}

\begin{proposition}
\label{prop:softmax-kl}
Consider that all actions are fixed, and player $i$ updates its probabilities by minimizing its loss function $L^{(i)}$ at temperature $T > 0$.  
Let $\mathbf{p}'^{(i)}$ be the probability vector that minimizes $L^{(i)}$, given by the softmax function~\cite{rubinstein1997optimization}.  
Then, for any other probability vector $\mathbf{p}^{(i)}$ in the simplex, the following improvement bound holds
\begin{align}
\label{eq:softmax-improve-L1-L2}
 L ^{(i)}(\mathbf{p}'^{(i)}) -  L ^{(i)}(\mathbf{p}^{(i)}) 
 \le  -\frac{T}{2} \left\| \mathbf{p}'^{(i)} - \mathbf{p}^{(i)} \right\|_2^2.
\end{align}
\end{proposition}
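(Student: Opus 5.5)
With the actions fixed, the loss of player $i$ restricted to the simplex is
\begin{align}
L^{(i)}(\mathbf{p}) = -\langle \mathbf{p}, \boldsymbol{\ell}^{(i)}\rangle - T H(\mathbf{p}) = -\langle \mathbf{p}, \boldsymbol{\ell}^{(i)}\rangle + T\sum_{j_i} p_{j_i}\log p_{j_i},
\end{align}
because $\bar r^{(i)}=\sum_{j_i}p^{(i)}_{j_i}\ell^{(i)}_{j_i}$ and $\boldsymbol{\ell}^{(i)}$ does not depend on $\mathbf{p}^{(i)}$. This is a linear function plus $T$ times the negative entropy. The plan is to write $L^{(i)}(\mathbf{p})-L^{(i)}(\mathbf{p}')$ exactly as a Kullback--Leibler divergence, and then bound that divergence below by Pinsker's inequality.

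\textbf{Step 1 (exact identity).} Set $\mathbf{p}'=\sigma_T(\boldsymbol{\ell}^{(i)})$ and $Z=\sum_k e^{\ell^{(i)}_k/T}$. Then $\log p'_{j}=\ell^{(i)}_j/T-\log Z$, so $\ell^{(i)}_j = T\log p'_j + T\log Z$. Substituting this into $L^{(i)}$ gives, for every $\mathbf{p}$ in the simplex,
\begin{align}
L^{(i)}(\mathbf{p}) = T\sum_j p_j\log\frac{p_j}{p'_j} - T\log Z = T\,\mathrm{KL}(\mathbf{p}\,\|\,\mathbf{p}') - T\log Z,
\end{align}
where $\sum_j p_j=1$ was used to collect the $\log Z$ term. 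Evaluating at $\mathbf{p}=\mathbf{p}'$ gives $L^{(i)}(\mathbf{p}')=-T\log Z$. Hence
\begin{align}
L^{(i)}(\mathbf{p}')-L^{(i)}(\mathbf{p})=-T\,\mathrm{KL}(\mathbf{p}\,\|\,\mathbf{p}').
\end{align}
This identity also confirms that the softmax is the unique minimizer on the simplex. For $\mathbf{p}$ on the boundary we use the convention $0\log 0=0$; the identity is unaffected because every $p'_j>0$.

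\textbf{Step 2 (Pinsker).} Pinsker's inequality gives $\mathrm{KL}(\mathbf{p}\|\mathbf{p}')\ge \tfrac12\|\mathbf{p}-\mathbf{p}'\|_1^2$. Since $\|\cdot\|_1\ge\|\cdot\|_2$, this yields $\mathrm{KL}(\mathbf{p}\|\mathbf{p}')\ge\tfrac12\|\mathbf{p}-\mathbf{p}'\|_2^2$. Combining with Step 1 gives Eq.~\eqref{eq:softmax-improve-L1-L2}, in fact with the stronger $\ell_1$ norm in place of $\ell_2$.

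The only step that needs care is applying Pinsker with the correct normalization. If one prefers to avoid citing Pinsker, the same constant follows from strong convexity. The map $T\sum p\log p$ has Hessian $T\,\mathrm{diag}(1/p_j)\succeq T\,\mathbf{I}$ on the simplex, because each $p_j\le 1$. So $L^{(i)}$ is $T$-strongly convex in $\ell_2$ there. At the constrained minimizer $\mathbf{p}'$, the first-order optimality condition then gives $L^{(i)}(\mathbf{p})\ge L^{(i)}(\mathbf{p}')+\tfrac T2\|\mathbf{p}-\mathbf{p}'\|_2^2$. This is exactly the claimed bound, and the plan uses it as a backup route.
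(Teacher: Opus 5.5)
Your proposal is correct and follows the paper's own route: the exact identity $L^{(i)}(\mathbf{p}')-L^{(i)}(\mathbf{p})=-T\,\mathrm{KL}(\mathbf{p}\,\|\,\mathbf{p}')$, followed by Pinsker's inequality and $\|\cdot\|_1\ge\|\cdot\|_2$. Your substitution $\ell^{(i)}_j=T\log p'_j+T\log Z$ has the correct sign (the paper's displayed substitution writes $-\log p'_j$, a slip that does not affect its final identity), and your strong-convexity argument is a valid independent backup.
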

\noindent\textbf{Proof:}
At a minimum, $ L ^{(i)}$ satisfies
\begin{align}
 L ^{(i)}(\mathbf{p}'^{(i)}) =- T \log \sum_{j_i} \exp\left( \ell^{(i)}_{j_i} / T \right).
\end{align}
For any $\mathbf{p}^{(i)} \in \Delta^{m_i - 1}$, we compute the loss function difference as
\begin{align}
 L ^{(i)}(\mathbf{p}'^{(i)}) -  L ^{(i)}(\mathbf{p}^{(i)})
&= - \log \sum_{j_i} \exp\left( \ell^{(i)}_{j_i} / T \right)
+ \sum_{j_i} p^{(i)}_{j_i}   \ell^{(i)}_{j_i}
- T \sum_{j_i} p^{(i)}_{j_i} \log p^{(i)}_{j_i}.
\end{align}
Using softmax equation, $\ell^{(i)}_{j_i}$ can be expressed as
\begin{align}
\ell^{(i)}_{j_i} = T \left( \log \sum_{j_i} \exp\left( \ell^{(i)}_{j_i} / T \right) - \log p'^{(i)}_{j_i} \right),
\end{align}
we substitute the above equation into the loss function difference and obtain
\begin{align}
 L ^{(i)}(\mathbf{p}'^{(i)}) -  L ^{(i)}(\mathbf{p}^{(i)})
=  - T  \mathrm{KL}\left( \mathbf{p}^{(i)}  \|  \mathbf{p}'^{(i)} \right),
\end{align}
where $\mathrm{KL}(\cdot \| \cdot)$ denotes the Kullback–Leibler (KL) divergence
\begin{align}
\mathrm{KL}\left( \mathbf{p}^{(i)}  \|  \mathbf{p}'^{(i)} \right)
= \sum_{j_i} p^{(i)}_{j_i} \log \left( \frac{p^{(i)}_{j_i}}{p'^{(i)}_{j_i}} \right).
\end{align}
Applying Pinsker’s inequality and using $\|\cdot\|_1 \ge \|\cdot\|_2$, we obtain
\begin{align}
\mathrm{KL}\left( \mathbf{p}^{(i)}  \|  \mathbf{p}'^{(i)} \right)
 \ge  \frac{1}{2} \left\| \mathbf{p}^{(i)} - \mathbf{p}'^{(i)} \right\|_1^2
 \ge  \frac{1}{2} \left\| \mathbf{p}^{(i)} - \mathbf{p}'^{(i)} \right\|_2^2,
\end{align}
we conclude
\begin{align}
 L ^{(i)}(\mathbf{p}'^{(i)}) -  L ^{(i)}(\mathbf{p}^{(i)})
 \le  -\frac{T}{2} \left\| \mathbf{p}'^{(i)} - \mathbf{p}^{(i)} \right\|_2^2.
\end{align}

\noindent\textbf{Proof of Remark~\ref{remark:phaseandbounded}:}
\\
(1) Set $\mathbf{U}'^{(i)} = e^{i\theta_i}\mathbf{U}^{(i)}$. Then
\begin{align}
\mathbf{U}'^{(s)}\rho_0 \mathbf{U}'^{(s)\dag}
= e^{i\sum_i \theta_i} \mathbf{U}^{(s)} \rho_0 \mathbf{U}^{(s)\dag} e^{-i \sum_i \theta_i}
= \mathbf{U}^{(s)} \rho_0 \mathbf{U}^{(s)\dag},
\end{align}
so the payoff is unchanged.
\\
(2) Let $\rho = \mathbf{U}^{(s)}\rho_0 \mathbf{U}^{(s)\dag}$. 
$\rho$  is positive definite and  satisfies   $ \Tr(\rho)=1$. The map $\rho \mapsto  \Tr(\mathbf{R}_i \rho)$ is linear, and the set of density matrices  forms the convex hull of rank-1 projectors. Because  $ \Tr(\mathbf{R}_i \rho)$ is a convex combination of eigenvalues of $\mathbf{R}_i$ in an appropriate basis, we have
\begin{align}
\lambda_{\min}(\mathbf{R}_i)  \le   \Tr(\mathbf{R}_i \rho)  \le  \lambda_{\max}(\mathbf{R}_i).
\end{align}
For the operator-norm bound, we use the Schatten norm properties of operators and we obtain
\begin{align}
\big| \Tr(\mathbf{R}_i \rho)\big|
 \le  \|\mathbf{R}_i\|_{\text{op}} \|\rho\|
 =  \|\mathbf{R}_i\|_{\text{op}}  \Tr(\rho)
 =  \|\mathbf{R}_i\|_{\text{op}}.
\end{align}
Both bounds are tight. Equality happens when choosing $\rho$ as the projector onto an eigenvector of $\mathbf{R}_i$ with eigenvalue $\lambda_{\max}$ or $\lambda_{\min}$. For mixed strategies, the averaged state
\begin{align}
\bar\rho  =  \sum_{\mathbf{j}} p_{\mathbf{j}}  \mathbf{U}_{\mathbf{j}}^{(s)} \rho_0 \mathbf{U}_{\mathbf{j}}^{(s)\dag}
\end{align}
is again a density matrix (convex combination of density matrices). Replacing $\rho$ by $\bar\rho$ yields the same bounds for $\bar r^{(i)}$.

\noindent\textbf{Proof of Theorem~\ref{thm:EEWL-Glicksberg}:}\\
The unitary group $\mathcal{U}(d_i)$ is compact (closed and bounded in finite dimension).  
The set of mixed strategies for player $i$ is convex, closed, and compact.  
The expected payoff function is continuous, bounded, and linear in each player’s own mixed strategy.  
These properties satisfy the hypotheses of Glicksberg’s generalization of Nash’s theorem  
(compact action sets, continuous payoffs, and linearity in one’s own mixture)~\cite{glicksberg1952further}.  
Therefore, there is  at least one  mixed strategy Nash equilibrium in the EEWL quantum game.

\noindent\textbf{Proof of Theorem~\ref{theorem:calculateG}}:\\
(1)  In the first update of USMEA, $\mathbf{p}^{(i)}$ is fixed, so $H(\mathbf{p}^{(i)})$ remains constant and its derivative with respect to $\mathbf{U}^{(i)}_{j_i}$ vanishes.
Therefore, when computing the gradient of $ L ^{(i)}$, only the term $-\bar r^{(i)}$ contributes.  
Since Riemannian gradients are linear, we obtain (i).
\\
(2) The payoff function is real-valued but depends on complex-valued parameters.
First, we calculate the Euclidean gradient of $r^{(i)}(\mathbf{U}^{(s)})$ with respect to $\mathbf{U}^{(i)}$. 
We use the Wirtinger derivatives (Theorem~\ref{theorem:Wirtinger-gradient}) for real-valued functions with complex variables. We compute the gradient with respect to $\bar{U}_{mn}^{(i)}$. We treat $U_{mn}^{(i)}$ and $\bar{U}_{mn}^{(i)}$ as independent variables. 
For convenience, we define $\mathbf{X} = \mathbf{R}_i \mathbf{U} \rho_0$. The derivative of the real-valued function $r^{(i)}$ with respect to $\bar{U}_{mn}^{(i)}$ is then given by
\begin{align}
\frac{\partial  r^{(i)} (\mathbf{U}^{(\text{s})})}{\partial \bar U_{m,n}^{(i)}} &= 
\frac{\partial }{\partial \bar U_{m,n}^{(i)}} 
\Big(
\sum\limits_{\scriptstyle j_1,j_2,\ldots,j_N \hfill\atop
\scriptstyle j'_1,j'_2,\ldots,j'_N \hfill}
X_{j_1,\ldots,j_i,\ldots j_N,j'_1,\ldots,j'_i,\ldots,j'_N}
\bar U_{j_1,j'_1}^{(1)} \bar U_{j_2,j'_2}^{(2)}  \ldots  \bar U_{j_N,j'_N}^{(N)}
\Big) \nonumber \\
&= 
\sum\limits_{\scriptstyle j_1,j_2,\ldots,j_N \hfill\atop
\scriptstyle j'_1,j'_2,\ldots,j'_N \hfill}
X_{j_1,\ldots,j_i,\ldots j_N,j'_1,\ldots,j'_i,\ldots,j'_N}
\frac{\partial }{\partial \bar U_{m,n}^{(i)}} 
\Big(\bar U_{j_1,j'_1}^{(1)} \bar U_{j_2,j'_2}^{(2)}  \ldots   \bar U_{j_N,j'_N}^{(N)}
\Big). 
\end{align}
By applying the chain rule, we have
\begin{align}
\frac{\partial  r^{(i)} (\mathbf{U}^{(\text{s})})}{\partial \bar U_{m,n}^{(i)}} &= 
\sum\limits_{\scriptstyle j_1,j_2,\ldots,j_N \hfill\atop
\scriptstyle j'_1,j'_2,\ldots,j'_N \hfill}
R_{j_1,\ldots,j_i,\ldots j_N,j'_1,\ldots,j'_i,\ldots,j'_N}
\sum_{k=1}^N 
\Big( \frac{\partial }{\partial \bar U_{m,n}^{(i)}}  { \bar U_{j_k,j'_k}^{(k)}} \Big)  \prod_{l \ne k} {\bar  U_{j_l,j'_l}^{(l)}} 
\nonumber \\
&=
\sum\limits_{\scriptstyle j_1,j_2,\ldots,j_N \hfill\atop
\scriptstyle j'_1,j'_2,\ldots,j'_N \hfill}
X_{j_1,\ldots,j_i,\ldots j_N,j'_1,\ldots,j'_i,\ldots,j'_N}
\sum_{k=1}^N 
\Big(\delta_{m,j_k}\delta_{n,j'_k}\delta_{i,k}\Big)  \prod_{l \ne k} { \bar U_{j_l,j'_l}^{(l)}} 
\nonumber \\
&=
\sum\limits_{\scriptstyle\{ j_1,\ldots,j_{m-1},j_{m+1},\ldots,j_N\} \hfill\atop
\scriptstyle\{ j'_1,\ldots,j'_{n-1},j'_{n+1},\ldots,j'_N\} \hfill}
X_{j_1,\ldots,m,\ldots j_N,j'_1,\ldots,n,\ldots,j'_N}
\prod_{l \ne i} { \bar U_{j_l,j'_l}^{(l)}}. 
\end{align}
This expression simplifies to the $(m,n)$-th element of the partial trace and can be expressed as 
\begin{align}
      \frac{\partial  r^{(i)} (\mathbf{U}^{(\text{s})})}{\partial \bar U_{mn}^{(i)}} &= 
      \Big[  \Tr_{- i}\Big(\mathbf{X}  \mathbf{U}^{(\text{s})\dag}_{-i} \Big) \Big]_{mn}.
\end{align}
The derivative with respect to $\mathbf{U}^{(i)}$ is given by
\begin{align}
\frac{\partial  r^{(i)}(\mathbf{U}^{(\text{s})})}{\partial \bar{\mathbf{U}}^{(i)}} &= 
 \Tr_{-i}\Big(\mathbf{X} \mathbf{U}^{(\text{s})\dag}_{-i} \Big).
\end{align}
By substituting $\mathbf{X} = \mathbf{R}_i \mathbf{U} \rho_0$ into the equation and using Theorem~\ref{theorem:Wirtinger-gradient}, we have
\begin{align}
 \nabla_{\mathbf{U}^{(i)}}  {r}^{(i)} = 2 
\frac{\partial   r^{(i)} (\mathbf{U}^{(\text{s})})}{\partial \bar{\mathbf{U}}{(i)}} = 2
       \Tr_{- i}\Big(
      \mathbf{R}_i
        \mathbf{U}^{(\text{s})}
      \rho_0
      \mathbf{U}^{(\text{s})\dag}_{-i} \Big).
\end{align}
The Riemannian gradient of the payoff function with respect to the unitary operator $\mathbf{U}^{(i)}$ is given by (Ref.~\cite{abrudan2005optimization})
\begin{align}
    2 \mathbf{g}^{(i)} &=  
    \left( \nabla_{\mathbf{U}^{(i)}}  {r}^{(i)}\right)  \boldsymbol{U}^{(i)\dag} - 
    \boldsymbol{U}^{(i)} \left( \nabla_{\mathbf{U}^{(i)}}  {r}^{(i)}\right)^\dag,
\end{align}
which is an antisymmetric matrix. 
By performing straightforward  calculations, we obtain
\begin{align}
   \mathbf{g}^{(i)} =    \Tr_{-{i}}
    \big(
    \big[\mathbf{R}_i, 
     \mathbf{U}^{(\text{s})}
    \rho_0 
     \mathbf{U}^{{(\text{s})}\dag} \big]\big).
\end{align}
A similar calculation for the mixed strategy leads to
\begin{align}
\mathbf{G}^{(i)}_{j_i}&=
\sum_{\mathbf{j}_{-i}} p_{j_i;\mathbf{j}_{-i}}
 \Tr_{-i}\Big([\mathbf{R}_i , \mathbf{U}^{(s)}_{j_i;\mathbf{j}_{-i}} \rho_0 \mathbf{U}^{(s)\dag}_{j_i;\mathbf{j}_{-i}} ]\Big).
\end{align}

\noindent\textbf{Proof of Theorem~\ref{thm:block-lip}:} \\
Consider two unitary actions $\mathbf{U}^{(i)}_{j_i}$ and $\mathbf{U}'^{(i)}_{j_i}$ for player $i$ with action $j_i$. These actions can be connected via a geodesic path:
$
\mathbf{U}^{(i)}(t) = \exp(t \mathbf{X}^{(i)}) \mathbf{U}^{(i)},
$
where $\mathbf{X}^{(i)}= -\mathbf{X}^{(i)\dag}$  such that $\|\mathbf{X}^{(i)}\|_F = d_{\mathcal{U}}\big(\mathbf{U}^{(i)},\mathbf{U}'^{(i)}\big)$ and $t \in [0,1]$.
For each $t$, let $\mathbf{U}^{(s)}(t)$  the joint strategy profile and the evolved density matrix 
$
\rho(t) = \mathbf{U}^{(s)}(t)  \rho_0  \mathbf{U}^{(s)\dagger}(t).
$
The Riemannian gradient of the payoff over  $\mathbf{U}^{(i)}(t)$ (block $i$) at time $t$ is
\begin{align}
\mathbf{g}^{(i)}(t)
= \Tr_{-i} \left( \left[\mathbf{R}_i, \rho(t)\right] \right).
\end{align}
The Riemannian gradient of expected reward over action $j_i$ is then given by
\begin{align}
\mathbf{G}^{(i)}_{j_i}(t)
= \sum_{\mathbf{j}_{-i}} p_{j_i,\mathbf{j}_{-i}}
  \mathbf{g}^{(i)}_{j_i,\mathbf{j}_{-i}}(t).
\end{align}
\\
(1) By differentiating $\mathbf{g}^{(i)}(t)$ along the geodesic, we obtain:
\begin{align}
\label{eq:dif-g}
\frac{d}{dt} \mathbf{g}^{(i)}(t)
&= \Tr_{-i} \left( \left[
\mathbf{R}_i , \mathbf{X}^{(i)} \rho(t)
\right] + \left[
\mathbf{R}_i , \rho(t) \mathbf{X}^{(i)\dagger}
\right] \right) \nonumber\\
&= \Tr_{-i} \left( \left[
\mathbf{R}_i , \mathbf{X}^{(i)} \rho(t)
\right] - \left[
\mathbf{R}_i , \rho(t) \mathbf{X}^{(i)}
\right] \right) \nonumber\\
&= \Tr_{-i} \left( \left[\mathbf{R}_i, \left[\mathbf{X}^{(i)}, \rho(t)\right] \right] \right).
\end{align}
By Lemma~\ref{lemma:partial-trace-bound} and the bound
$\|[\mathbf{A}, \mathbf{B}]\|_F \le 2 \|\mathbf{A}\|_{\mathrm{op}} \|\mathbf{B}\|_F$, we get
\begin{align}
\left\| \frac{d}{dt} \mathbf{g}^{(i)}(t) \right\|_F
&\le \sqrt{ d_{-i}}  \left\| \left[ \mathbf{R}_i, \left[\mathbf{X}^{(i)},  \rho(t)\right] \right] \right\|_F \nonumber\\
&\le 2 \sqrt{ d_{-i}}  \|\mathbf{R}_i\|_{\mathrm{op}}  \left\| [\mathbf{X}^{(i)},  \rho(t)] \right\|_F \nonumber\\
&\le 4 \sqrt{ d_{-i}}  \|\mathbf{R}_i\|_{\mathrm{op}}  \|\mathbf{X}^{(i)}\|_{\mathrm{op}}  \|\rho(t)\|_F \nonumber\\
&\le 4 \sqrt{ d_{-i}}  \|\mathbf{R}_i\|_{\mathrm{op}}  \|\mathbf{X}^{(i)}\|_{\mathrm{op}}  \|\rho_0\|_F \nonumber\\
&\le 4 \sqrt{ d_{-i}}  \|\mathbf{R}_i\|_{\mathrm{op}}  \|\mathbf{X}^{(i)}\|_F  \|\rho_0\|_F.
\end{align}
Using Jensen's inequality for expectations, $\|\mathbb{E}(\mathbf{A})\|_F \le \mathbb{E}(\|\mathbf{A}\|_F)$, and integrating over $t \in [0, 1]$, we obtain
\begin{align}
\left\| \mathbf{G}^{(i)}_{j_i}(1) - \mathbf{G}^{(i)}_{j_i}(0) \right\|_F
&\le \int_0^1 \left\| \frac{d}{dt} \mathbf{G}^{(i)}_{j_i}(t) \right\|_F dt \nonumber\\
&\le 4  \sqrt{ d_{-i}}  \|\mathbf{R}_i\|_{\mathrm{op}}  \|\rho_0\|_F  \|\mathbf{X}^{(i)}\|_F.
\end{align}
 Finally, recalling the distance defined in Eq.~\eqref{eq:distanceunitary} as $d_{\mathcal{U}} = \|\mathbf{X}^{(i)}\|_F$, we conclude the proof of part (i) as
\begin{align}
\left\| \mathbf{G}^{(i)}_{j_i}(\mathbf{U}^{(i)}_{j_i}) - \mathbf{G}^{(i)}_{j_i}(\mathbf{U}'^{(i)}_{j_i}) \right\|_F
\le A_i  d_{\mathcal{U}}\left(\mathbf{U}^{(i)}_{j_i},  \mathbf{U}'^{(i)}_{j_i}\right),
\end{align}
where
\begin{align}
A_i = 4  \sqrt{ d_{-i}}  \|\mathbf{R}_i\|_{\mathrm{op}}  \|\rho_0\|_F.
\end{align}
\\
(2) The per-action payoff $\ell^{(i)}_{j_i}(t)$ can be expressed as
\begin{align}
\ell^{(i)}_{j_i}(t)
&= \sum_{\mathbf{j}_{-i}} p_{j_i, \mathbf{j}_{-i}}   r^{(i)}\left( \mathbf{U}^{(s)}_{j_i, \mathbf{j}_{-i}}(t) \right) \nonumber \\
&= \sum_{\mathbf{j}_{-i}} p_{j_i, \mathbf{j}_{-i}}   \Tr \left( \mathbf{R}_i   \rho_{j_i, \mathbf{j}_{-i}}(t) \right).
\end{align}
Differentiating $\ell^{(i)}_{j_i}(t)$ along the geodesic yields
\begin{align}
\frac{d}{dt}   \ell^{(i)}_{j_i}(t)
&= \sum_{\mathbf{j}_{-i}} p_{j_i, \mathbf{j}_{-i}}   \Tr \left( \mathbf{R}_i \left[ \mathbf{X}^{(i)}, \rho_{j_i, \mathbf{j}_{-i}}(t) \right] \right) \nonumber \\
&= - \sum_{\mathbf{j}_{-i}} p_{j_i, \mathbf{j}_{-i}}   \Tr \left( \left[ \mathbf{R}_i, \rho_{j_i, \mathbf{j}_{-i}}(t) \right]   \mathbf{X}^{(i)} \right) \nonumber \\
&= \sum_{\mathbf{j}_{-i}} p_{j_i, \mathbf{j}_{-i}}   \Tr \left( \mathbf{X}^{(i)\dagger}   [\mathbf{R}_i, \rho_{j_i, \mathbf{j}_{-i}}(t)] \right) \nonumber \\
&= \left\langle \mathbf{X}^{(i)}, \sum_{\mathbf{j}_{-i}} p_{j_i, \mathbf{j}_{-i}}   \Tr_{-i} \left( [\mathbf{R}_i, \rho_{j_i, \mathbf{j}_{-i}}(t)] \right) \right\rangle_{\text{HS}} \nonumber \\
&= \left\langle \mathbf{X}^{(i)},   \mathbf{G}^{(i)}_{j_i}(t) \right\rangle_{\text{HS}}.
\end{align}
\\
Applying the Cauchy–Schwarz inequality gives
\begin{align}
\label{eq:dldt}
\left| \frac{d}{dt}   \ell^{(i)}_{j_i}(t) \right|
= \left| \left\langle \mathbf{X}^{(i)},   \mathbf{G}^{(i)}_{j_i}(t) \right\rangle_{\text{HS}} \right|
\le \big\| \mathbf{G}^{(i)}_{j_i}(t) \big\|_F  \|\mathbf{X}^{(i)}\|_F.
\end{align}
\\
To bound $\|\mathbf{G}^{(i)}_{j_i}(t)\|_F$, we first bound the pure-strategy case. For each $\mathbf{j}_{-i}$, Lemma~\ref{lemma:partial-trace-bound} and commutator norms imply
\begin{align}
\left\| \Tr_{-i} \left( [\mathbf{R}_i, \rho_{j_i, \mathbf{j}_{-i}}(t)] \right) \right\|_F
&\le \sqrt{ d_{-i}}   \left\| [\mathbf{R}_i, \rho_{j_i, \mathbf{j}_{-i}}(t)] \right\|_F \nonumber \\
&\le 2  \sqrt{ d_{-i}}   \|\mathbf{R}_i\|_{\mathrm{op}}   \|\rho_{j_i, \mathbf{j}_{-i}}(t)\|_F \nonumber \\
&\le 2  \sqrt{ d_{-i}}   \|\mathbf{R}_i\|_{\mathrm{op}}   \|\rho_0\|_F.
\end{align}
\\
Taking expectation over $\mathbf{j}_{-i}$ and using Jensen’s inequality
\begin{align}
\label{eq:absG}
\left\| \mathbf{G}^{(i)}_{j_i}(t) \right\|_F
\le 2  \sqrt{ d_{-i}}  \|\mathbf{R}_i\|_{\mathrm{op}}   \|\rho_0\|_F.
\end{align}
\\
By integrating Eq.~\eqref{eq:dldt} from $t = 0$ to $t = 1$, and applying the bound from Eq.~\eqref{eq:absG}, we obtain
\begin{align}
\left| \ell^{(i)}_{j_i}(1) - \ell^{(i)}_{j_i}(0) \right|
&\le \int_0^1 \left\| \mathbf{G}^{(i)}_{j_i}(t) \right\|_F   \|\mathbf{X}^{(i)}\|_F   dt \nonumber \\
&\le 2  \sqrt{ d_{-i}}  \|\mathbf{R}_i\|_{\mathrm{op}}   \|\rho_0\|_F   \|\mathbf{X}^{(i)}\|_F.
\end{align}
\\
Finally, substituting $d_{\mathcal{U}} = \|\mathbf{X}^{(i)}\|_F$ and defining
\begin{align}
M_i = 2  \sqrt{ d_{-i}}  \|\mathbf{R}_i\|_{\mathrm{op}}   \|\rho_0\|_F,
\end{align}
we conclude Eq.~\eqref{eq:lLipschitz}.

\noindent\textbf{Proof of Theorem~\ref{thm:bundle-descent-usmea}}\\
First, fix $\mathbf{p}^{(i)}$ and modify $\{\mathbf{U}^{(i)}_{j_i}\}$.
Using ~\cite[Lemma~3.7]{malvetti2024randomized}, we have
\begin{align}
\label{eq:unitary-gain}
 L ^{(i)}\left((\{\mathbf{U}'^{(i)}_{j_i}\}, \mathbf{p}^{(i)});z^{(-i)}\right)
-  L ^{(i)}\left((\{\mathbf{U}^{(i)}_{j_i}\}, \mathbf{p}^{(i)});z^{(-i)}\right) 
\le 
- \eta_i \left(1-\tfrac{1}{2}\eta_i A_i\right)\Delta_{\mathcal{U}}^{(i)2},
\end{align}
Next, update $\mathbf{p}^{(i)}$ while keeping the $\{\mathbf{U}'^{(i)}_{j_i}\}$ fixed. Using Proposition~\ref{prop:softmax-kl}, we get
\begin{align}
\label{eq:prob-gain}
 L ^{(i)}\left((\{\mathbf{U}'^{(i)}_{j_i}\}, \mathbf{p}'^{(i)});z^{(-i)}\right)
-  L ^{(i)}\left((\{\mathbf{U}'^{(i)}_{j_i}\}, \mathbf{p}^{(i)});z^{(-i)}\right)
\le -\frac{T}{2} \left\| \mathbf{p}'^{(i)} - \mathbf{p}^{(i)} \right\|_2^2.
\end{align}
Adding inequalities Eqs.~\eqref{eq:unitary-gain} and~\eqref{eq:prob-gain} gives the bound in Eq.~\eqref{eq:bundle-descent-usmea}.
Since both terms on the right-hand side are nonnegative for $\eta_i \le 1/A_i$, monotonicity of $ L ^{(i)}$ follows.

\noindent\textbf{Proof of Theorem~\ref{theorem:BR-USMEA}:}\\
Fix $z^{(-i)}$.
The function $ L ^{(i)} \left(z^{(i)};z^{(-i)}\right)$ is bounded as
\begin{align}
   \left\|  L ^{(i)} \left(z^{(i)};z^{(-i)}\right) \right\|
    &\le \left\|\sum_{j_i} p^{(i)}_{j_i} \ell^{(i)}_{j_i} \right\|
        + \left\| T\sum_{j_i} p^{(i)}_{j_i}\log p^{(i)}_{j_i} \right\|
        \nonumber \\
    &\le \sum_{j_i} p^{(i)}_{j_i} \|R_i\|_{\op} 
        + \left\| T\sum_{j_i} p^{(i)}_{j_i}\log p^{(i)}_{j_i} \right\| \nonumber\\
    &\le \|R_i\|_{\op} + T\log m_i.
\end{align}
Moreover, by Theorem~\ref{thm:bundle-descent-usmea}, if $\eta_i \le 1/A_i$, then after each update at step $k$,
\begin{align}
\label{eq:bundle-descent-usmea-the}
 L ^{(i)} \left(z^{(i)[t+1]};z^{(-i)}\right)
-
 L ^{(i)} \left(z^{(i)[t]};z^{(-i)}\right)
&\le
-\eta_i \left(1 - \tfrac{1}{2} \eta_i A_i \right) 
    \sum_{j=1}^{m_i} \big\| \mathbf{G}^{(i)[t]}_{j_i} \big\|_F^2
-
\frac{T}{2} \big\| \mathbf{p}^{(i)[t+1]} - \mathbf{p}^{(i)[t]} \big\|_2^2 \\
&\le 
\eta_i \left(1 - \tfrac{1}{2} \eta_i A_i \right) 
    \sum_{j=1}^{m_i} \big\| \mathbf{G}^{(i)[t]}_{j_i} \big\|_F^2 .
\nonumber
\end{align}
Since the last expression is non-positive, 
$ L ^{(i)}(z^{(i)[t]};z^{(-i)})$ 
is non-decreasing.  
Summing Eq.~\eqref{eq:bundle-descent-usmea-the} over $k$ and using the boundedness of $ L ^{(i)}$, we obtain
\begin{align}
\sum_{k=0}^\infty \sum_{j_i=1}^{m_i}\big\|\mathbf{G}^{(i)[t]}_{j_i}\big\|_F^2 < \infty,
\quad \Rightarrow \quad 
\liminf_{k\to\infty} \max_{j_i} 
\big\|\mathbf{G}^{(i)[t]}_{j_i}\big\|_F = 0.
\end{align}
Since $ L ^{(i)}$ is real-analytic on the compact manifold $\mathcal{M}$, 
it satisfies the Kurdyka--Łojasiewicz (KL) property.  
By the KL convergence principle~\cite{absil2008optimization,attouch2013convergence},  
the sequence $\{z^{(i)[t]}\}_t$ converges to a critical point 
$z^{(i)*}$ of $ L ^{(i)}$, i.e.,
\begin{align}
\mathbf{G}^{(i)}_{j_i} \left(\mathbf{U}^{(i)*}_{j_i}, \mathbf{p}^{(i)*}\right)=0,
\quad \forall j_i.
\end{align}
Consequently, the optimal probabilities are given by the softmax.

\noindent\textbf{Proof of Corollary~\ref{cor:oscillation}:}\\
Since $H^{[t]} \in (0,\log m_i)$ for all $k$, for any $k$ and $l$ we have
\begin{align}
\big|\bar r^{(i)[t]} - \bar r^{(i)[l]}\big|
&= \big|\big( L ^{(i)[t]} -  L ^{(i)[l]}\big) + T\big(H^{[t]} - H^{[l]}\big)\big| \nonumber\\
&\le | L ^{(i)[t]} -  L ^{(i)[l]}| + T |H^{[t]} - H^{[l]}|.
\end{align}
If $\{ L ^{(i)[t]}\}$ converges, the first term vanishes.  
Thus,
\begin{align}
\limsup_{k} \bar r^{(i)[t]} - \liminf_{k} \bar r^{(i)[t]} 
 \le  T\big(\max H - \min H\big) 
 \le  T \log m_i.
\end{align}

\noindent\textbf{Proof of Theorem~\ref{theorem:exist-fp}}\\
Let $\{z^{[t,0]}\}_{t \ge 0}$ be the sequence of iterates generated by the  iteration map, initialized at $z^{[0,0]}$.
We first prove the existence of accumulation points (i.e., limit points).  
Since the manifold $\mathcal{M}$ is compact and the iteration sequence $\{z^{[t,0]}\} \subset \mathcal{M}$ is infinite,  
by sequential compactness there exists a convergent subsequence $\{z^{[ t,0]}\}_{t \ge 0}$ and a point $z^\star \in \mathcal{M}$ such that  
$
\lim_{t \to \infty} z^{[ t,0]} = z^\star,
$
as guaranteed by standard results in manifold optimization~\cite{absil2008optimization}.  
Therefore, $z^\star$ is an accumulation point of the full sequence $\{z^{[t,0]}\}$.
Next, we argue that the map $\mathcal{T} = \mathcal{T}_N \circ \cdots \circ \mathcal{T}_1$ is continuous.  
This follows directly from Theorem~\ref{thm:block-lip}, which ensures that the Riemannian gradient update over the unitary group is Lipschitz continuous,  
and the softmax update over the simplex is smooth. As a result, each block update map $\mathcal{T}_i$ is continuous, and so is their composition $\mathcal{T}$ is also continuous.
\\
Finally, we prove that any accumulation point must be a fixed point.  
Assume, for contradiction, that $z^\star$ is not a fixed point of $\mathcal{T}$, meaning that for some player $i$,  
the block update at $z^\star$ is not stationary.  
That is, either $\nabla_{U^{(i)}_{j_i}}  L ^{(i)}(z^\star) \ne 0$ or $p^{(i)} \ne \sigma_T(\ell^{(i)})$.  
From Theorem~\ref{thm:bundle-descent-usmea}, it then follows that there exists a neighborhood around such a point $\tilde z^\star_i$  
in which player $i$'s update yields a strict decrease in loss of at least some fixed constant $\delta > 0$.
\\
Because $\{z^{[ t,0]}\}$ converges to $z^\star$, for sufficiently large $t$,  
 $z^{[t+1,i-1]}$ lies within this neighborhood.  
As a result, the loss satisfies
\begin{align}
 L ^{(i)}(z^{[t+1, i]}) \le  L ^{(i)}(z^{[t+1, i-1]}) - \delta,
\qquad \text{for large } t.
\end{align}
This contradicts the convergence of the loss sequence $ L ^{(i)}(z^{[ t, i]})$ as established in Theorem~\ref{theorem:BR-USMEA}.  
Hence, each local update at the accumulation point must be block-stationary. That is,
\begin{align}
\mathcal{T}_i(\tilde z^\star_{i-1}) = \tilde z^\star_i, \qquad \text{with } \tilde z^\star_0 = z^\star.
\end{align}
Recursively applying this for all $i \in \mathcal N$ yields
$
\mathcal{T}(z^\star) 
= \mathcal{T}_N \circ \cdots \circ \mathcal{T}_1(z^\star) 
= \tilde z^\star_N 
= z^\star
$,
which shows that $z^\star$ is a fixed point of $\mathcal{T}$.

\noindent\textbf{Proof of Theorem~\ref{theorem:convergece_all_local}:}\\
The proof proceeds in three steps. First, we identify the neutral directions associated with the eigenvalue $1$ and separate them from the remaining spectrum. Second, we introduce local coordinates near the fixed-point submanifold and derive a suitable local representation of $\mathcal T$. Third, we show convergence of the iterates.
\\
\textbf{Step 1:}
Since $\mathcal M^{(\mathrm{FP})}$ is a manifold of fixed points, for any smooth curve $z(t)\subset \mathcal M^{(\mathrm{FP})}$ with $z(0)=z^\star$ we have
$
	\mathcal T(z(t))=z(t).
$
Differentiating at $t=0$ gives
$
	D\mathcal T(z^\star)z'(0)=z'(0).
$
Hence every tangent vector to $\mathcal M^{(\mathrm{FP})}$ at $z^\star$ is an eigenvector of $D\mathcal T(z^\star)$ associated with the eigenvalue $1$. Therefore,
$
	T_{z^\star}\mathcal M^{(\mathrm{FP})} \subseteq \mathcal E_1,
$
where $\mathcal E_1$ denotes the eigenspace of $D\mathcal T(z^\star)$ corresponding to $\lambda=1$.
By assumption~(1),
$
	\dim T_{z^\star}\mathcal M^{(\mathrm{FP})}=l.
$
By assumption~(2), the eigenspace $\mathcal E_1$ also has dimension $l$. Thus,
$
	T_{z^\star}\mathcal M^{(\mathrm{FP})}=\mathcal E_1.
$
Therefore, the neutral directions of the linearized USMEA dynamics are exactly the tangent directions of the fixed-point submanifold $\mathcal M^{(\mathrm{FP})}$.
Since the eigenvalue $1$ is semisimple and all other eigenvalues satisfy $|\lambda|<1$, the tangent space $T_{z^\star}\mathcal M$ admits the invariant decomposition
\begin{align}
	T_{z^\star}\mathcal M = \mathcal E_1 \oplus \mathcal E_s,
\end{align}
where $\mathcal E_s$ is the invariant subspace corresponding with the eigenvalues strictly inside the unit disk. Since $\mathcal E_1=T_{z^\star}\mathcal M^{(\mathrm{FP})}$, the subspace $\mathcal E_s$ can be regarded as the contracting normal space.
\\
\textbf{Step 2:} 
Choose local coordinates $(u,v)$ near $z^\star$ such that
$
T_{z^\star}\mathcal M=\mathcal E_1\oplus \mathcal E_s,
$
where
$
u\in \mathbb R^l
$
parametrizes directions tangent to $\mathcal M^{(\mathrm{FP})}$ and
$
v\in \mathbb R^r
$
parametrizes directions in $\mathcal E_s$. 
Since $\mathcal M^{(\mathrm{FP})}$ is a $C^1$ embedded submanifold, these coordinates may be chosen so that
$
\mathcal M^{(\mathrm{FP})}=\{(u,v): v=0\}.
$
In these coordinates, the local representative of $\mathcal T$ can be written as \cite{Eldering2018,NippStoffer1992}
\begin{align}
	\widetilde{\mathcal T}(u,v)=\bigl(u+a(u,v), Bv+b(u,v)\bigr),
\end{align}
with
$
	a(u,0)=0, b(u,0)=0
$,
and where the spectrum of $B$ consists of the eigenvalues of $D\mathcal T(z^\star)$ other than $1$.
By assumption~(3), every eigenvalue of $B$ lies strictly inside the unit disk. Hence the spectral radius satisfies
$r(B)<1$.
Since the space is finite-dimensional, there exists an equivalent norm $\|\cdot\|$ on $\mathbb R^r$ and a constant $q\in(0,1)$ such that \cite{HornJohnson2013,Rudin1991}
\begin{align}
	\|Bv\|\le q\|v\|,
	\qquad \forall v\in \mathbb R^r.
\end{align}
\\
Moreover, since $\widetilde{\mathcal T}$ is smooth, after shrinking the neighborhood, there exists a constant $c>0$ such that
\begin{align}
	\|a(u,v)\|\le c\|v\|,
	\qquad
	\|b(u,v)\|\le c\|v\|^2.
\end{align}
\\
\textbf{Step 3:} Convergence of the iterates:\\
Let
\begin{align}
	(u^{[k+1]},v^{[k+1]})=\widetilde{\mathcal T}(u^{[k]},v^{[k]}).
\end{align}
Then
\begin{align}
	v^{[k+1]}=Bv^{[k]}+b(u^{[k]},v^{[k]}),
\end{align}
and therefore
\begin{align}
	\|v^{[k+1]}\|
	\le
	q\|v^{[k]}\|+c\|v^{[k]}\|^2.
\end{align}
Choose $\theta\in(q,1)$. By shrinking the neighborhood once more, we may assume that
$
c\|v\|\le \theta-q
$
throughout the neighborhood. Hence,
\begin{align}
	\|v^{[k+1]}\|\le \theta \|v^{[k]}\|.
\end{align}
It follows that $v^{[k]}\to 0$ geometrically.
\\
For the tangential component, we have
\begin{align}
	u^{[k+1]}-u^{[k]}=a(u^{[k]},v^{[k]}),
\end{align}
so
\begin{align}
	\|u^{[k+1]}-u^{[k]}\|\le c\|v^{[k]}\|.
\end{align}
Since $\|v^{[k]}\|$ decays geometrically, the series
\begin{align}
	\sum_{k=0}^{\infty}\|u^{[k+1]}-u^{[k]}\|
\end{align}
converges. Therefore, $(u^{[k]})$ is a Cauchy sequence, and hence
$
	u^{[k]}\to u^{\infty}
$
for some $u^{\infty}$.
Combining this with $v^{[k]}\to 0$, we obtain
\begin{align}
	(u^{[k]},v^{[k]})\to (u^\infty,0)\in \mathcal M^{(\mathrm{FP})}.
\end{align}
Returning to the original manifold coordinates yields
$
z^{[k]}\to z^\infty\in \mathcal M^{(\mathrm{FP})}.
$
This proves the result.

\section{Derivation of the block differentials $D\mathcal T_i^{U}(z^\star)$ and $D\mathcal T_i^{p}(z^\star)$}
\label{supp:DT}
In this section, we compute the block differentials $D\mathcal T_i^{U}(z^\star)$ and $D\mathcal T_i^{p}(z^\star)$ by linearizing them at the fixed point $z^\star$. 
\subsection*{Construction of $D\mathcal T_i^{U}(z^\star)$}
Since $\mathcal T_i^{U}$ modifies only the unitary actions of player $i$, its differential is the identity on all blocks except the unitary-action block associated with player $i$.
\\
At the fixed point $z^\star$, the linearization takes the form
\begin{align}
	\delta z^{(i)+}_U
	=
	\delta z^{(i)}_U
	+
	\eta_i
	\left(
	\sum_{i'=1}^N \mathbf H_{ii'} \delta z^{(i')}_U
	+
	\sum_{i'=1}^N \mathbf B_{ii'} \delta \mathbf p^{(i')}
	\right),
\end{align}
where $\delta z^{(i)}_U$ collects the perturbations of all unitary actions of player $i$, $\mathbf H_{ii'}$ denotes the block assembled from the matrices $\mathbf H^{(i,i')}_{j_i,j_{i'}}$, and $\mathbf B_{ii'}$ denotes the block assembled from the vectors $\mathbf B^{(i,i')}_{j_i,j_{i'}}$.
\\
Therefore, $D\mathcal T_i^{U}(z^\star)$ is the identity on every block except the row corresponding to the unitary-action variables of player $i$. On that row, it is given by
\begin{align}
	\bigl[
	\eta_i \mathbf H_{i1}, 
	\eta_i \mathbf B_{i1}, 
	\dots, 
	\mathbf I+\eta_i \mathbf H_{ii}, 
	\eta_i \mathbf B_{ii}, 
	\dots, 
	\eta_i \mathbf H_{iN}, 
	\eta_i \mathbf B_{iN}
	\bigr].
\end{align}
In the following, we construct the blocks $\mathbf H_{ii'}$ and $\mathbf B_{ii'}$ in detail.
\\
\subsubsection{Construction of $\mathbf H_{ii'}$}
We now compute the derivative of the action gradient with respect to the action variable $\mathbf U^{(i')}_{j_{i'}}$. For a tangent perturbation $\mathbf X^{(i')} \in \mathcal{SU}(d_{i'})$, define
\begin{align}
	\mathbf H^{(i,i')}_{j_i,j_{i'}}[\mathbf X^{(i')}]
	:=
	D_{\mathbf U^{(i')}_{j_{i'}}}\mathbf G^{(i)}_{j_i}(z^\star)[\mathbf X^{(i')}].
\end{align}
To evaluate this derivative, we perturb the action $\mathbf U^{(i')}_{j_{i'}}$ along the tangent direction $\mathbf X^{(i')}$ according to
$
\mathbf U^{(i')}_{j_{i'}}(t)
=
\exp\!\bigl(t\mathbf X^{(i')}\bigr)\mathbf U^{(i')}_{j_{i'}}(0),
$
and differentiate with respect to $t$ at $t=0$.
\\
Using Eq.~(S35) and the mixed-strategy weights, we obtain the following expression at the fixed point $z^\star$ for $i\neq i'$:
\begin{align}
	\mathbf H^{(i,i')}_{j_i,j_{i'}}[\mathbf X^{(i')}]
	&=
	p^{(i)}_{j_i}p^{(i')}_{j_{i'}}
	\sum_{\mathbf j_{-(i,i')}}
	\left(\prod_{k\neq i,i'} p^{(k)}_{j_k}\right)
	\operatorname{Tr}_{-i}
	\left(
	\left[
	\mathbf R_i,
	\left[
	\mathbf X^{(i')}\otimes \mathbf I^{(-i')},
	\rho^\star_{j_i,j_{i'},\mathbf j_{-(i,i')}}
	\right]
	\right]
	\right).
\end{align}
Similarly, for $i=i'$, we have
\begin{align}
	\mathbf H^{(i,i)}_{j_i,j_i'}[\mathbf X^{(i)}]
	&=
	\delta_{j_i j_i'} p^{(i)}_{j_i}
	\sum_{\mathbf j_{-i}}
	\left(\prod_{k\neq i} p^{(k)}_{j_k}\right)
	\operatorname{Tr}_{-i}
	\left(
	\left[
	\mathbf R_i,
	\left[
	\mathbf X^{(i)}\otimes \mathbf I^{(-i)},
	\rho^\star_{j_i,\mathbf j_{-i}}
	\right]
	\right]
	\right).
\end{align}
To obtain a matrix representation of this linear operator, we choose orthonormal bases
\begin{align}
	\{\mathbf E^{(i)}_1,\dots,\mathbf E^{(i)}_{d_i^2-1}\}\subset \mathcal{SU}(d_i),
	\qquad \forall i,
\end{align}
with respect to the chosen inner product $\langle \mathbf A,\mathbf B\rangle$. Then the matrix entries of the operator $\mathbf H^{(i,i')}_{j_i,j_{i'}}$ are given by
\begin{align}
	\bigl[\mathbf H^{(i,i')}_{j_i,j_{i'}}\bigr]_{ab}
	=
	\left\langle
	\mathbf E_a^{(i)},
	\mathbf H^{(i,i')}_{j_i,j_{i'}}[\mathbf E_b^{(i')}]
	\right\rangle 
	=
	-\operatorname{Tr}\!\left(
	\mathbf E_a^{(i)} \mathbf H^{(i,i')}_{j_i,j_{i'}}[\mathbf E_b^{(i')}]
	\right),
\end{align}
where the last identity uses the trace inner product on $\mathcal{SU}(d_i)$. After stacking over all indices, one obtains the block $\mathbf H_{ii'}$.
\\
\subsubsection{Construction of $\mathbf B_{ii'}$}
We now compute the derivative of the action gradient with respect to the probability variable $p^{(i')}_{j_{i'}}$. Define
\begin{align}
	\mathbf B^{(i,i')}_{j_i,j_{i'}}
	:=
	D_{p^{(i')}_{j_{i'}}}\mathbf G^{(i)}_{j_i}(z^\star).
\end{align}
\\
For $i'\neq i$, differentiating with respect to $p^{(i')}_{j_{i'}}$ gives
\begin{align}
	\mathbf B^{(i,i')}_{j_i,j_{i'}}
	&=
	p^{(i)}_{j_i}
	\sum_{\mathbf j_{-(i,i')}}
	\left(\prod_{k\neq i,i'} p^{(k)}_{j_k}\right)
	\operatorname{Tr}_{-i}
	\left(
	\left[
	\mathbf R_i, 
	\rho^\star_{j_i,j_{i'},\mathbf j_{-(i,i')}}
	\right]
	\right).
\end{align}
\\
For $i'=i$, only the prefactor $p^{(i)}_{j_i}$ is differentiated, and therefore
\begin{align}
	\mathbf B^{(i,i)}_{j_i,j_i'}
	&=
	\delta_{j_i j_i'}
	\sum_{\mathbf j_{-i}}
	\left(\prod_{k\neq i} p^{(k)}_{j_k}\right)
	\operatorname{Tr}_{-i}
	\left(
	\left[
	\mathbf R_i, 
	\rho^\star_{j_i,\mathbf j_{-i}}
	\right]
	\right).
\end{align}
\\
The coordinates of $\mathbf B^{(i,i')}_{j_i,j_{i'}}$ with respect to the basis $\{\mathbf E_a^{(i)}\}$ are
\begin{align}
	\bigl[\mathbf B^{(i,i')}_{j_i,j_{i'}}\bigr]_a
	=
	\left\langle
	\mathbf E_a^{(i)},
	\mathbf B^{(i,i')}_{j_i,j_{i'}}
	\right\rangle 
	=
	-\operatorname{Tr}\!\left(
	\mathbf E_a^{(i)} 
	\mathbf B^{(i,i')}_{j_i,j_{i'}}
	\right).
\end{align}
Finally, after stacking over all indices, one obtains the block $\mathbf B_{ii'}$.
\\
\subsection*{Construction of $D\mathcal T_i^{p}(z^\star)$}
Since $\mathcal T_i^{p}$ modifies only the probability vector of player $i$, its differential is the identity on all blocks except the probability-variable block of player $i$.
Because the probability update is defined by the softmax function, its linearization at the fixed point $z^\star$ is
\begin{align}
	\delta \mathbf p^{(i)+}
	=
	\mathbf S_i
	\left(
	\sum_{i'=1}^N \mathbf C_{ii'} \delta z^{(i')}_{U}
	+
	\sum_{i'=1}^N \mathbf D_{ii'} \delta \mathbf p^{(i')}
	\right),
\end{align}
where
\begin{align}
	\mathbf S_i
	=
	\frac{1}{T}
	\Bigl(
	\operatorname{Diag}(\mathbf p^{(i)\star})
	-
	\mathbf p^{(i)\star}\mathbf p^{(i)\star T}
	\Bigr),
\end{align}
and
\begin{align}
	\mathbf C_{ii'}=D_{z^{(i')}_{U}}\boldsymbol \ell^{(i)}(z^\star),
	\qquad
	\mathbf D_{ii'}=D_{\mathbf p^{(i')}}\boldsymbol \ell^{(i)}(z^\star).
\end{align}
Therefore, $D\mathcal T_i^{p}(z^\star)$ is the identity on every block except the row corresponding to the probability variables of player $i$. On that row, it is given by
\begin{align}
	\bigl[
	\mathbf S_i\mathbf C_{i1}, 
	\mathbf S_i\mathbf D_{i1}, 
	\dots, 
	\mathbf S_i\mathbf C_{iN}, 
	\mathbf S_i\mathbf D_{iN}
	\bigr].
\end{align}
In the following, we derive the blocks $\mathbf C_{ii'}$ and $\mathbf D_{ii'}$ in detail.
\\
\subsubsection{Construction of $\mathbf C_{ii'}$}
The derivative of the per-action payoff $\ell^{(i)}_{j_i}$ used in the probability update with respect to the action variable $\mathbf U^{(i')}_{j_{i'}}$, evaluated at the fixed point $z^\star$, is a scalar-valued linear functional of the perturbation $\mathbf X^{(i')}$.
\\
For $i\neq i'$,
\begin{align}
	C^{(i,i')}_{j_i,j_{i'}}[\mathbf X^{(i')}]
	&=
	p^{(i')}_{j_{i'}}
	\sum_{\mathbf j_{-(i,i')}}
	\left(\prod_{k\neq i,i'} p^{(k)}_{j_k}\right)
	\operatorname{Tr}
	\!\left(
	\mathbf R_i
	\left[
	\mathbf X^{(i')} \otimes \mathbf I^{(-i')},
	\rho^\star_{j_i,j_{i'},\mathbf j_{-(i,i')}}
	\right]
	\right).
\end{align}
Similarly, for $i'=i$,
\begin{align}
	C^{(i,i)}_{j_i,j_i'}[\mathbf X^{(i)}]
	&=
	\delta_{j_i j_i'}
	\sum_{\mathbf j_{-i}}
	\left(\prod_{k\neq i} p^{(k)}_{j_k}\right)
	\operatorname{Tr}
	\!\left(
	\mathbf R_i
	\left[
	\mathbf X^{(i)} \otimes \mathbf I^{(-i)},
	\rho^\star_{j_i,\mathbf j_{-i}}
	\right]
	\right).
\end{align}
Since $C^{(i,i')}_{j_i,j_{i'}}$ is a scalar-valued linear functional of the perturbation, it is represented with respect to the orthonormal basis $\{\mathbf E^{(i')}_b\}$ by the row vector
\begin{align}
	\bigl[C^{(i,i')}_{j_i,j_{i'}}\bigr]_b
	=
	C^{(i,i')}_{j_i,j_{i'}}[\mathbf E^{(i')}_b].
\end{align}
Stacking these row vectors over all actions $j_i$ and $j_{i'}$ yields the block $\mathbf C_{ii'}$ at the fixed point.
\\
\subsubsection{Construction of $\mathbf D_{ii'}$}
The derivative of the per-action payoff $\ell^{(i)}_{j_i}$ used in the probability update with respect to the probability variable $p^{(i')}_{j_{i'}}$ is
\begin{align}
	D^{(i,i')}_{j_i,j_{i'}}
	:=
	D_{p^{(i')}_{j_{i'}}}\ell^{(i)}_{j_i}(z^\star).
\end{align}
For $i'\neq i$, we have
\begin{align}
	D^{(i,i')}_{j_i,j_{i'}}
	&=
	\sum_{\mathbf j_{-(i,i')}}
	\left(\prod_{k\neq i,i'} p^{(k)}_{j_k}\right)
	\operatorname{Tr}
	\!\left(
	\mathbf R_i \rho^\star_{j_i,j_{i'},\mathbf j_{-(i,i')}}
	\right),
\end{align}
which is a scalar.
Since $\ell^{(i)}_{j_i}$ does not depend on the mixed probabilities of player $i$, we have
$
	D^{(i,i)}_{j_i,j_i'}=0.
$
Stacking these scalars over the indices yields the block $\mathbf D_{ii'}$ at the fixed point.
\\
Note that when $m_i=1$, then the probability simplex of player $i$ is trivial, so the probability update $\mathcal T_i^p$ is absent and
$
D\mathcal T_i(z^\star)=D\mathcal T_i^U(z^\star).
$
In particular, if $m_i=1$ for all players, then the one-sweep differential is constructed only from the action-step differentials, and no probability-related blocks exist.

\section{Additional Information on Experimental Results}
\label{append:exp_info}

In our experiments, we consider an outcome set $\Omega = \{\omega_1, \omega_2, \ldots, \omega_m\}$, where $m$ represents the total number of  outcomes, which remains the same for all players. We use the projective probability operator $\mathbf{P}_{\omega_j} = \ket{\omega_j}\bra{\omega_j}$.
Each player $i$ has a payoff vector $\mathbf{r}_i = (R(\omega_1), R(\omega_2), \ldots, R(\omega_m))$, which assigns a real-valued payoff to every possible outcome. By combining these payoff vectors with the outcome set, we construct the payoff operator for each game, as \begin{align}
    \mathbf{R}_i= \sum_{\omega \in \Omega} R_i(\omega) \mathbf{P}_{\omega}.
\end{align} 

We use the initial density matrix $\rho_0 = \ket{\psi_0}\bra{\psi_0}$, where $\ket{\psi_0}$ is the initial state vector defined separately for each game and include entanglement parameters.
Detailed descriptions of each game used in the experimental results are provided below

\textbf{Game 1:}
In the two player quantum Prisoner’s Dilemma, we follow Ref.~\cite{eisert1999quantum} with $N = 2$, $d_i = 2$.
The initial states, payoff outcomes are
\begin{subequations}
\begin{align}
&\quad \ket{\psi_0} 
= \cos{(\gamma/2)}\ket{00}+i\sin{(\gamma/2)}\ket{11},
\\
(C,C):&\quad  \ket{\omega_1} 
= \cos{(\gamma/2)}\ket{00}+i\sin{(\gamma/2)}\ket{11},
\\
(C,D):&\quad  \ket{\omega_2} 
= \cos{(\gamma/2)}\ket{01}+i\sin{(\gamma/2)}\ket{10},\\
(D,C):&\quad  \ket{\omega_3} 
= i\sin{(\gamma/2)}\ket{01} + \cos{(\gamma/2)}\ket{10}, \\
(D,D):&\quad  \ket{\omega_4} 
= i\sin{(\gamma/2)}\ket{00} + \cos{(\gamma/2)}\ket{11},
\end{align}
\end{subequations}
The payoff vectors corresponding to players 1 and 2 are
\begin{align}
    \mathbf{r}_1 = (3,0,5,1), \quad
    \mathbf{r}_2 = (3,5,0,1).
\end{align}

\textbf{Game 2:}
In this two player quantum game we use
\begin{subequations}
\begin{align}
\ket{\psi_0} = \cos{(\gamma/2)} \ket{00} + i \sin{(\gamma/2)} \ket{12},\\
\ket{\omega_1} = \cos{(\gamma/2)} \ket{00} + i \sin{(\gamma/2)} \ket{12},\\
\ket{\omega_2} = \cos{(\gamma/2)} \ket{01} + i \sin{(\gamma/2)} \ket{11},\\
\ket{\omega_3} = \cos{(\gamma/2)} \ket{02} + i \sin{(\gamma/2)} \ket{10},\\
\ket{\omega_4} =   i \sin{(\gamma/2)} \ket{02} + \cos{(\gamma/2)} \ket{10},\\
\ket{\omega_5} =  i \sin{(\gamma/2)} \ket{01} + \cos{(\gamma/2)} \ket{11},\\
\ket{\omega_6} =  i \sin{(\gamma/2)} \ket{00} + \cos{(\gamma/2)} \ket{12}.
\end{align}
\end{subequations}
The corresponding payoff vectors for the two players are given by
\begin{align}
    \mathbf{r}_1 = (4, 5, 0.5, 1, 1.15, 1.25), \quad
    \mathbf{r}_2 = (4.25, 0.52, 5.2, 1.1, 1.55, 1.9).
\end{align}

\textbf{Game 3:}
\begin{subequations}
\begin{align}
\ket{\psi_0} &= \cos{(\gamma/2)} \ket{00} + i \sin{(\gamma/2)} \ket{22},\\
\ket{\omega_1} &= \cos{(\gamma/2)} \ket{00} + i \sin{(\gamma/2)} \ket{22},\\
\ket{\omega_2} &= \cos{(\gamma/2)} \ket{01} + i \sin{(\gamma/2)} \ket{21},\\
\ket{\omega_3} &= \cos{(\gamma/2)} \ket{02} + i \sin{(\gamma/2)} \ket{20},\\
\ket{\omega_4} &= \cos{(\gamma/2)} \ket{10} + i \sin{(\gamma/2)} \ket{12},\\
\ket{\omega_5} &= e^{i \gamma/2} \ket{11}, \\
\ket{\omega_6} &=  i \sin{(\gamma/2)} \ket{10}+\cos{(\gamma/2)} \ket{12},\\
\ket{\omega_7} &=i \sin{(\gamma/2)} \ket{02} +  \cos{(\gamma/2)} \ket{20},\\
\ket{\omega_8} &=  i \sin{(\gamma/2)}  \ket{01} +\cos{(\gamma/2)}\ket{21},\\
\ket{\omega_9} &=i \sin{(\gamma/2)}  \ket{00} +  \cos{(\gamma/2)}\ket{22},
\end{align}
\end{subequations}
and the corresponding payoff vectors for the two players are given by
\begin{align}
    \mathbf{r}_1 = (4, 5, 0.5, 1, 1.15, 1.25, 2, 11, 4), \quad
    \mathbf{r}_2 = (4.25, 11, 5.2, 6.1, 1.55, 1.9, 3, 2, 4).
\end{align}

\textbf{Game 4: Three-Player Prisoner’s Dilemma:}\\
In the three-player quantum Prisoner’s Dilemma game with $N = 3$ and $d_i = 2$ for all players, the initial states and payoff outcomes are as follows
\begin{subequations}
\begin{align}
\ket{\psi_0} &= \cos{(\gamma/2)} \ket{000} + i \sin{(\gamma/2)} \ket{111},\\
\ket{\omega_1} &= \cos{(\gamma/2)} \ket{000} + i \sin{(\gamma/2)} \ket{111},\\
\ket{\omega_2} &= \cos{(\gamma/2)} \ket{001} + i \sin{(\gamma/2)} \ket{110},\\
\ket{\omega_3} &= \cos{(\gamma/2)} \ket{010} + i \sin{(\gamma/2)} \ket{101},\\
\ket{\omega_4} &= \cos{(\gamma/2)} \ket{011} + i \sin{(\gamma/2)} \ket{100},\\
\ket{\omega_5} &=  i \sin{(\gamma/2)} \ket{011}+\cos{(\gamma/2)} \ket{100},\\
\ket{\omega_6} &=i \sin{(\gamma/2)} \ket{010} +  \cos{(\gamma/2)} \ket{101},\\
\ket{\omega_7} &=  i \sin{(\gamma/2)}  \ket{001} +\cos{(\gamma/2)}\ket{110},\\
\ket{\omega_8} &=i \sin{(\gamma/2)}  \ket{000} +  \cos{(\gamma/2)}\ket{111}.
\end{align}
\end{subequations}
The payoff vectors are
\begin{align}
\mathbf{r}_1 = (3,  2, 2, 0, 5, 4, 4, 1),\quad
\mathbf{r}_2 = (3,  2, 5, 4, 2, 0, 4, 1),\quad
\mathbf{r}_3 = (3,  5, 2, 4, 2, 4, 0, 1).
\end{align}
\end{document}